\newtheorem{thm}{Theorem}[section]
\newtheorem{lem}{Lemma}[section]
\numberwithin{equation}{section}
\numberwithin{thm}{section}
\newtheorem{definition}{Definition}[section]
\begin{document}
\title{Relaxation of a Simple Quantum Random Matrix Model}
\author{P. Vidal
}
\affiliation{Department of Applied Mathematics, H.I.T.- Holon Institute of Technology, Holon 58102, Israel}
\author{G. Mahler}
\affiliation{ Institut f\"ur Theoretische Physik 1 , Universit\"at Stuttgart
Pfaffenwaldring 57, D-70550 Stuttgart, Germany}
\begin{abstract}
We will derive here the relaxation behavior of a simple quantum random matrix model. The aim is to derive the effective equations which rise when a random matrix interaction is taken in the weak coupling limit. 
The physical situation this model represents is that a quantum particle restricted to move on two sites, where every site has $N$ possible energy states. The hopping from one site to another is then modeled by a random matrix. The techniques used here can be 
applied to many variations of the model.
\end{abstract}
\maketitle
\section{INTRODUCTION}
\label{sec:intro}
We will derive from the Schr\"odinger equation  an effective equation which will turn out to be a rate equation. 
The Hamiltonian is taken to have a deterministic part plus a weak random part. 
The statistics of these types of models have been investigated in \cite{BHZ95}, \cite{BZ95} and the dynamics have been numerically investigated in \cite{SGM06}, \cite{MMGM05} and \cite{BSG08} in the context of the emergence of Fourier's law and  statistical relaxation in closed quantum systems.
Random matrices are used in many situation either to model a complex system or mimic quantum chaos.
On the other hand rate equations are widely used in order to model some complicated non-equilibrium situation by a simple set of differential equations.
The essence of the results here is thus the emergency of these simple equations from complex quantum or quantum chaotic systems.
The random matrix here represents somehow the "complexity".
We take a simple random matrix model to illustrate how to treat fully the random interaction but 
this type of analysis can be forwarded to more complicated models with structures of various kinds. We will comment on this latter on.
The model here is that of a quantum particle that can only move between two sites which we denote site $1$ and site $2$.
Each site has $N$ energy levels and for simplicity we take them to be equidistant. These energies are taken to be bounded between $0$ and $1$ and as $N$ increase they grow nearer to each other. The particle can then hop from one energy level of one site to the next with a random amplitude. 
Pictorially it can be represented as in figure  \ref{fig1}.
\begin{figure}[h]
\label{fig1}
\begin{center}
\resizebox{0.5\columnwidth}{!}{  \includegraphics[width=.2\textwidth]{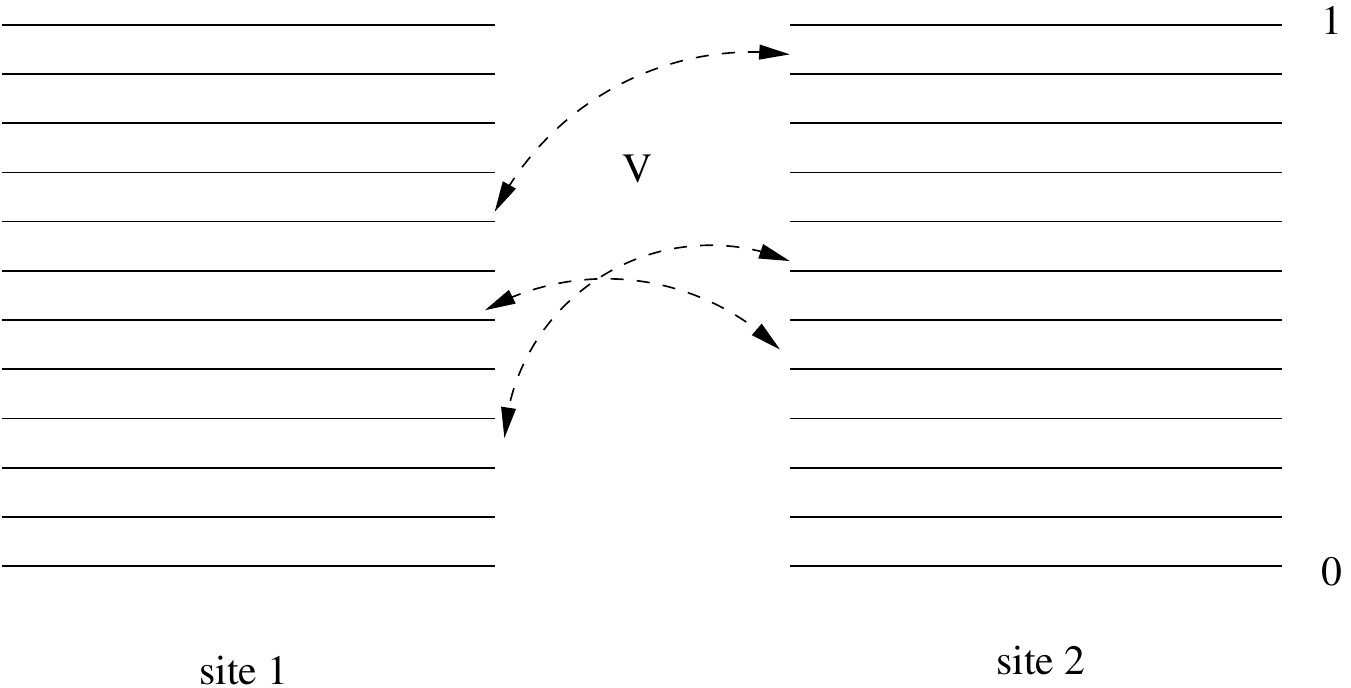}}
\caption{Hopping particle}
\end{center}
\end{figure}
If $P^{1}_{t}$ is the probability to be on site $1$ and $P^{2}_{t}$ that of being on site $2$ we will prove that they satisfy on average 
\begin{align}
 \frac{d}{dt}P^{1}_{t}&=-4\pi\left(P^{1}_{t}-P^{2}_{t}\right) \notag \\
 \frac{d}{dt}P^{2}_{t}&=-4\pi\left(P^{2}_{t}-P^{1}_{t}\right) \notag
\end{align}
in certain limits.
Similar models were introduced and studied in  \cite{Weg78}, \cite{SGM06} and \cite{Leb} where the emergency of diffusion and relaxation behavior was discusses.
The methods we use are those used in \cite{Spo77}, \cite{EY00}, \cite{ES}.
We will expand the formal solution of the Schr\"odinger equation in powers of the  random interaction and then average over the product of random matrices (sections \ref{sec:duhamel}, \ref{sect:AveragingAndGraphs}).
This average will be equal to a sum graph dependent functions and, in the limits considered, we will show that some graphs yield a vanishing contribution  (sections \ref{sec:crossing}, \ref{sec:nested} and \ref{sec:simple}). The remaining graphs can then be summed over again and a solution to a rate equation is found (section \ref{sec:effective}).
Section \ref{sec:error1} is devoted to showing that the error in the limits considers tends to zero.
\subsection{THE MODEL}
\label{sec:model}
The model we analyze here is a two site tight binding model. At every site the particle has $N$ possible states to be in, each one with different energy, $E$. Our Hilbert space is then spanned by the vectors $|x,E\rangle$ where $x$ refers to the site and can take on the values $1$ or $2$. For simplicity we take the spectrum to be equidistant and we also take it to be bounded between $0$ and $1$. Thus the spectrum consists of the points $\{\frac{1}{N}, \frac{2}{N}, \dots \frac{N-1}{N}, 1\}$
Our unperturbed Hamiltonian is the following:
\begin{align}
\hat{H}_{0} & =  \sum_{x=1}^{2} \hat{H}_{0}^{x} \\
\hat{H}_{0}^{x} & =  \sum_{n=1}^{N} E_{n} |x,E_{n}\rangle \langle x,E_{n}| 
\end{align}
with $E_{n}=\frac{n}{N}$.
Our density of states is thus constant.
The perturbation is given by a type of GUE matrix. Each matrix entry is a complex gaussian distributed random variable. We restrict the interaction to be between energy states of different sites.
\begin{align}
\hat{V} & = \sum_{n,m=1}^{N} V_{1}(n,m) |1,E_{n}\rangle \langle 2,E_{m}| + V_{2} (n,m) |2,E_{n}\rangle \langle 1,E_{m}| \label{eq:interactionM}
\end{align}
$V$ has thus two off diagonal blocks while the rest is zero.
The distribution over this type of random matrix is then
\begin{align}
 P(V)= & \frac{1}{Z}e^{-\frac{N}{2} \text{Tr}\left[V^{2}\right]} \notag\\
Z= & \left(\frac{2N}{\pi}\right)^{N^{2}} \notag
\end{align}
We have then the following for the average on a matrix element:
\begin{align}
 \langle z \bar{z}\rangle = \frac{1}{N}
\end{align}
Our total Hamiltonian is:
\begin{align}
\hat{H} & =\hat{H}_{0}+\lambda \hat{V} \label{eq:Hamiltonian}
\end{align}
We will be interested in calculating the time evolution of the probability of the particle to be on site $1$ or $2$. These are 
\begin{align}
 \hat{P}^{1}=&\sum_{E_{0}}|1,E_{0}\rangle\langle 1,E_{0}| \label{p1}\\
 \hat{P}^{2}=&\sum_{E_{0}}|2,E_{0}\rangle\langle 2,E_{0}| \label{p2}
\end{align}
The theorem is then as follows:
\begin{thm}
 \label{thm:Main}
Say $|\psi_{t}^{N}\rangle$ is a solution to the Schr\"odinger equation with the Hamiltonian of Eq. (\ref{eq:Hamiltonian}) with initial data $|\psi_{0}^{N}\rangle$.
The initial data is taken such that the population around the edges of the spectrum of $H_{0}$ is zero in a small neighborhood of distance $\epsilon$ of the edges. That is  $\langle x_{0},E |\psi_{0}^{N}\rangle=0$ if $E\leq\epsilon$ or $1-\epsilon\leq E$.
Then in the limit $N\rightarrow \infty$ and $t\rightarrow \infty$ (taken in this order), and with the following scaling
\begin{align}
 \lambda^2t=&T <\infty \label{sca1}
\end{align}
the average over the random matrix of the time evolution of the probabilities (\ref{p1}) and (\ref{p2})
will follow the next differential equations:
\begin{align}
  \frac{d}{d T}P^{1}_{T}& =-4\pi \left(P^{1}_{T}-P^{2}_{T}\right) \\
  \frac{d}{d T}P^{2}_{T}& =-4\pi \left(P^{2}_{T}-P^{1}_{T}\right) 
 \end{align}
The initial data is given by 
\begin{align}
P^{1}_{0}=&\lim_{N\rightarrow \infty} \langle \psi_{0}^{N}|\hat{P}^{1}|\psi_{0}^{N}\rangle \notag \\
P^{2}_{0}=&\lim_{N\rightarrow \infty} \langle \psi_{0}^{N}|\hat{P}^{2}|\psi_{0}^{N}\rangle \notag
\end{align}
Eq. (\ref{sca1}) is called the Van Hove limit.
\end{thm}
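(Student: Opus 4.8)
The plan is to follow the Duhamel--expansion and diagrammatic strategy of \cite{Spo77,EY00,ES}. First I would write $|\psi^{N}_{t}\rangle=e^{-i\hat H t}|\psi^{N}_{0}\rangle$ and expand the propagator in its (norm-convergent, since $N$ is finite at this stage) Dyson series in the perturbation $\lambda\hat V$,
\begin{align}
e^{-i\hat H t}=\sum_{n\ge 0}(-i\lambda)^{n}\int_{\Delta_{n}(t)} ds_{1}\cdots ds_{n}\; e^{-i\hat H_{0}s_{0}}\,\hat V\,e^{-i\hat H_{0}s_{1}}\,\hat V\cdots \hat V\,e^{-i\hat H_{0}s_{n}}, \notag
\end{align}
where $\Delta_{n}(t)$ is the simplex $s_{0}+\cdots+s_{n}=t$, $s_{j}\ge 0$. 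Inserting this and its adjoint into $P^{x}_{t}=\langle\psi^{N}_{t}|\hat P^{x}|\psi^{N}_{t}\rangle$ gives a double series, indexed by the number $n$ of vertices on the ket line and $\bar n$ on the bra line; each term is a multiple time integral over two simplices of a product of $n+\bar n$ factors of $\hat V$ interleaved with free evolutions and the projector $\hat P^{x}$. This is the content of Section \ref{sec:duhamel}.

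Second, I would average over $\hat V$. Since the entries are mean-zero complex Gaussians with $\langle V(n,m)\,\overline{V(n',m')}\rangle=\frac1N\delta_{nn'}\delta_{mm'}$ and $\hat V$ has the off-diagonal block structure of Eq. (\ref{eq:interactionM}), Wick's theorem turns the average of the product of $n+\bar n$ matrices into a sum over pairings, and only terms with $n+\bar n$ even survive. Each pairing is encoded by a graph on the two lines with contraction arcs; its value is a product of free propagators times one energy-index sum per internal vertex (each sum over $N$ equidistant levels, so $\frac1N\sum_{E}\to\int_{0}^{1}dE$ as $N\to\infty$) times the two simplex time integrals. This is Section \ref{sect:AveragingAndGraphs}. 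Third, I would classify the graphs: the claim, established in Sections \ref{sec:crossing}, \ref{sec:nested}, \ref{sec:simple}, is that after $N\to\infty$ and in the Van Hove scaling $\lambda^{2}t=T$ fixed, crossing graphs are suppressed by powers of $1/N$ (planarity), nested/self-energy graphs either vanish or only renormalize the free propagator without affecting the diagonal quantities $P^{x}$, and only the ``simple'' ladder graphs --- ket vertex $j$ paired with bra vertex $j$, all arcs parallel --- survive. Each rung, after the energy integral and the evaluation of the nested pair of time integrals, carries a Fermi-golden-rule factor; resumming the ladders reproduces a time-ordered exponential, and the second-order (golden-rule) term, computed with $\overline{|V|^{2}}=1/N$ and density of states $N$, fixes the rate constant to be $4\pi$. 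Since $\hat P^{1}+\hat P^{2}=\hat I$, the resummed series is exactly the solution of
\begin{align}
\frac{d}{dT}\begin{pmatrix}P^{1}_{T}\\ P^{2}_{T}\end{pmatrix}=-4\pi\begin{pmatrix}1&-1\\-1&1\end{pmatrix}\begin{pmatrix}P^{1}_{T}\\ P^{2}_{T}\end{pmatrix}, \notag
\end{align}
with initial data the $N\to\infty$ limits of $\langle\psi^{N}_{0}|\hat P^{x}|\psi^{N}_{0}\rangle$ (Section \ref{sec:effective}).

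Finally, Section \ref{sec:error1} must make this rigorous: one truncates the Dyson series at order $n+\bar n\le K$, bounds the contribution of all graphs of higher order uniformly in $N$, and lets $K\to\infty$ last. The main obstacle lies precisely here --- the number of pairings grows factorially in $n+\bar n$, while each surviving term only carries a power of $\lambda^{2}t=T$, so a naive term-by-term bound diverges. One must exploit the $1/n!$ gained from the volume of the time simplex (which beats the number of ladders), together with the extra $N$-decay of crossing and nested graphs, to obtain a remainder that is summable in $K$ and vanishes in the limit. A secondary but essential technical point is the role of the $\epsilon$-collar assumption on $|\psi^{N}_{0}\rangle$: it removes the spectral-edge region where the replacement $\frac1N\sum_{E}\to\int_{0}^{1}dE$ and the stationary-phase estimate of the golden-rule time integral would fail, and it is exactly what legitimizes taking $N\to\infty$ before $t\to\infty$.
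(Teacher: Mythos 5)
Your proposal follows essentially the same route as the paper: Duhamel expansion, Wick pairing organized into crossing/nested/simple graphs, suppression of crossing graphs in powers of $1/N$ and of nested graphs in the Van Hove limit, resummation of the surviving graphs to produce the rate $4\pi$, and an error analysis that truncates at an order $M(t)$ growing with $t$ and exploits the $1/n!$ from the time simplex. The one slight imprecision is that the paper's surviving ``simple'' graphs are not bare ladders but ladders decorated by next-neighbour contractions on each line, and it is exactly these decorations (the $\Theta$-insertions) that supply the $e^{-2\pi T}$ damping factor which, combined with the $(2\pi T)^{\bar n}/\bar n!$ rung factors, yields the solution of the rate equation.
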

\section{EXPANSION AND IDENTITIES} 
\label{sec:duhamel}
According to the Duhamel formula, \cite{EY00}, we have the next identity for the evolution operator.
\begin{align}
e^{-iHt}& =e^{-iH_{0}t}-i\lambda \int_{0}^{t}e^{-iH(t-s)}Ve^{-iH_{0}s}ds \label{eq:duhamel}
\end{align} 
By applying successively this identity we can expand the evolution operator in orders of $\lambda$.
Thus we can write it as follows:
\begin{align}
e^{-iHt}=&\sum_{n=0}^{M}(-i\lambda)^n\Gamma_{n}(t) +(-i\lambda)^{M+1}\tilde{\Gamma}_{M+1}(t)
\label{eq:gammaprop} \\
|\psi_{t}\rangle=&\sum_{n=0}^{M}|\psi^{n}_{t}\rangle +|\phi^{M+1}_{t}\rangle
\label{eq:psiprop}
\end{align} 
with
\begin{align}
\Gamma_{n}(t) & = \int_{0}^{t}\int_{0}^{t-s1}\dots \int_{0}^{t-\sum_{j=1}^{n-1}s_{j}}ds_{1} \dots ds_{n}
e^{-iH_{0}(t-\sum_{j=1}^{n}s_{j})}Ve^{-iH_{0}s_{n}}V  \dots  e^{-iH_{0}s_{1}} \notag \\
& = \int_{0}^{t}\int_{0}^{t}\dots \int_{0}^{t}ds_{0} \dots ds_{n}
e^{-iH_{0}s_{0}}Ve^{-iH_{0}s_{1}} \dots e^{-iH_{0}s_{n}}\delta(t-\sum_{j=0}^{n}s_{j}) \label{eq:gamma}\\
\tilde{\Gamma}_{M+1}(t) & =\int_{0}^{t}\int_{0}^{t}\dots \int_{0}^{t}ds_{0} \dots ds_{M+1}
e^{-iHs_{0}}Ve^{-iH_{0}s_{1}} \dots e^{-iH_{0}s_{M+1}}\delta(t-\sum_{j=0}^{M+1}s_{j}) \notag\\
&= \int_{0}^{t} ds e^{-i(t-s)H}V\Gamma_{M}(s)\label{eq:gammatilde}
\end{align}
$|\phi^{M+1}_{t}\rangle$ is the  error term of the time evolved wave function.
We adopt the following notation for multiple time integrals: 
\begin{align}
 \int_{0}^{t}\dots \int_{0}^{t} ds_{0}\dots ds_{n} \delta(t-\sum_{i=0}^{n}s_{i})= \int [ds_{n}] \notag
\end{align}
We are interested in calculating the time evolution of the observables $\hat{P}^{1}$ and $\hat{P}^{2}$ given by Eqs. (\ref{p1}) and (\ref{p2}).
Using the expansion of Eq. (\ref{eq:gammaprop}) until the $M^{\text{th}}$ order, the time evolution of the observables $P^{x_{0},N}_{t}$ is
\begin{align}
 P^{x_{0},N}_{t}=& P^{x_{0},M,N}_{t}+R^{x_{0},M,N}_{t} \label{eq:PplusR}
\end{align}
with
\begin{align}
P^{x_{0},M,N}_{t} & =\sum_{E_{0}}\sum_{n,m=0}^{M} (i\lambda)^{m}(-i\lambda)^{n}
\langle \psi_{0}| \Gamma_{m}^{\dagger}(t) |x_{0},E_{0}\rangle \langle x_{0},E_{0}|\Gamma_{n}(t)|\psi_{0}\rangle \notag \\
& = \sum_{n,m=0}^{M} \lambda^{m+n} i^{m}(-i)^{n}
\sum_{E_{0},E_{n},E'_{m}=1}^{N} 
 \psi_{0}^{*}(x'_{m},E'_{m}) \psi_{0}(x_{n},E_{n}) 
 \langle x'_{m},E'_{m}| \Gamma_{m}^{\dagger}(t) |x_{0},E_{0}\rangle \langle x_{0},E_{0}|\Gamma_{n}(t)|x_{n},E_{n} \rangle
\label{eq:wignerlfullexpansion}
\end{align}
 $R^{x_{0},M,N}_{t}$ encodes the remainder of the evolution. It is our error term in the evolution of the probability $P^{x_{0},N}_{t}$ and has the following form:
\begin{align}
 R^{x_{0},M,N}_{t}= \sum_{n=0}^{M}\langle \psi^{n}_{t}|\hat{P}^{x_{0}}|\phi^{M+1}_{t}\rangle
+\sum_{n=0}^{M}\langle \phi^{M+1}_{t}|\hat{P}^{x_{0}}|\psi^{m}_{t}\rangle
+\langle \phi^{M+1}_{t}|\hat{P}^{x_{0}}|\phi^{M+1}_{t}\rangle \label{eq:Rest}
\end{align}
We will compute $P^{x_{0},M,N}_{t}$ in the limit $N\rightarrow \infty$ and the Van Hove limit,
$t\xrightarrow{\lambda^2t=T<\infty}\infty$.
In section \ref{sec:error1} we will show that remainder goes to $0$. 
\begin{align}
\lim_{M\rightarrow \infty}\lim_{t\rightarrow \infty}\lim_{N\rightarrow \infty} \left|R^{x_{0},M,N}_{t}\right|=0
\end{align}
This implies that we can use $P^{x_{0},M,N}_{t}$ to obtain the evolution of $P^{x_{0}}_{t}$. That is, 
\begin{align}
\lim_{t\rightarrow \infty}\lim_{N\rightarrow \infty} P^{x_{0},N}_{t}=& \lim_{M\rightarrow \infty}\lim_{t\rightarrow \infty}\lim_{N\rightarrow \infty} P^{x_{0},M,N}_{t} \notag
\end{align}
When inserting Eq. (\ref{eq:gamma}) in $\langle x_{0},E_{0}|\Gamma_{n}(t)|x_{n},E_{n}\rangle$  and identities after each interaction term we obtain the following:
\begin{align}
(-i)^{n}\langle x_{0},E_{0}|\Gamma_{n}(t)|x_{n},E_{n}\rangle & = 
\prod_{i=1}^{n-1} \sum_{E_{i},x_{i}} 
K^{n}(t,\{E_{i}\})L^{n}(\{x_{i},E_{i}\})
\end{align}
with
\begin{align} 
K^{n}(t,\{E_{i}\}) & = (-i)^{n} \int [ds_{n}] e^{-iE_{0}s_{0}} e^{-iE_{1}s_{1}} \dots e^{-iE_{n}s_{n}}
\label{firstK}\\
L^{n}(\{x_{i},E_{i}\}) & =
 \langle x_{0},E_{0} |V|x_{1},E_{1}\rangle
 \langle x_{1},E_{1} |V|x_{2},E_{2}\rangle 
 \dots \langle x_{n-1},E_{n-1} |V|x_{n},E_{n}\rangle
\end{align}
We denote by $\{x_{i},E_{i}\}$ the set of all energy variables, $\{E_{0}\dots E_{n}\}$, and position variables, $\{x_{0},\dots x_{n}\}$.
We have then for $P^{x_{0},M,N}_{t}$:
\begin{align}
P^{x_{0},M,N}_{t}
 = & \sum_{n,m=0}^{M}\lambda^{n+m} 
\sum_{\{E_{i},E'_{j}\}} \sum_{\{x_{i},x'_{j}\}_{0}}\psi_{0}^{*}(x'_{m},E'_{m}) \psi_{0}(x_{n},E_{n})
 K^{n}(t,\{E_{i}\})\bar{K}^{m}(t,\{E'_{i}\})  L^{n}(\{x_{i},E_{i}\}) \bar{L}^{m}(\{x'_{i},E'_{i}\})
\label{eq:explicitP} 
\end{align}
where we have taken up  the following notation :
\begin{align}
 \prod_{i=0}^{n}\sum_{E_{i}}\prod_{j=1}^{m}\sum_{E'_{j}} &=\sum_{\{E_{i},E'_{j}\}} \label{sum1}\\
 \prod_{i=1}^{n}\sum_{x_{i}}\prod_{j=1}^{m}\sum_{x'_{j}}&=\sum_{\{x_{i},x'_{j}\}_{0}} \label{sum2}
\end{align}
The subscript $0$ in Eq. (\ref{sum2}) denotes the fact that we are not summing over $x_{0}$. From Eq. (\ref{eq:wignerlfullexpansion}) we see we have $E_{0}'=E_{0}$ and $x_{0}=x'_{0}$.
The $L^{n}(\{x_{i},E_{i}\})$ function  is the statistical weight given to this history or process by the random interaction. It carries no time dependency.
Since we want to calculate the average and  the randomness is all encoded in the $L^{n}\bar{L}^{m}$ factor we will calculate $\mathbb{E}[\bar{L}^{m}(\{x'_{i},E'_{i}\})L^{n}(\{x_{i},E_{i}\})]$. The purpose of the next section is to characterize this average.
\section{AVERAGING AND GRAPHS}
\label{sect:AveragingAndGraphs}
The main purpose in this section is to introduce graphs as representations of contributions to the average we want to calculate such that averaging will turn out to be a sum over different graphs. We will introduce three classes of graphs just as in \cite{EY00}. First we differentiate between crossing and non-crossing graphs. Non-crossing graphs are also called planar graphs, \cite{TH2,BIP+78}. They carry individually more weight then crossing graphs and thus crossing graphs will vanish in the limit $N\rightarrow \infty$.
We recall here Wick's theorem.
\begin{thm}
\label{thm:wicks}
Say we have $2k$ random Gaussian variables denotes by $X_{i}$, $1\leq i \leq 2k$, and say we have $Y=X_{1}X_{2} \dots X_{2k}$. Denote by  $\pi(2k)$ a list of pairs of all the elements of the set s, $s=(1,2 \dots,2k)$. We  have then
\begin{align}
\mathbb{E}[Y]=\sum_{\pi(2k)} \prod_{(i,j)\in \pi(2k)} \mathbb{E}[X_{i}X_{j}]
\label{eq:wignerpairs}
\end{align}
where $(i,j)$ refers to a pair of the list $\pi(2k)$. $\pi(2k)$ thus defines a graph on the set $\tilde{s}=\{X_{1},\dots X_{2k}\}$. 
\end{thm}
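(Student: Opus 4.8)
Wick's theorem for complex Gaussians is a standard fact, so the ``proof'' is really an exercise in bookkeeping together with one genuine analytic input (the combinatorial identity for moments of a single complex Gaussian, or its real-and-imaginary-part decomposition). The plan is to reduce the statement to the real Gaussian case and then prove the real case by the standard argument via the moment generating function.

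First I would reduce to real variables. Each complex Gaussian $z$ appearing in $V$ can be written $z = a + i b$ with $a,b$ independent real centered Gaussians, and every product $X_1 \cdots X_{2k}$ of our variables (entries of $V$ or their conjugates) expands into a sum of products of the $a$'s and $b$'s. Since the map $\pi \mapsto \prod \mathbb{E}[X_i X_j]$ is multilinear in the same way, it suffices to establish \eqref{eq:wignerpairs} when the $X_i$ are real, jointly Gaussian, centered. (One should note here that in our specific application the only nonzero pair covariances are $\mathbb{E}[z \bar z] = 1/N$ for matching entries, so in fact most terms in the Wick sum vanish; but the theorem as stated is the general one and is cleanest proved in general.)

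For the real centered jointly Gaussian case, I would use the characteristic/moment generating function. Let $t = (t_1,\dots,t_{2k})$ and set $\varphi(t) = \mathbb{E}\bigl[\exp(\sum_j t_j X_j)\bigr] = \exp\bigl(\tfrac12 \sum_{i,j} t_i t_j \Sigma_{ij}\bigr)$ where $\Sigma_{ij} = \mathbb{E}[X_i X_j]$. Then $\mathbb{E}[X_1 \cdots X_{2k}] = \partial_{t_1}\cdots\partial_{t_{2k}} \varphi(t)\big|_{t=0}$. Differentiating the exponential of the quadratic form and evaluating at $0$: each derivative either brings down a factor $\sum_i t_i \Sigma_{i\,\cdot}$ from the exponent or hits such a factor produced earlier, converting it to a constant $\Sigma$; at $t=0$ only the terms survive in which the $2k$ derivatives are paired up so that every ``brought down'' linear factor is subsequently killed by exactly one later derivative. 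The surviving terms are exactly indexed by perfect matchings $\pi$ of $\{1,\dots,2k\}$, each contributing $\prod_{(i,j)\in\pi}\Sigma_{ij}$, which is \eqref{eq:wignerpairs}. (Equivalently one can prove it by induction on $k$ using integration by parts / Stein's identity $\mathbb{E}[X_1 f(X_2,\dots)] = \sum_{j\ge2}\Sigma_{1j}\mathbb{E}[\partial_j f]$, which directly produces the recursion defining the matching sum.)

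The only mildly delicate point — and the one I would treat carefully rather than the formal manipulations — is the reduction step for the \emph{complex} entries, making sure that ``a product of the complex entries and their conjugates'' really does expand so that the pair expectations $\mathbb{E}[X_iX_j]$ that appear are the genuine (not conjugated-where-convenient) second moments, i.e.\ that $\mathbb{E}[zz]=\mathbb{E}[\bar z\bar z]=0$ while $\mathbb{E}[z\bar z]\neq0$ is respected automatically by the real-variable computation. This is where a careless bookkeeping of which factor is $z$ and which is $\bar z$ could introduce spurious nonvanishing contractions; once the $a,b$ decomposition is written out explicitly it is transparent. No other step presents an obstacle: everything else is linear algebra and one Gaussian integral.
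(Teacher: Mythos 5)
Your proof is correct, but note that the paper itself offers \emph{no} proof of this statement: Theorem \ref{thm:wicks} is simply recalled as the standard Wick/Isserlis theorem (``We recall here Wick's theorem'') and used as a black box, so there is nothing in the text to compare your argument against step by step. Your route --- reduce the complex case to real jointly Gaussian centered variables via $z=a+ib$ and multilinearity of both sides of \eqref{eq:wignerpairs}, then prove the real case by differentiating the moment generating function $\exp\bigl(\tfrac12\sum_{i,j}t_it_j\Sigma_{ij}\bigr)$ $2k$ times at $t=0$ and identifying the surviving terms with perfect matchings --- is the standard proof and is sound; the alternative induction via Stein's identity you mention works equally well. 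Your emphasis on the complex bookkeeping is also the right instinct for this paper: the fact that $\mathbb{E}[zz]=\mathbb{E}[\bar z\bar z]=0$ while $\mathbb{E}[z\bar z]=1/N$ is precisely what makes the pair expectations in Eqs. \eqref{eq:OnePair} and \eqref{eq:OnePairp} come out as they do, and the paper relies on this without comment. One small caveat: the theorem as stated says only ``Gaussian variables''; for the matching formula to hold you need them \emph{jointly} Gaussian and centered, which you correctly assume and which is satisfied by the GUE-type entries in the model, but which the statement should really make explicit.
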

We set $X_{E_{i},E_{j}}(x_{i},x_{j})=\langle x_{i},E_{i}|V|x_{j},E_{j}\rangle$ which allows us to write the product of random variables \\ $\bar{L}^{m}(\{x'_{j},E'_{j}\})L^{n}(\{x_{i},E_{i}\})$ as
\begin{align}
\bar{L}^{m}(\{x'_{i},E'_{i}\})L^{n}(\{x_{i},E_{j}\})
& = X_{E'_{m},E'_{m-1}}(x'_{m},x'_{m-1})\dots X_{E'_{1},E_{0}}(x'_{1},x_{0}) X_{E_{0},E_{1}}(x_{0},x_{1})\dots X_{E_{n-1},E_{n}}(x_{n-1},x_{n}) \label{eq:primednonprimedvar} \\
& = X_{1} \dots X_{m} X_{m+1}\dots X_{n+m} 
\end{align}
We now apply theorem \ref{thm:wicks} to $\mathbb{E}[(\bar{L}^{m}(\{x'_{j},E'_{j}\}))L^{n}(\{x_{i},E_{i}\})]$.
\begin{align}
\mathbb{E}\left[(\bar{L}^{m}(\{x'_{j},E'_{j}\}))L^{n}(\{x_{i},E_{i}\})\right] & =
\sum_{\pi(n,m)}\prod_{(l,p)\in \pi(n,m)} \mathbb{E}[X_{l}X_{p}] \notag \\
& =\sum_{\pi(n,m)} C_{\pi(n,m)}(\{E_{i},E'_{j}\},\{x_{i},x'_{j}\})
\label{eq:two}
\end{align}
with 
\begin{align}
C_{\pi(n,m)}(\{E_{i},E'_{j}\},\{x_{i},x'_{j}\}) & = \prod_{(l,p)\in \pi(n,m)} \mathbb{E}[X_{l}X_{p}] \label{eq:GraphFunction}
\end{align}
and
\begin{align}
\mathbb{E}[X_{E'_{i+1},E'_{i}}(x'_{i+1},x'_{i})X_{E_{j},E_{j+1}}(x_{j},x_{j+1})] & =\frac{1}{N}  \delta_{E'_{i+1},E_{j+1}} \delta_{E'_{i},E_{j}} \delta_{x'_{i+1},x_{j+1}} \delta_{x'_{i},x_{j}}  \label{eq:OnePair} \\
\mathbb{E}[X_{E_{i},E_{i+1}}(x_{i},x_{i+1})X_{E_{j},E_{j+1}}(x_{j},x_{j+1})] & =\frac{1}{N}  \delta_{E_{i},E_{j+1}} \delta_{E_{i+1},E_{j}} \delta_{x_{i},x_{j+1}} \delta_{x_{i+1},x_{j}}  \label{eq:OnePairp}
\end{align} 
$\pi(n,m)$ is then a list of pairs of the set $\bar{s}$, or a graph, and $(l,p)$ is a pair of the list.
We say the order of a graph on $\bar{s}$ is the length of the set $\bar{s}$.  The order of $\pi(n,m)$ is then $n+m$.
We call $C_{\pi(n,m)}(\{E_{i},E'_{j}\},\{x_{i},x'_{j}\})$ the graph function.
Notice that by Eqs. (\ref{eq:OnePair}) and (\ref{eq:OnePairp})  the graph function can be split into a graph function depending on $\{E_{i},E'_{j}\}$ times a graph function depending on $\{x_{i},x'_{j}\}$.
\begin{align}
 C_{\pi(n,m)}(\{E_{i},E'_{j}\},\{x_{i},x'_{j}\})= C^{1}_{\pi(n,m)}(\{E_{i},E'_{j}\})C^{2}_{\pi(n,m)}(\{x_{i},x'_{j}\}) \label{eq:2Cs}
\end{align}
where $C^{1}_{\pi(n,m)}(\{E_{i},E'_{j}\})$ is a product of the $\delta$ functions in $E_{i}$ and $E'_{j}$ divided by $N^{\frac{n+m}{2}}$ and 
$C^{2}_{\pi(n,m)}(\{x_{i},x'_{j}\})$ is a product of the $\delta$ functions in $x_{i}$ and $x'_{j}$.
We make the following definitions to classify the possible graphs: 
\begin{definition}
Say we have 
a graph, $\pi(n,m)$, on $\bar{s}$ where 
\begin{align}
 \bar{s}=\{X_{1},\dots , X_{n+m}\}
\end{align}
\begin{itemize}
 \item[1]The average of a pair $X_{i}X_{j}$ is called a inner contraction if $i,j\leq m$ or when $i,j> m$. \\
It will be called an outer contraction if $i\leq m$ and $j>m$ or when  $i> m$ and $j\leq m$. 
 \item[2]The average of a pair $X_{i}X_{j}$ is called a next neighboring contraction (nn-contraction) if $j=i+1$.
 \item[3]If we have a contraction, between $X_{i}$ and $X_{j}$, and a contraction between $X_{k}$ and $X_{l}$ and $i<k<j<l$ then we call this a crossing.
 \item[4]If we have a contraction, between $X_{i}$ and $X_{j}$, and a contraction between $X_{k}$ and $X_{l}$ and, $i<k<l<j\leq m$ or $m<i<k<l<j$ then we call this a nest.
\end{itemize}
\label{def:contractions}
\end{definition}
From definition \ref{def:contractions} we see that in Eq. (\ref{eq:GraphFunction}) we have an inner contraction whenever both random variables are dependent on primed variables or when both depend on non primed variables. 
With these definitions we make three classes of graphs as in \cite{EY00}.
\begin{definition}
\label{def:graphs}
Say we have 
a graph $\pi(n,m)$ on $\bar{s}$.
\begin{itemize}
 \item[C. G.] We say the graph is a crossing  graph (c-graph) if it possesses at least one crossing and we call a graph a non-crossing graph (nc-graph) if it possesses no crossings.
The set of all c-graphs of order $(n,m)$ is denoted by $\mathcal{G}_{2}(n,m)$ and the set of all c- graphs is denoted by $\mathcal{G}_{2}$. An example of a crossing graph is the previous Fig. \ref{figure2}.
 \item[N.G.] We say the graph is a nested graph (n-graph) if it is a nc-graph and possesses at least one nest and call a graph a non-nested graph (nn-graph) if it possesses none.
The set of all n-graphs of order $(n,m)$ is denoted by $\mathcal{G}_{1}(n,m)$ and the set of all n-graphs is denoted by $\mathcal{G}_{1}$. An example of a nested graph is shown in \ref{figure3} 
 \item[S. G.] We say the graph is a simple graph (s-graph) if it is a nc and nn-graph. 
The set of all s-graphs of order $(n,m)$ is denoted by $\mathcal{G}_{0}(n,m)$ and the set of all s-graphs is denoted by $\mathcal{G}_{0}$. 
An example of a simple graph is shown in Fig. (\ref{figure4}).
\end{itemize}
\end{definition}
\begin{figure}[h]
\begin{center}
\includegraphics[width=15cm]{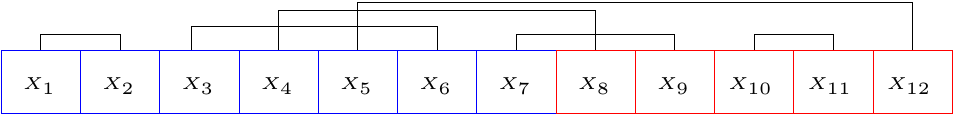}
\end{center}
\caption{Example of a crossing graph}
\label{figure2}       
\end{figure}
\begin{figure}[h]
\begin{center}
\includegraphics[width=15cm]{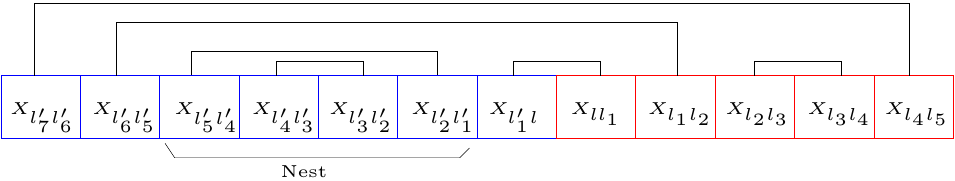}
\end{center}
\caption{Example of a nested graph}
\label{figure3}       
\end{figure}
\begin{figure}[h]
\begin{center}
\includegraphics[width=15cm]{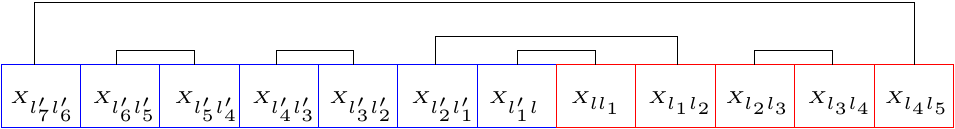}
\end{center}
\caption{Example of a simple graph}
\label{figure4}       
\end{figure}

Notice that simple graphs are build from next neighboring contractions and outer contractions since no crossing nor nests are allowed. The total number of graphs of order $n+m$ is $\frac{\left(n+m\right)!}{\frac{n+m}{2}!2^{n+m}}$ while the number of non crossing graphs is equal to the $\left(\frac{n+m}{2}\right)^{\text{th}}$ catalan number. This one is bounded by $C^{n+m}$, where $C$ is a constant.
From the definitions of simple, nested and crossing graphs we note that these classes are mutually exclusive and cover the set of all possible graphs. We have then the following identity:
\begin{align}
\sum_{\pi(n,m)}=\sum_{\pi(n,m)\in \mathcal{G}_{0}}+\sum_{\pi(n,m)\in \mathcal{G}_{1}}+\sum_{\pi(n,m)\in \mathcal{G}_{2}}
\end{align}
Thus from Eq. (\ref{eq:two})
\begin{align}
\mathbb{E}[(\bar{L}^{m}(\{x'_{j},E'_{j}\}))L^{n}(\{x_{i},E_{i}\})]
& =\sum_{a=0}^{2}\sum_{\pi(n,m)\in \mathcal{G}_{a} }C_{\pi(n,m)}(\{E_{i},E'_{j}\},\{x_{i},x'_{j}\})
\label{eq:twoprime}
\end{align}
From Eq. (\ref{eq:GraphFunction}),  (\ref{eq:OnePair}) and (\ref{eq:OnePairp}) we see that a graph function of order $n+m$  is a product of Kronecker delta functions divided by $N^{\frac{n+m}{2}}$. Thus not all the energy variables and position variables are independent. When summing up over these variables the more of these that are independent the larger the sum will become. 
This motivates  the following definitions:
\begin{definition}
\label{def:ABK}
For a graph function $C_{\pi(n,m)}\left(\{E_{i},E'_{j}\},\{x_{i},x'_{j}\}\right)$ we define: \\
$\mathcal{A}_{\pi(n,m)}=\{
\text{The set of independent variables of the set $\{E_{i},E'_{j}\}$ given by the graph function $C_{\pi(n,m)}$}
\} $\\
$\mathcal{B}_{\pi(n,m)}=\{
\text{The set of dependent variables of the set $\{E_{i},E'_{j}\}$  given by the graph function  $C_{\pi(n,m)}$}
\} $\\
$\kappa_{\pi(n,m)}=$\{Number of independent variables we have of the set $\{E_{i},E'_{j}\}$ given by the graph function $C_{\pi(n,m)}$\}
\end{definition}
With these definitions we have then
\begin{align}
 \sum_{\{E_{j},E'_{i}\}}=\sum_{\mathcal{A}_{\pi}}\sum_{\mathcal{B}_{\pi}}
\end{align}
From the definitions \ref{def:ABK} and Eq. (\ref{eq:2Cs}) we have 
\begin{align}
 \sum_{\{E_{j},E'_{j}\}\in\mathcal{B}_{\pi(n,m)}} C^{1}_{\pi(n,m)}\left(\{E_{j},E'_{j}\}\right) =\frac{1}{N^{\frac{n+m}{2}}} \label{eq:sumdependentvar}
\end{align}
With these definitions we can estimate certain sums of graphs functions.
We first prove the following property of the graph function:
\begin{thm}
\label{thm:endsmeet}
For any graph function $C_{\pi(n,m)}(\{E_{i},E'_{i}\},\{x_{i},x'_{i}\})$ we have 
\begin{align}
 C_{\pi(n,m)}(\{E_{i},E'_{j}\},\{x_{i},x'_{j}\})\propto \delta_{E'_{m},E_{n}}\delta_{x'_{m},x_{n}}
\end{align}
\end{thm}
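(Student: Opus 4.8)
The plan is to trace how the Kronecker deltas in the graph function propagate around the ``ends'' of the index chain. Recall from Eq.~(\ref{eq:primednonprimedvar}) that the random variables $X_1,\dots,X_{n+m}$ are arranged so that the non-primed chain runs $X_{E_0,E_1}(x_0,x_1),\dots,X_{E_{n-1},E_n}(x_{n-1},x_n)$ and the primed chain runs $X_{E'_m,E'_{m-1}}(x'_m,x'_{m-1}),\dots,X_{E'_1,E_0}(x'_1,x_0)$, so that the two chains already share the endpoint labels $E_0=E'_0$ and $x_0=x'_0$ (as noted after Eq.~(\ref{sum2})). The claim is that the other pair of endpoints, $E_n$ and $E'_m$ (and $x_n$ and $x'_m$), are also forced equal by \emph{every} pairing $\pi(n,m)$.

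The key step is an induction on the number of contractions, i.e. on $k=\tfrac{n+m}{2}$. First I would set up the right bookkeeping device: think of each variable $X_i$ as an oriented edge carrying an energy label at its head and at its tail (and likewise a position label), where consecutive $X_i$'s in the concatenated list share a label exactly at the three junctions $E_0$, and the two free ends $E_n$, $E'_m$ are the only labels not yet identified with anything. Equations (\ref{eq:OnePair}) and (\ref{eq:OnePairp}) say that a contraction of $X_i$ with $X_j$ identifies the head-label of one with the head-label (or tail with tail, depending on relative orientation in the primed/non-primed split) of the other — in all cases it glues the two edges into a single longer oriented segment with its endpoint labels identified in the natural way. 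So after imposing all $k$ contractions, the $2k$ oriented edges are glued into a collection of oriented paths and cycles whose vertices are labelled consistently. Now I would argue that the gluing always leaves exactly one path (never a pure union of cycles), because the three ``boundary'' labels $E_0$, $E_n$, $E'_m$ are each the endpoint of only one edge and hence can never be interior to a glued segment; but a single path has exactly two endpoints, and $E_0$ already coincides with itself at the junction of $X_m$ (tail $E_0$) and $X_{m+1}$ (head $E_0$), so $E_0$ sits in the interior. Hence the unique path has its two remaining free endpoints at $E_n$ and $E'_m$ — contradiction unless the path is actually a single edge's worth long...

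The cleaner way to phrase the induction, which I would actually write out: remove one nn-like innermost contraction at a time. Any non-crossing graph has an innermost contracted pair $X_i X_{i+1}$ (adjacent in the list); by (\ref{eq:OnePair})/(\ref{eq:OnePairp}) contracting it forces the label on the far side of $X_i$ to equal the label on the far side of $X_{i+1}$, which is precisely the statement that deleting both $X_i$ and $X_{i+1}$ from the list and splicing the two neighbors together yields a graph function of order $(n-2,m)$ or $(n,m-2)$ or $(n-1,m-1)$ of exactly the same boundary type, times a factor $\tfrac1N\,\delta\delta\delta\delta$ on the removed labels. Iterating down to the base case — order $(0,0)$, where the statement $\delta_{E'_0,E_0}\delta_{x'_0,x_0}$ is exactly the identification we already have — gives the result for non-crossing graphs, and a parallel but slightly more careful version (pick any pair that is adjacent \emph{after} merging already-contracted blocks) covers crossing graphs as well, since crossings do not obstruct the merging, they only permute which labels get glued. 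The main obstacle I anticipate is handling crossing graphs cleanly: one must check that even when the contraction $X_iX_j$ has $j>i+1$, the intervening variables together with their own (possibly crossing) contractions still form a sub-chain whose two ends carry labels forced equal to each other, so that ``splicing out'' is still legitimate. I would handle this by a nested induction on the order, peeling off the outermost pair first and treating the block it encloses as an inductive instance with its own two boundary labels pinned together.
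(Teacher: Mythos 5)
Your write-up contains the right intuition but neither of your two arguments actually closes, and the second has a gap that matters for the theorem as stated. The path-gluing version ends in a non sequitur: establishing that $E_{n}$ and $E'_{m}$ are the two free endpoints of a single glued path does not identify them with each other (the two endpoints of a path are in general distinct labels), which is why your own text trails off at ``contradiction unless\dots''. To make that picture work you would have to follow the \emph{value-preserving} matching itself: each contraction equates the left label of one $X$ with the right label of the other, so starting from the slot holding $E_{0}$ (the unique left label that is not also a right label) and alternately applying the contraction-matching and the ``same slot, next edge'' identification, you traverse a chain of equal values that can only terminate at $E_{n+m}$, the unique right label that is not also a left label. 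The splicing induction, as you state it, is fine for non-crossing graphs but genuinely fails for crossing ones: for the pairing $\{(1,3),(2,4)\}$ no contracted pair is adjacent, nothing has yet been merged, and the block enclosed by either pair is not closed under the pairing, so it is not ``an inductive instance with its own two boundary labels pinned together''. Since Theorem \ref{thm:endsmeet} is invoked for \emph{all} graphs (in particular in the proof of Theorem \ref{thm:kappa}, where the trace identity is applied to the full sum including crossing graphs), the crossing case cannot be dropped.

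For comparison, the paper's proof is a one-line global counting argument that treats crossing and non-crossing graphs uniformly: after relabelling the variables as $E_{0},\dots,E_{n+m}$, each contraction pairs a left label with an equal right label, the multiset of left labels is $\{E_{0},\dots,E_{n+m-1}\}$ and the multiset of right labels is $\{E_{1},\dots,E_{n+m}\}$, so summing all the imposed equalities gives $\sum_{l=0}^{n+m-1}E_{l}=\sum_{l=1}^{n+m}E_{l}$ and hence $E_{0}=E_{n+m}$ (and likewise for the $x$ variables). If you want to keep your local/inductive flavour, the corrected chain-following argument above is the version that survives crossings; otherwise the summation trick is the shortest route.
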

\begin{proof}{Theorem \ref{thm:endsmeet}} \\
First we relabel the $n+m+1$ energy variables $\{E_{i},E'_{j}\}$ to $\{E_{i}\}$ and the position variables $\{x_{i},x'_{j}\}$ to $\{x_{i}\}$. Thus in the set $\{E_{i}\}$ and $\{x_{i}\}$ the index $i$ runs from $0$ to $n+m$.
Then Eqs. (\ref{eq:OnePair}) become
\begin{align}
\mathbb{E}[X_{E_{i},E_{i+1}}(x_{i},x_{i+1})X_{E_{j},E_{j+1}}(x_{j},x_{j+1})] & =\frac{1}{N}  \delta_{E_{i},E_{j+1}} \delta_{E_{i+1},E_{j}} \delta_{x_{i},x_{j+1}} \delta_{x_{i+1},x_{j}}  \label{eq:OnePair2}
\end{align}
where $i$ and $j$ can take on the values $0,\dots n+m-1$.
Thus the from the graph function we have that for each $i$ there is a unique $j$ such that
\begin{align} 
E_{i}=E_{j+1}\\
E_{i+1}=E_{j}
\end{align}
Note that $j,i\neq n+m$. 
Since for each $i$ there is a unique $j$ when summing over $i$ form $0$ to $n+m-1$ and summing also the corresponding and unique $j$ 
we obtain 
\begin{align} 
\sum_{i=0}^{n+m-1}E_{i}+\sum_{j=0}^{n+m-1}E_{j}&=\sum_{i=0}^{n+m-1}E_{i+1}
\sum_{j=0}^{n+m-1}E_{j+1}
\end{align}
and so 
\begin{align}
 E_{0}=E_{n+m}
\end{align}
Following the same reasoning for the $x_{i}$ variables we have 
\begin{align}
 x_{0}=x_{n+m}
\end{align}
This implies then that for the set $\{E_{i},E'_{j}\}$ we have $E_{n}=E'_{m}$ and for the set $\{x_{i},x'_{j}\}$ we have $x_{n}=x'_{m}$.
\end{proof}
From this we have
\begin{align}
\mathbb{E}[(\bar{L}^{m}(\{x'_{j},E'_{j}\}))L^{n}(\{x_{i},E_{i}\})] & =
\sum_{\pi(n,m)} C_{\pi(n,m)}(\{E_{i},E'_{j}\},\{x_{i},x'_{j}\}) \notag \\
& \propto \delta_{E'_{m},E_{n}}\delta_{x'_{m},x_{n}} \label{eq:proptodelta}
\end{align}
We now turn to proving the following essential theorem.
\begin{thm}
\label{thm:kappa}
If $\pi(n,m)$ is a non crossing graph then $\kappa_{\pi(n,m)}=\frac{n+m}{2}+1$.
If $\pi(n,m)$ is a  crossing graph then $\kappa_{\pi(n,m)}\leq\frac{n+m}{2}-1$.
\end{thm}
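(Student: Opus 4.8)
\textbf{Proof proposal for Theorem \ref{thm:kappa}.}
The plan is to recognise $\kappa_{\pi(n,m)}$ as (a shifted, negated) Euler characteristic of a glued surface. First I would relabel, exactly as in the proof of Theorem \ref{thm:endsmeet}, the $n+m+1$ energies of $\{E_i,E'_j\}$ as $F_0,F_1,\dots,F_{n+m}$, read off the sequence $X_1X_2\cdots X_{n+m}$ of Eq. (\ref{eq:primednonprimedvar}) so that the factor $X_k$ carries the pair $(F_{k-1},F_k)$; by Theorem \ref{thm:endsmeet} one may also identify $F_0\equiv F_{n+m}$. Picture the $X_k$ as the $n+m$ oriented sides of a polygon (the boundary of a disk), with $F_j$ sitting at the corner between $X_j$ and $X_{j+1}$ (indices mod $n+m$). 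A Wick pairing $\pi(n,m)$ groups the sides into $\frac{n+m}{2}$ pairs, and by Eq. (\ref{eq:OnePair2}) a pair $\{X_a,X_b\}$ imposes precisely $F_{a-1}\equiv F_b$ and $F_a\equiv F_{b-1}$, i.e. it glues side $X_a$ to side $X_b$ matching tail with head and head with tail. Gluing all the pairs turns the disk into a closed, connected, \emph{orientable} surface $\Sigma_\pi$ built from one $2$-cell, $\frac{n+m}{2}$ edges, and a number of vertices equal to the number of equivalence classes into which these relations cut $\{F_0,\dots,F_{n+m}\}$, which by Definition \ref{def:ABK} is exactly $\kappa_{\pi(n,m)}$.

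Writing $g_\pi\ge 0$ for the genus of $\Sigma_\pi$, Euler's formula $V-E+F=2-2g_\pi$ then gives
\[
\kappa_{\pi(n,m)}=\frac{n+m}{2}+1-2g_\pi .
\]
In particular $\kappa_{\pi(n,m)}\le\frac{n+m}{2}+1$ always, with equality iff $g_\pi=0$, and $\kappa_{\pi(n,m)}\equiv\frac{n+m}{2}+1\pmod 2$ (equivalently, $\kappa_{\pi(n,m)}$ is the number of cycles of the permutation $\pi\circ(\text{cyclic shift})$ on $n+m$ letters, whose signature is fixed). Hence, once it is established that $g_\pi=0$ is equivalent to $\pi(n,m)$ being non-crossing in the sense of Definition \ref{def:graphs}, both assertions follow at once: a non-crossing graph has $g_\pi=0$ and $\kappa_{\pi(n,m)}=\frac{n+m}{2}+1$, while a crossing graph has $g_\pi\ge 1$ and therefore $\kappa_{\pi(n,m)}=\frac{n+m}{2}+1-2g_\pi\le\frac{n+m}{2}-1$.

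To pin down the equivalence $g_\pi=0\iff$ non-crossing I would argue inductively using \emph{removal of an adjacent (next-neighbouring) pair} $\{X_j,X_{j+1}\}$, after recording two facts. First: the corner $F_j$ lying between the two glued sides receives no relation beyond the tautology $F_j\equiv F_j$, so it is a free variable; consequently, deleting $X_j,X_{j+1}$ and contracting to an $(n+m-2)$-gon whose corner identifications are exactly those of the surviving pairs \emph{decreases $\kappa$ by exactly one} and leaves the genus unchanged. Second: by Definition \ref{def:contractions}, an adjacent pair can never be one of the two chords of a crossing $p<q<r<s$, since both those chords have non-adjacent endpoints, so removing an adjacent pair from a crossing graph leaves a crossing graph (the cyclic order of the remaining chords, hence all crossings among them, is preserved). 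For the non-crossing direction: a non-crossing pairing of cyclically ordered sides always contains an adjacent pair (take the chord of minimal span), so peeling adjacent pairs off reduces to the $2$-gon, which is planar; thus $g_\pi=0$, and iterating the first fact gives $\kappa_{\pi(n,m)}=\frac{n+m}{2}+1$. For the crossing direction: peeling adjacent pairs off — legitimate by the second fact — reduces to an irreducible crossing pairing, one with no adjacent pair and $\frac{n+m}{2}\ge 2$; for such a pairing one then needs $g_\pi\ge1$, after which the first fact and the Euler formula give $\kappa_{\pi(n,m)}\le\frac{n+m}{2}-1$ for the original graph.

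The main obstacle is exactly the step just flagged: showing that a crossing cannot be undone by the rest of the pairing, i.e. that an irreducible crossing pairing has genus at least one. I would secure this either by quoting the classical theory of one-face maps / polygon gluings (where genus-$0$ gluings are in bijection with non-crossing pairings, matching the Catalan count already mentioned in the text), or by a hands-on argument: isolate two interlocking identified sides and exhibit the associated handle, or run the ``crossing-resolution'' move that trades a crossing for a nest, changes $\kappa$ by exactly one, strictly decreases the number of crossings, and — together with the parity constraint $\kappa_{\pi(n,m)}\equiv\frac{n+m}{2}+1\pmod 2$ — forces the crossing value of $\kappa$ to differ from the non-crossing value $\frac{n+m}{2}+1$ by at least two. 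Everything else (the polygon reformulation, Euler's formula, the adjacent-pair induction and the bookkeeping of which corner is freed) is routine.
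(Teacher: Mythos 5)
Your proof is correct in substance but follows a genuinely different route from the paper's. The paper argues indirectly: summing each graph function over all of its variables shows that the graph $\pi$ contributes $N^{\kappa_{\pi}-\frac{n+m}{2}}$ to $\mathbb{E}\left[\mathrm{Tr}\left[V^{n+m}\right]\right]$ (Eq. (\ref{eq:NewResult})), and the two claimed values of $\kappa_{\pi}$ are then read off by comparing with the planar-diagram asymptotics quoted from \cite{TH2} (Eq. (\ref{eq:OldResult})): the leading term $N\sum_{\pi\in\mathcal{G}_{0,1}}1$ pins down $\kappa_{\pi}=\frac{n+m}{2}+1$ for non-crossing graphs, and positivity of every term plus the $O(N^{-1})$ remainder forces $\kappa_{\pi}\leq\frac{n+m}{2}-1$ for crossing ones. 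You instead compute $\kappa_{\pi}$ intrinsically, as the vertex count of the one-face map obtained by gluing the $(n+m)$-gon along the pairing, and Euler's formula gives the sharper identity $\kappa_{\pi}=\frac{n+m}{2}+1-2g_{\pi}$; in effect you reprove the genus expansion that the paper imports as a black box, and your argument has the advantage of using only the index-contraction rule (\ref{eq:OnePair2}), so it is insensitive to the fact that $V$ is block off-diagonal rather than a full GUE matrix. The one step you leave to the literature --- $g_{\pi}=0$ if and only if $\pi$ is non-crossing --- is classical and no weaker a citation than the paper's own appeal to \cite{TH2}; it can also be closed by combining your peeling induction (non-crossing $\Rightarrow$ genus $0$) with the Catalan count, since the number of genus-zero gluings of a $2k$-gon equals the number of non-crossing pairings, so the implication reverses. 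One caveat: your fallback ``crossing-resolution'' argument is internally inconsistent --- replacing the crossing pair $\{i,j\},\{k,l\}$ by $\{i,l\},\{k,j\}$ composes the involution with the product of two transpositions $(i\,k)(j\,l)$, so $\kappa_{\pi}$ changes by $0$ or $\pm 2$, never by exactly one, exactly as your own parity constraint demands; drop that alternative and rest the key step on the classical bijection or the counting argument.
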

\begin{proof}{Theorem\ref{thm:kappa}}\\
From Eq. (2.12) we have that
\begin{align}
\sum_{\{x_{i},x'_{j}\}}\sum_{\{E_{i},E'_{j}\}} \bar{L}^{m}\left(\{x'_{j},E'_{j}\}\right) L^{n}\left(\{x_{i},E_{i}\}\right)
 \delta_{E_{n},E'_{m}}\delta_{x_{n},x'_{m}}=\text{Tr}\left[V^{n+m}\right]
\end{align}
 Therefore after averaging we get
\begin{align}
\sum_{\{x_{i},x'_{j}\}}\sum_{\{E_{i},E'_{j}\}} \sum_{\pi(n,m)} C_{\pi(n,m)}(\{E_{i},E'_{j}\},\{x_{i},x'_{j}\})
 \delta_{E_{n},E'_{m}}\delta_{x_{n},x'_{m}}=\mathbb{E}\left[\text{Tr}\left[V^{n+m}\right]\right]
\end{align}
From Eq. (\ref{eq:proptodelta}) we have
\begin{align}
\sum_{\{x_{i},x'_{j}\}}\sum_{\{E_{i},E'_{j}\}} \sum_{\pi(n,m)} C_{\pi(n,m)}(\{E_{i},E'_{j}\},\{x_{i},x'_{j}\})
=\mathbb{E}\left[\text{Tr}\left[V^{n+m}\right]\right]
\end{align}
Thus
\begin{align}
& \mathbb{E}\left[\text{Tr}\left[V^{n+m}\right]\right]\notag\\
&=\sum_{\pi(n,m)\in \mathcal{G}_{0,1}} 
\sum_{\{x_{i},x'_{j}\}}\sum_{\{E_{i},E'_{j}\}}  C_{\pi(n,m)}(\{E_{i},E'_{j}\},\{x_{i},x'_{j}\})
+
\sum_{\pi(n,m)\in \mathcal{G}_{2}} 
\sum_{\{x_{i},x'_{j}\}}\sum_{\{E_{i},E'_{j}\}}  C_{\pi(n,m)}(\{E_{i},E'_{j}\},\{x_{i},x'_{j}\})
\end{align}
By the definitions of dependent and independent variables we have 
\begin{align}
\sum_{\{x_{i},x'_{j}\}}\sum_{\{E_{i},E'_{j}\}}  C_{\pi(n,m)}(\{E_{i},E'_{j}\},\{x_{i},x'_{j}\})
& =
\sum_{\mathcal{A}_{\pi(n,m)}}
\sum_{\mathcal{B}_{\pi(n,m)}}  C_{\pi(n,m)}(\{E_{i},E'_{j}\},\{x_{i},x'_{j}\})
\notag\\
 & = \sum_{ \mathcal{A}_{\pi(n,m)}}\frac{1}{N^{\frac{n+m}{2}}}\notag\\
&=\frac{N^{\kappa_{\pi}}}{N^{\frac{n+m}{2}}}
\end{align}
where we have applied Eq. (\ref{eq:sumdependentvar}) in going from the first to the second line.
Thus
\begin{align}
 \mathbb{E}\left[\text{Tr}\left[V^{n+m}\right]\right]&=\sum_{\pi(n,m)\in \mathcal{G}_{0,1}} 
\frac{N^{\kappa_{\pi}}}{N^{\frac{n+m}{2}}}
+\sum_{\pi(n,m)\in \mathcal{G}_{2}} \frac{N^{\kappa_{\pi}}}{N^{\frac{n+m}{2}}}\label{eq:NewResult}
\end{align}
We know by \cite{TH2} that the leading contribution in $N$ of the average of the trace of a product of random matrices comes from non crossing graphs (planar diagrams), $\mathcal{G}_{0}$ and $\mathcal{G}_{1}$. That is:
\begin{align}
 \mathbb{E}\left[\text{Tr}\left[V^{n+m}\right]\right]= N \sum_{\pi(n+m)\in \mathcal{G}_{0,1}} 1+O(N^{-1}) \label{eq:OldResult}
\end{align}
Comparing (\ref{eq:OldResult}) and (\ref{eq:NewResult})
we see we must have $\kappa_{\pi(n,m)}=\frac{n+m}{2}+1$ for non crossing graphs. In order for the contribution of crossing graphs to be of the order of $N^{-1}$ or less we must have for crossing graphs $\kappa_{\pi(n,m)}\leq \frac{n+m}{2}-1$.
\end{proof}
\section{ANALYSIS OF PROPAGATORS}
\label{sec:analisis}
In this section we mainly will write the average of the time evolution of our observable, $\mathbb{E}\left[P^{x_{0},M,N}_{t}\right]$, in a more convenient form for the analysis.
Starting from Eq. (\ref{eq:explicitP}) and inserting the expression for $\mathbb{E}\left[L^{n}\bar{L}^{m}\right]$ of Eq. (\ref{eq:twoprime}) we obtain 
\begin{align}
\mathbb{E}\left[P^{x_{0},M,N}_{t}\right]
 = & \sum_{n,m=0}^{M}\lambda^{n+m}\sum_{\{x_{i},x'_{j}\}_{0}} 
\sum_{\{E_{i},E'_{j}\}} \psi_{0}^{*}(x'_{m},E'_{m}) \psi_{0}(x_{n},E_{n})
 K^{n}(t,\{E_{i}\})\bar{K}^{m}(t,\{E'_{i}\}) \notag\\
& \times 
\sum_{a=0}^{2} \sum_{\pi(n,m)\in \mathcal{G}_{a}} C_{\pi(n,m)}(\{E_{i},E'_{j}\},\{x_{i},x'_{j}\}) \label{eq:begin}
\end{align}
By theorem \ref{thm:endsmeet} we have in Eq (\ref{eq:begin}) that the graph function will impose that $E'_{m}=E_{n}$ and $x'_{m}=x_{n}$.
Thus we can implement this relationship in the product $\psi_{0}^{*}(x'_{m},E'_{m}) \psi_{0}(x_{n},E_{n})$
\begin{align}
\mathbb{E}\left[P^{x_{0},M,N}_{t}\right] = & \sum_{n,m=0}^{M}\lambda^{n+m}
\sum_{\{E_{i},E'_{j}\}} 
 K^{n}(t,\{E_{i}\})\bar{K}^{m}(t,\{E'_{i}\})
\sum_{a=0}^{2} \sum_{C_{\pi}(n,m)\in \mathcal{G}_{a}}
\sum_{\{x_{i},x'_{j}\}_{0}} P^{x_{n}}_{0}(E_{n})C_{\pi(n,m)}(\{E_{i},E'_{j}\},\{x_{i},x'_{j}\})
\label{eq:begin2}
\end{align}
We can split the contributions to $\mathbb{E}\left[P^{x_{0},M,N}_{t}\right]$ in Eq. (\ref{eq:begin})
according to the three different type of graphs. This is the index $a$ in Eq. (\ref{eq:begin}).
\begin{align}
\mathbb{E}\left[P^{x_{0},M,N}_{t}\right] & = P^{x_{0},M,N}_{0,t}+P^{x_{0},M,N}_{1,t}+P^{x_{0},M,N}_{2,t}
\end{align}
with
\begin{align}
P^{x_{0},M,N}_{a,t}
 = & \sum_{n,m=0}^{M}\lambda^{n+m}
\sum_{\{E_{i},E'_{j}\}} 
 K^{n}(t,\{E_{i}\})\bar{K}^{m}(t,\{E'_{i}\}) 
\sum_{\pi(n,m)\in \mathcal{G}_{a}} 
\sum_{\{x_{i},x'_{j}\}_{0}} P^{x_{n}}_{0}(E_{n})C_{\pi(n,m)}(\{E_{i},E'_{j}\},\{x_{i},x'_{j}\}) \label{eq:Paxt}
\end{align}
and $a$ can take up the values $0$, $1$ or $2$.
We introduce a different representation of $P^{x_{0},M,N}_{a,t}$ that will turn out useful later on. We call this the $\alpha$-representation.
Starting from Eq. (\ref{firstK}) we use the following identities
\begin{align}
\delta(t-\sum_{j=0}^{n}s_{j}) & =\int_{-\infty}^{\infty}d\alpha e^{-i\alpha(t-\sum_{j=0}^{n}s_{j})} e^{\eta(t-\sum_{j=0}^{n}s_{j})}\label{deltaform} \\
\int_{0}^{\infty}ds e^{-is(\omega-i\eta)} & = \frac{-i}{\omega-i\eta} 
\end{align}
with $\eta>0$ and  obtain 
\begin{align}
& K^{n}(t,\{E_{i}\})=i\int_{-\infty}^{\infty}d \alpha e^{-i\alpha t} e^{\eta t} 
\frac{-1}{E_{0}-\alpha -i\eta} \frac{-1}{E_{1}-\alpha -i\eta}\dots \frac{-1}{E_{n}-\alpha -i\eta}
\label{eq:propagator2}
\end{align}
The same can be done for $\bar{K}^{m}(t,\{E'_{j}\})$
and so we can rewrite $K^{n}(t,\{E_{j}\})$ and $\bar{K}^{m}(t,\{E'_{j}\})$ as
\begin{align}
 \bar{K}^{m}(t,\{E'_{i}\})&=-i\int_{-\infty}^{\infty}d \beta e^{i\beta t} e^{\eta t} 
\prod_{j=0}^{m}\frac{-1}{E'_{j}-\beta +i\eta}  \label{eq:Kbar}\\ 
K^{n}(t,\{E_{i}\})&=i\int_{-\infty}^{\infty}d \alpha e^{-i\alpha t} e^{\eta t}
\prod_{j=0}^{n}\frac{-1}{E_{j}-\alpha -i\eta}  \label{eq:K}
\end{align}
The product of $\bar{K}^{m}$ and $K^{n}$, for example in Eq. (\ref{eq:Paxt}), has then $n+m+2$ propagators.
Remember that $E_{0}'=E_{0}$.
We set  $\eta=t^{-1}$ so that the exponential term $e^{\eta t}$ is bounded by a constant.
Inserting Eqs. (\ref{eq:Kbar}) and (\ref{eq:K})  in Eq. (\ref{eq:Paxt}) we obtain the following expression for $P^{x_{0},M,N}_{a,t}$:
\begin{align}
P^{x_{0},M,N}_{a,t}
 = & \sum_{n,m=0}^{M}\lambda^{n+m}
\sum_{\{E_{i},E'_{j}\}} 
\int_{-\infty}^{\infty} d\beta d \alpha e^{-i(\alpha-\beta) t}  e^{\eta 2t} 
\prod_{j=0}^{n} \frac{-1}{E_{j}-\alpha -i\eta} 
\prod_{j=0}^{m}\frac{-1}{E'_{j}-\beta +i\eta}
\notag \\
\times & \sum_{\pi(n,m)\in \mathcal{G}_{a}} 
\sum_{\{x_{i},x'_{j}\}_{0}} P^{x_{n}}_{0}(E_{n})C_{\pi(n,m)}(\{E_{i},E'_{j}\},\{x_{i},x'_{j}\}) \label{eq:Pat}
\end{align}
From the definitions of dependent and independent variables of section \ref{sect:AveragingAndGraphs} we have 
\begin{align}
P^{x_{0},M,N}_{a,t}
 = & \sum_{n,m=0}^{M}\lambda^{n+m}\sum_{\pi(n,m)\in \mathcal{G}_{a}}
\sum_{ \mathcal{A}_{\pi(n,m)}}\sum_{ \mathcal{B}_{\pi(n,m)}} 
\int_{-\infty}^{\infty} d\beta d \alpha e^{-i(\alpha-\beta) t}  e^{\eta 2t} 
\prod_{j=0}^{n} \frac{-1}{E_{j}-\alpha -i\eta} 
\prod_{j=0}^{m}\frac{-1}{E'_{j}-\beta +i\eta}
\notag \\
\times &  
\sum_{\{x_{i},x'_{j}\}_{0}} P^{x_{n}}_{0}(E_{n})C_{\pi(n,m)}(\{E_{i},E'_{j}\},\{x_{i},x'_{j}\})
 \label{eq:Pat2}
\end{align}
The sum over $\{x_{i},x'_{j}\}_{0}$ is a sum over all elements excluding $x_{0}$. Because of the form of the interaction, Eq. (\ref{eq:interactionM}), we have that if $x_{j}=1(2)$ then $x_{j+1}=2(1)$. That is 
\begin{align}
\langle x_{j},E_{j}|V|x_{j+1},E_{j+1}\rangle \propto 1-\delta_{x_{j+1},x_{j}} 
\end{align}
Thus we have $x_{0}=x_{2},x_{4} \dots$ and $x_{0}\neq x_{1},x_{3} \dots$.
The same holds for $x'_{i}$, $x_{0}=x'_{2},x'_{4},\dots$ and $x'_{i}$, $x_{0}\neq x'_{1},x'_{3},\dots$. Therefore if $n$  is even then $x_{n}=x_{0}$ and if $n$ is odd then $x_{n}\neq x_{0}$. 
We thus define
\begin{align}
 P_{0}(x_{0},E_{n})=\left\{ \begin{array}{ll}
P^{\bar{x}_{0}}_{0}(E_{n}) & \text{if $n$ odd}\\
 P^{x_{0}}_{0}(E_{n}) & \text{if $n$ even}
\end{array} \right.\label{eq:Px0}
\end{align}
where $\bar{x}_{0}=1$ if $x_{0}=2$ and vice versa.
We have then
\begin{displaymath}
\sum_{\{x_{i},x'_{j}\}_{0}} P^{x_{n}}_{0}(E_{n})C_{\pi(n,m)}(\{E_{i},E'_{j}\},\{x_{i},x'_{j}\})
 = \left\{ \begin{array}{ll}
 P^{x_{0}}_{0}(E_{n})\sum_{\{x_{i},x'_{j}\}_{0}}C_{\pi(n,m)}(\{E_{i},E'_{j}\},\{x_{i},x'_{j}\})
 & \text{if $n$ even}\\
P^{\bar{x}_{0}}_{0}(E_{n})\sum_{\{x_{i},x'_{j}\}_{0}}C_{\pi(n,m)}(\{E_{i},E'_{j}\},\{x_{i},x'_{j}\})
 & \text{if $n$ odd}
\end{array} \right.
\end{displaymath}
with the definition in Eq. (\ref{eq:Px0}) we get
\begin{align}
\sum_{\{x_{i},x'_{j}\}_{0}} P^{x_{n}}_{0}(E_{n})C_{\pi(n,m)}(\{E_{i},E'_{j}\},\{x_{i},x'_{j}\})= P_{0}(x_{0},E_{n})\sum_{\{x_{i},x'_{j}\}_{0}} C_{\pi(n,m)}(\{E_{i},E'_{j}\},\{x_{i},x'_{j}\})
\end{align}
As discussed in section \ref{sect:AveragingAndGraphs} the graph function, $C_{\pi(n,m)}(\{E_{i},E'_{j}\},\{x_{i},x'_{j}\})$, is  a product of Kronecker delta functions divided by $N^{\frac{n+m}{2}}$. Thus only a part of the variables $\{E_{i},E'_{j}\}$ are independent. 
This means that when having a sum of the type
\begin{align}
\sum_{ \mathcal{A}_{\pi(n,m)}}\sum_{ \mathcal{B}_{\pi(n,m)}} 
\sum_{\{x_{i},x'_{j}\}_{0}} P_{0}(x_{0},E_{n})C_{\pi(n,m)}(\{E_{i},E'_{j}\},\{x_{i},x'_{j}\}) \prod_{i}f(E_{i})\prod_{j}g(E'_{j}) \label{eq:exampleC}
\end{align}
which we have in Eq. (\ref{eq:Pat2}), each independent variable $E_{l}$ of the set  $\{E_{i},E'_{j}\}$
will appear a certain amount of times in $f$ and $g$, which we denote by $k_{l}$ and $p_{l}$.
That is if we relabel the independent energy variables by $\omega_{j}$, Eq. (\ref{eq:exampleC}) has the following form:
\begin{align}
(\ref{eq:exampleC})
=\prod_{j=1}^{\kappa_{\pi(n,m)}}\sum_{\omega_{j}} \left( \frac{P_{0}(x_{0},\omega_{\kappa_{\pi}})}{N^{\frac{n+m}{2}}}\prod_{j=1}^{\kappa_{\pi}}f^{k_{j}}(\omega_{j})g^{p_{j}}(\omega_{j})\right)
\end{align}
where $k_{j}$ and $p_{j}$ depends on the graph and $\kappa_{\pi(n,m)}$ is equal to the number of independent variables.
We labelled the independent variable related to $E_{n}$ (the one referring to the initial data ) by $\omega_{\kappa_{\pi}}$.
In Eq. (\ref{eq:Pat2}) the functions $f$ and $g$ are  the propagators $\frac{-1}{E_{j}-\alpha -i\eta} $ and $\frac{-1}{E'_{j}-\beta +i\eta}$. We call $k_{j}$ and $p_{j}$  the left and right multiplicity of the independent variable $\omega_{j}$, and $k_{j}+p_{j}$ the multiplicity of $\omega_{j}$.
We obtain  then from Eq. (\ref{eq:Pat2})
\begin{align}
P^{x_{0},M,N}_{a,t}
 = & \sum_{n,m=0}^{M}\lambda^{n+m}\sum_{\pi(n,m)\in \mathcal{G}_{a}}  
\int_{-\infty}^{\infty} d\beta d \alpha e^{-i(\alpha-\beta) t}  e^{\eta 2t} \notag\\
\times & \prod_{j=1}^{\kappa_{\pi(n,m)}}\sum_{\omega_{j} \in \mathcal{A}_{\pi}}\left(\frac{1}{N^{\frac{n+m}{2}}} P_{0}(x_{0},\omega_{\kappa_{\pi}}) 
\prod_{j=1}^{\kappa_{\pi(n,m)}}\left( \left(\frac{-1}{\omega_{j}-\alpha -i\eta}\right)^{k_{j}} 
\left(\frac{-1}{\omega_{j}-\beta +i\eta}\right)^{p_{j}} \right)\right)\label{eq:PNaxt}
\end{align}
This is then called the $\alpha$-representation.
The information about $p_{j}$ and $k_{j}$ lies in the graph $\pi(n,m)$ but since the amount of propagators from a term of order $n+m$ of the expansion is $n+m+2$, as can be seen in  Eq. (\ref{eq:Pat2}), we must have
\begin{align}
\sum_{j=1}^{\kappa_{\pi(n,m)}}\left(k_{j}+p_{j}\right)=n+m+2 \label{eq:propagatorcount}
\end{align}
We define $Q^{N}_{\pi(n,m)}(t,\lambda,x_{0})$ such that 
\begin{align}
P^{x_{0},M,N}_{a,t}
 = & \sum_{n,m=0}^{M}\sum_{\pi(n,m)\in \mathcal{G}_{a}}  
Q^{N}_{\pi(n,m)}(t,\lambda,x_{0}) \label{eq:qN}
\end{align}
$Q^{N}_{\pi(n,m)}$ is the contribution of the graph $\pi(n,m)$ to the probability to be at $x_{0}$.
Notice that if we sum over $x_{0}$ and over $a$ in Eq. (\ref{eq:Pat2}) we obtain the squared norm of $\sum_{n=0}^{M} |\psi_{t}^{n}\rangle$. Thus $\sum_{x_{0}}Q^{N}_{\pi(n,m)}(t,\lambda,x_{0})$ is the contribution of the graph $\pi(n,m)$ to the norm of the wave vector.
\section{CROSSING GRAPHS}
\label{sec:crossing}
We now prove the following lemma for crossing graphs 
\begin{lem}
\label{lem:CrossContribution}
The contribution of crossing graphs to the time evolution of the observable $\hat{P}^{x_{0}}$ tends to zero in the limit $N\rightarrow \infty$. That is 
\begin{align}
\lim_{N\rightarrow \infty} P^{x_{0},M,N}_{a=2,t} & =0 
\end{align}
\end{lem}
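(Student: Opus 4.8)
The plan is to bound $|P^{x_0,M,N}_{a=2,t}|$ by summing the absolute values of the graph contributions $Q^N_{\pi(n,m)}(t,\lambda,x_0)$ over all crossing graphs, and to show each such contribution carries an extra factor of $N^{-1}$ relative to a convergent reference sum. The starting point is the $\alpha$-representation, Eq.~(\ref{eq:PNaxt}). By Theorem~\ref{thm:kappa}, a crossing graph of order $(n,m)$ has $\kappa_{\pi(n,m)}\le \frac{n+m}{2}-1$, whereas the prefactor from the graph function is $N^{-\frac{n+m}{2}}$; so the product $N^{-\frac{n+m}{2}}\prod_{j=1}^{\kappa_\pi}\sum_{\omega_j}(\cdots)$ already gains a factor $N^{\kappa_\pi-\frac{n+m}{2}}\le N^{-1}$ once the $\omega_j$-sums are estimated, \emph{provided} each single-variable sum $\sum_{\omega}\big(\frac{-1}{\omega-\alpha-i\eta}\big)^{k}\big(\frac{-1}{\omega-\beta+i\eta}\big)^{p}$ contributes at most a harmless $O(1)$ or $O(\log N)$ factor (after the $d\alpha\,d\beta$ integration) per independent variable. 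So the first step is to establish the basic one-variable estimate: since the $\omega$ run over the equidistant grid $\{1/N,\dots,1\}$ with spacing $1/N$, we have $\frac1N\sum_\omega \frac{1}{|\omega-\alpha-i\eta|}\lesssim \log(1/\eta)=\log t$, and $\frac1N\sum_\omega \frac{1}{|\omega-\alpha-i\eta|^k}\lesssim \eta^{-(k-1)}$ for $k\ge 2$; similarly for the $\beta$-propagators. These are Riemann-sum comparisons to the corresponding integrals.

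\textbf{Second step: assembling the per-graph bound.} For a fixed crossing graph one distributes the $N^{-\frac{n+m}{2}}$ and the $\kappa_\pi$ sums so that $\kappa_\pi$ of the factors each come with a $1/N$, leaving an excess power $N^{\frac{n+m}{2}-\kappa_\pi}\ge N^{1}$ in the denominator. Each $\frac1N\sum_{\omega_j}$ then produces a factor controlled by the one-variable estimates above, and the $P_0(x_0,\omega_{\kappa_\pi})$ factor is bounded by $\|\psi_0\|_\infty^2$ (or handled via $\sum_\omega P_0 \le 1$ on the variable carrying the initial data). The total multiplicity constraint Eq.~(\ref{eq:propagatorcount}), $\sum_j(k_j+p_j)=n+m+2$, ensures only finitely many $\omega_j$ can have multiplicity $\ge 2$, so the accumulated $\eta$-negative powers are bounded by $\eta^{-(n+m)}$ at worst; combined with the $\lambda^{n+m}$ and $\eta=t^{-1}$, and the Van Hove scaling $\lambda^2 t = T$, these powers of $t$ are absorbed into powers of $T$, uniformly in $N$. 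The net result is $|Q^N_{\pi(n,m)}|\le C(T)^{n+m}\,N^{-1}$ times a combinatorial-in-$(n,m)$ constant.

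\textbf{Third step: summing over graphs and orders.} The number of non-crossing graphs of order $n+m$ is a Catalan number, bounded by $C^{n+m}$, and the total number of graphs is likewise exponentially bounded; so $\sum_{\pi(n,m)\in\mathcal G_2}1 \le C^{n+m}$. Then
\begin{align}
\big|P^{x_0,M,N}_{a=2,t}\big| \le \frac{1}{N}\sum_{n,m=0}^{M} C^{n+m}\, C(T)^{n+m} \notag
\end{align}
which, for $T$ small enough (and $M$ fixed, or after showing the series converges), is $\le \mathrm{const}(T)/N \to 0$ as $N\to\infty$. Since the claim only asserts the $N\to\infty$ limit with $M$ and $t$ (hence $T$) held fixed beforehand, the sum over $n,m$ is finite and no convergence-in-$M$ issue arises here.

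\textbf{The main obstacle} is the bookkeeping in the second step: one must argue that for \emph{every} crossing graph the $N^{-\frac{n+m}{2}}$ weight can be paired with the $\kappa_\pi$ independent-variable sums in such a way that no single $\omega_j$-sum is asked to absorb more than its share — i.e., that the "leftover" power $N^{\frac{n+m}{2}-\kappa_\pi}$ genuinely survives after all the Riemann-sum estimates, rather than being eaten by a variable of high multiplicity whose sum scales like a large negative power of $\eta$. This requires checking that the worst case over the distribution of multiplicities $\{k_j,p_j\}$ consistent with Eq.~(\ref{eq:propagatorcount}) still leaves the $N^{-1}$ gain intact after the $\alpha,\beta$ integrations; the equidistant-spectrum structure and the elementary bounds on $\sum_\omega |\omega-\alpha\mp i\eta|^{-k}$ are exactly what make this go through, but one has to carry the $\eta$-powers and $T$-powers carefully and confirm they do not conspire with any hidden $N$-dependence. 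I expect this to be a somewhat technical but ultimately routine estimate, closely parallel to the crossing-graph bound in \cite{EY00}.
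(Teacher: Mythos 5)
Your core idea is the paper's: the only input that matters is Theorem~\ref{thm:kappa}, i.e.\ $\kappa_{\pi(n,m)}\le\frac{n+m}{2}-1$ for crossing graphs, so that after summing the $\kappa_\pi$ independent energy variables (one of which is weighted by the initial data rather than by a free factor of $N$) the prefactor $N^{-\frac{n+m}{2}}$ leaves a surplus $N^{\kappa_\pi-1-\frac{n+m}{2}}\le N^{-2}$. Where you diverge is in how you control the time-dependent factors: you insist on the $\alpha$-representation, Eq.~(\ref{eq:PNaxt}), and therefore have to estimate $\frac1N\sum_\omega$ of high-multiplicity propagators and then the $d\alpha\,d\beta$ integrals over all of $\mathbb{R}$. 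The paper never goes there for crossing graphs. It stays in the time representation, Eq.~(\ref{eq:Paxt}), and uses the trivial simplex-volume bound $|K^{n}(t,\{E_i\})|\le t^{n}/n!$ from Eq.~(\ref{firstK}); since $N\to\infty$ is taken first with $t$, $\lambda$ and $M$ fixed, the factor $\lambda^{n+m}t^{n+m}/(n!m!)$ times the (exponentially bounded) number of crossing graphs is just a constant, and the whole contribution is $O(N^{-2})$. This collapses your entire ``second step'' to one line.

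Two concrete problems with your version as written. First, the ``main obstacle'' you flag is a real gap, not a formality: for a crossing graph the assignment of propagators to independent variables is exactly the messy part, and you need at least one $\alpha$-propagator and one $\beta$-propagator left at multiplicity one (with distinct variables) to make the $d\alpha\,d\beta$ integrals converge, which you have not arranged; the paper's proof shows you never need to. Second, your claim that the accumulated $\eta$-powers ``are absorbed into powers of $T$, uniformly in $N$'' is false for crossing graphs: by Eq.~(\ref{eq:propagatorcount}) the $\eta$-power deficit is $\eta^{-(n+m+2-\kappa_\pi)}$, and with $\kappa_\pi\le\frac{n+m}{2}-1$ this exceeds what $\lambda^{n+m}$ can compensate by at least $t^{3}$, so your per-graph bound cannot be of the form $C(T)^{n+m}N^{-1}$. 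This does not sink the lemma --- $t$ is held fixed while $N\to\infty$, so any fixed power of $t$ is harmless --- but it means the uniformity-in-$t$ you seem to be reaching for is not available, and the extra work of the $\alpha$-representation buys you nothing here.
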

\begin{proof}{Lemma \ref{lem:CrossContribution}} \\
By inspecting Eq. (\ref{eq:PNaxt}) we see that we have a factor of $N^{-\frac{n+m}{2}}$ and a sum over $\kappa_{\pi}$ energy variables, where one sum is weighted by $P_{0}(x_{0},\omega_{\kappa})$. If $\kappa_{\pi}= \frac{n+m}{2}+1$ then each sum is weighted by a $N^{-1}$ factor except one that is weighted by $P_{0}(x_{0},\omega_{\kappa})$, the initial probability distribution. All the sums would be finite. But if $\kappa_{\pi}< \frac{n+m}{2}+1$ then a factor of $N^{-1}$ could be extracted and so this term would vanish. This is the case for crossing graphs.
From Eq. (\ref{eq:Paxt}) we have
\begin{align}
\left|P^{x_{0},M,N}_{2,t}\right|
 \leq & \sum_{n,m=0}^{M}\lambda^{n+m}\sum_{\pi(n,m)\in \mathcal{G}_{a}} 
\sum_{\{E_{i},E'_{j} \}} 
\left| K^{n}(t,\{E_{i}\})\bar{K}^{m}(t,\{E'_{i}\}) \right|
 \sum_{\{x_{i},x'_{j} \}_{0}} P^{x_{n}}_{0}(E_{n})C_{\pi(n,m)}(\{E_{i},E'_{j}\},\{x_{i},x'_{j}\}) 
\end{align}
From Eq. (\ref{firstK}) we have that
\begin{align}
\left|  K^{n}(t,\{E_{i}\})\right| \leq \frac{t^{n}}{n!} \\
\left|  \bar{K}^{m}(t,\{E'_{i}\})\right| \leq \frac{t^{m}}{m!}
\end{align}
Thus 
\begin{align}
\lim_{N\rightarrow \infty}\left|P^{x_{0},M,N}_{2,t}\right|
 \leq & \lim_{N\rightarrow \infty} \sum_{n,m=0}^{M}\lambda^{n+m} \frac{t^{n+m}}{n!m!}
\sum_{\pi(n,m)\in \mathcal{G}_{2}}\sum_{\{E_{j},E'_{i}\}^{n,m}} \sum_{\{x_{i},x'_{j} \}^{n,m}_{0}} P^{x_{n}}_{0}(E_{n})C_{\pi(n,m)}(\{E_{i},E'_{j}\},\{x_{i},x'_{j}\})\notag \\
 \leq & \lim_{N\rightarrow \infty} \sum_{n,m=0}^{M}\lambda^{n+m} \frac{t^{n+m}}{n!m!}
\sum_{\pi(n,m)\in \mathcal{G}_{2}}\sum_{\{E_{j},E'_{i},x_{i},x'_{j}\}\in \mathcal{A}_{\pi(n,m) }} P^{x_{n}}_{0}(E_{n})\frac{1}{N^{\frac{n+m}{2}}}\notag \\
 \leq & \lim_{N\rightarrow \infty} \sum_{n,m=0}^{M}\lambda^{n+m} \frac{t^{n+m}}{n!m!}
\sum_{\pi(n,m)\in \mathcal{G}_{2}} \frac{N^{\kappa_{\pi}-1}}{N^{\frac{n+m}{2}}}\notag \\
 \leq & \lim_{N\rightarrow \infty} \frac{1}{N^{2}}\sum_{n,m=0}^{M}\lambda^{n+m} \frac{t^{n+m}}{n!m!}
\sum_{\pi(n,m)\in \mathcal{G}_{2}} 1 \notag \\
\leq & 0
\end{align}
\end{proof}
\section{NESTED GRAPHS}
\label{sec:nested}
Nested graphs are non crossing graphs. If there are no crossing it means that when there is a contraction between two elements, for example $X_{E_{g},E_{g+1}}(x_{g},x_{g+1})$ and $X_{E_{h},E_{h+1}}(x_{h},x_{h+1})$, then the elements in between these two, in the total product $\bar{L}^{m}L^{n}$, can only contract among themselves and thus the energy variables in between $E_{g+1}$ and $E_{h}$ are independent of all the others. This mean that in the sum
\begin{align}
\sum_{\{E_{i},E'_{j}\}}P_{0}^{x_{n}}(E_{n}) C_{\pi(n,m)}(\{E_{i},E'_{j}\},\{x_{i},x'_{j}\}) \prod_{i}f(E_{i})\prod_{j}g(E'_{j}) 
=\prod_{j=1}^{\kappa_{\pi(n,m)}}\sum_{\omega_{j}}  \frac{1}{N^{\frac{n+m}{2}}}\prod_{j}f^{k_{j}}(\omega_{j})g^{p_{j}}(\omega_{j}) \label{eq:exam}
\end{align}
there are independent variables, $\omega_{j}$, for which $k_{j}$ or $p_{j}$ is $0$.
The simplest example is that of an nn contraction. If $X_{E_{g},E_{g+1}}(x_{g},x_{g+1})$ contracts with $X_{E_{g+1},E_{g+2}}(x_{g+1},x_{g+2})$ this imposes according to Eq. (\ref{eq:OnePairp}) the relationship
\begin{align}
E_{g}&=E_{g+2} \notag \\
E_{g+1}&=E_{g+1}
\end{align}
The second equation is of course superflues but since there are no more random elements in the product that depend on $E_{g+1}$ there are no more contraction which could relate $E_{g+1}$ to another energy variable. Thus
 $E_{g+1}$ is independent of the rest. Then in the product of Eq. (\ref{eq:exam}) if $\omega_{l}=E_{g+1}$ we would have $p_{l}=0$ and $k_{l}=1$ and thus a term  $f(\omega_{l})$.
The particularity of nested graphs is that there is at least one independent variable $\omega_{j}$ such that $p_{j}=0$ and $k_{j}>1$ or $k_{j}=0$ and $p_{j}>1$. This is easy to see as follows.
Say we have a nested graph, that is suppose we have no crossings and that 
$X_{E_{g},E_{g+1}}(x_{g},x_{g+1})$ and $X_{E_{h},E_{h+1}}(x_{h},x_{h+1})$ contract with $ g+1< h$. 
We have then $E_{g+1}=E_{h}$. 
Since no elements $X_{E_{j},E_{j+1}}(x_{j},x_{j+1})$ with $g<j<h$ can contract with a primed random variable $E_{h}$ cannot be equal to a primed $E'_{j}$ variable. Thus for the independent variable $\omega_{l}=E_{g+1}$ we will have $p_{l}=0$ and  have $k_{l}\geq 2$.
Simple graphs do not have such independent variables.\\
According to theorem \ref{thm:kappa} we have for non crossing graphs in Eq. (\ref{eq:PNaxt}) $\kappa_{\pi(n,m)}=\frac{n+m}{2}+1$. Thus for non crossing graphs we have
\begin{align}
P^{x_{0},M,N}_{a,t}
 = & \sum_{n,m=0}^{M}\lambda^{n+m}\sum_{\pi(n,m)\in \mathcal{G}_{a}}
\int_{-\infty}^{\infty} d\beta d \alpha e^{-i(\alpha-\beta) t}  e^{\eta 2t} \notag\\
\times & \prod_{j=1}^{\kappa_{\pi}}\sum_{\omega_{j} \in \mathcal{A}_{\pi}}\left(\frac{1}{N^{\kappa_{\pi}-1}} P_{0}(x_{0},\omega_{\kappa_{\pi}}) 
\prod_{j=1}^{\kappa_{\pi}}\left( \left(\frac{-1}{\omega_{j}-\alpha -i\eta}\right)^{k_{j}} 
\left(\frac{-1}{\omega_{j}-\beta +i\eta}\right)^{p_{j}} \right)\right)\label{eq:PNaxt2}
\end{align}
From Eq. (\ref{eq:PNaxt2}) 
we have in the limit $N\rightarrow \infty$ 
\begin{align}
P^{x_{0},M}_{a,t}=&\lim_{N\rightarrow \infty }P^{x_{0},M,N}_{a,t} \notag \\
 = & \sum_{n,m=0}^{M}\lambda^{n+m}\sum_{\pi(n,m)\in \mathcal{G}_{a}}  
\int_{-\infty}^{\infty} d\beta d \alpha e^{-i(\alpha-\beta) t}  e^{\eta 2t} \notag \\
\times & \prod_{j=1}^{\kappa_{\pi}}\int d\omega_{j} \left(P_{0}(x_{0},\omega_{\kappa_{\pi}}) 
 \prod_{j=1}^{\kappa_{\pi}}\left(\left(\frac{-1}{\omega_{j}-\alpha -i\eta}\right)^{k_{j}} 
\left(\frac{-1}{\omega_{j}-\beta +i\eta}\right)^{p_{j}} \right)\right)\label{eq:contPaxt}
\end{align}
We will now prove the following theorem:
\begin{thm}
\label{thm:nested}
In the Van Hove limit ($\lambda^{2}t=T<\infty$) the contribution from nested graphs ($\mathcal{G}_{1}$) to the average of the time evolution of the observable, $\mathbb{E}\left[P^{x_{0},M}_{t}\right]$, vanishes. That is
\begin{align}
\lim_{t\rightarrow \infty}^{t\lambda^{2}=T}P^{x_{0},M}_{1,t}=0
\end{align}
\end{thm}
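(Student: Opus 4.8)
The plan is to start from the $N\to\infty$ expression \eqref{eq:contPaxt} for non-crossing graphs, and then use the special structure of nested graphs — namely that a nested graph possesses at least one independent variable $\omega_j$ with $p_j=0,\,k_j\ge 2$ (or $k_j=0,\,p_j\ge2$), as established in the discussion preceding the theorem — to extract an extra power of $\lambda$ relative to the number of available propagators. Concretely, I would first reorganize \eqref{eq:contPaxt} by carrying out the $\omega_j$-integrals one at a time. For an independent variable $\omega_j$ that appears with multiplicity $k_j$ on the left and $p_j$ on the right, the relevant integral is
\begin{align}
\int d\omega_j \left(\frac{-1}{\omega_j-\alpha-i\eta}\right)^{k_j}\left(\frac{-1}{\omega_j-\beta+i\eta}\right)^{p_j}. \notag
\end{align}
The key quantitative input is a bound on such integrals: when both $k_j\ge1$ and $p_j\ge1$ one gets a factor behaving like $\eta^{-1}\sim t$ (this is the ``good'' case that builds up the $(\lambda^2 t)^{\text{power}}$ needed for a finite Van Hove limit), whereas when $p_j=0$ (so only $\alpha$-propagators are present with $k_j\ge2$), the integral over $\omega_j\in[0,1]$ of $(\omega_j-\alpha-i\eta)^{-k_j}$ is bounded \emph{uniformly in $\eta$}, i.e.\ it does \emph{not} produce a factor of $t$. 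The edge-cutoff hypothesis on the initial data (population vanishing within $\epsilon$ of the spectral edges) is what guarantees these one-dimensional integrals stay finite and $\eta$-independent in that degenerate case, since it keeps $\alpha,\beta$ effectively away from the endpoints $0,1$ where the relevant poles would pinch.

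Next I would do the power counting. A graph of order $(n,m)$ carries a prefactor $\lambda^{n+m}$ and has $n+m+2$ propagators distributed among $\kappa_\pi=\frac{n+m}{2}+1$ independent variables, by \eqref{eq:propagatorcount} and Theorem \ref{thm:kappa}. Each $\omega_j$-integral with $k_j,p_j\ge1$ contributes at most one factor of $t\sim\eta^{-1}$; the two ``external'' $\alpha,\beta$ integrals together with the oscillatory factor $e^{-i(\alpha-\beta)t}$ contribute the remaining bookkeeping. For a simple graph every independent variable (except the one carrying the initial data) has $k_j=p_j=1$, yielding exactly $\frac{n+m}{2}$ factors of $t$, which pairs with $\lambda^{n+m}$ to give $(\lambda^2 t)^{(n+m)/2}=T^{(n+m)/2}$ — finite. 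For a nested graph, at least one independent variable is ``wasted'' (has $p_j=0$, contributing an $O(1)$ factor instead of a $t$), and because the total propagator count is fixed at $n+m+2$, this forces the number of $t$-producing integrals to be at most $\frac{n+m}{2}-1$. Hence the nested contribution of order $(n,m)$ is bounded by $C\,\lambda^{n+m}t^{(n+m)/2-1}\cdot(\text{number of nested graphs of that order}) = C\,T^{(n+m)/2}\lambda^{2}\cdot C^{n+m}$, which carries an explicit factor $\lambda^2 = T/t \to 0$ as $t\to\infty$ with $T$ fixed. Summing over $n,m\le M$ (a finite sum, since $M$ is fixed until the very end) preserves this bound, so $P^{x_0,M}_{1,t}\to 0$.

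I would organize the write-up as: (i) state and prove the integral estimate lemma distinguishing the $p_j\ge1$ and $p_j=0$ cases, making explicit use of the edge cutoff; (ii) reduce \eqref{eq:contPaxt} for $a=1$ to a product of such one-dimensional integrals plus the $\alpha,\beta$ integration; (iii) carry out the power counting above to obtain the bound $|P^{x_0,M}_{1,t}|\le C(M)\,\lambda^2$; (iv) conclude. The main obstacle I anticipate is step (i): controlling the integrals $\int_0^1 d\omega\,(\omega-\alpha-i\eta)^{-k}(\omega-\beta+i\eta)^{-p}$ uniformly, and in particular showing that the ``bad'' variables genuinely fail to generate a compensating power of $t$ while the ``good'' variables generate exactly one each and no more — this requires being careful about how the $\alpha$- and $\beta$-integrations interact with the accumulated singularities in $\eta$, and about not accidentally overcounting factors of $t$. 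The edge-regularity of the initial data is the hypothesis that has to be invoked precisely here to keep everything finite; the rest is combinatorial bookkeeping that parallels the simple-graph analysis of Section \ref{sec:simple}.
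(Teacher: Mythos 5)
Your overall strategy --- isolate the degenerate independent variable of a nested graph (the one with $p_l=0$, $k_l\ge 2$) and show that it fails to generate a full compensating power of $t$, so that the whole contribution carries an extra factor $1/t$ --- is exactly the mechanism of the paper's proof, and your final bound $\left|P^{x_{0},M}_{1,t}\right|\le C(M,T)/t$ has the right form. However, the integral estimate on which you rest step (i) is false as stated, and the hypothesis you invoke to justify it does not bear on the variables involved. The integral $\int_0^1 d\omega_l\,(\omega_l-\alpha-i\eta)^{-k_l}$ is \emph{not} bounded uniformly in $\eta$: evaluating it exactly produces boundary terms proportional to $(1-\alpha-i\eta)^{-(k_l-1)}$ and $(\alpha+i\eta)^{-(k_l-1)}$ (this is inequality (\ref{ine:one})), which are of size $\eta^{-(k_l-1)}$ when $\alpha$ lies within $O(\eta)$ of the endpoints $0$ or $1$. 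The variables $\alpha,\beta$ are the duals of the time variables introduced in (\ref{deltaform}); they are integrated over all of $\mathbb{R}$ and nothing localizes them away from $0$ and $1$. The edge cutoff on the initial data constrains only $\omega_{\bar n}=E_n$ (the variable multiplying $P_0$), not $\alpha$, $\beta$, or the nested variable $\omega_l$; indeed the paper never uses the cutoff in the nested-graph argument (it is needed only for the $1/\left|\omega_{\bar n}\right|$ terms in Section \ref{sec:simple}). After the $\alpha$-integration the boundary terms still cost $\eta^{-(k_l-2)}$ (or $\left|\log\eta\right|$ when $k_l=2$), not $O(1)$: the true gain over the naive bound $\int_0^1 d\omega_l\left|\omega_l-\alpha-i\eta\right|^{-k_l}\le C\eta^{-(k_l-1)}$ is exactly one power of $t$, not $k_l-1$ powers.

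The power counting is also off for the generic variables. An independent variable with $k_j,p_j\ge1$ contributes $\eta^{-(k_j+p_j-1)}=t^{\,k_j+p_j-1}$, not ``at most one factor of $t$,'' and it is not true that every variable of a simple graph has $k_j=p_j=1$ (nn-contractions produce multiplicity-one variables costing only $\left|\log\eta\right|$, while outer contractions can carry arbitrarily high multiplicities). The correct bookkeeping goes through the sum rule (\ref{eq:propagatorcount}) together with $\kappa_{\pi}=\frac{n+m}{2}+1$, giving $\sum_j(k_j+p_j-1)=\frac{n+m}{2}+1$; two full propagator pairs must then be reserved to render the $\alpha$ and $\beta$ integrations convergent, which is the content of Eq.~(\ref{eq:4kintegrals}) and buys one power of $t$, and the exact integration of the nested variable buys the second, yielding $t^{\frac{n+m}{2}-1}$ and hence the bound (\ref{eq:Qnestedbound}). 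As written, your argument lands on the right exponent only because two errors compensate (overestimating the saving from the nested variable, underestimating the cost of high-multiplicity variables); the estimate you would actually have to prove in step (i) is the combined bound on the joint $\omega_l,\omega_1,\omega_2,\alpha,\beta$ integration, not a pointwise, $\eta$-uniform bound on the $\omega_l$-integral alone --- precisely the difficulty you flagged at the end, but which the edge-cutoff hypothesis cannot resolve.
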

\begin{proof}{Theorem \ref{thm:nested}}\\
We define the following:
\begin{align}
 Q_{\pi(n,m)}(t,\lambda,x_{0})
= &  \lambda^{n+m}
\int_{-\infty}^{\infty} d\beta d \alpha e^{-i(\alpha-\beta) t}  e^{\eta 2t}  \notag \\
\times & \prod_{j=1}^{\kappa_{\pi}} \int d\omega_{j}\left(
P_{0}(x_{0},\omega_{\kappa_{\pi}}) 
\prod_{j=1}^{\kappa_{\pi}}\left(
 \left(\frac{-1}{\omega_{j}-\alpha -i\eta}\right)^{k_{j}} 
\left(\frac{-1}{\omega_{j}-\beta +i\eta}\right)^{p_{j}}\right)\right)
\label{eq:Qfunction}
\end{align}
This is just the limit of $Q^{N}_{\pi(n,m)}(t,\lambda,x_{0})$ as $N\rightarrow \infty$ of Eq. (\ref{eq:qN}). Notice that this is the form of $Q^{N}_{\pi(n,m)}$ for all non crossing graphs.
For $a=0,1$ we have then from Eq. (\ref{eq:contPaxt})
\begin{align}
\lim_{N\rightarrow \infty }P^{x,M,N}_{a,t}= \sum_{n,m=0}^{M}\sum_{\pi(n,m)\in \mathcal{G}_{a}}
 Q_{\pi(n,m)}(t,\lambda,x) \label{eq:PxatOfQ}
\end{align}
Since we are considering a nested graph there exists an $\omega_{l}$ such that either $k_{l}=0$ and $p_{l}>1$ or vice versa. We can thus perform the integration over this variable. From Eq. (\ref{eq:Qfunction}) we have
\begin{align}
 Q_{\pi(n,m)}(t,\lambda,x_{0}) 
= &  
\lambda^{n+m} \int_{-\infty}^{\infty} d\beta d \alpha  e^{-i(\alpha-\beta) t}e^{\eta 2t} \left(\int d\omega_{l} 
 \left(\frac{-1}{\omega_{l}-\alpha -i\eta}\right)^{k_{l}}\right)\notag \\
\times &
\prod_{j=1,\neq l}^{\kappa_{\pi}} \int d\omega_{j}\left( P_{0}(x_{0},\omega_{\kappa_{\pi}}) 
\prod_{j=1,\neq l}^{\kappa_{\pi}} \left( \left(\frac{-1}{\omega_{j}-\alpha -i\eta}\right)^{k_{j}} 
\left(\frac{-1}{\omega_{j}-\beta +i\eta}\right)^{p_{j}} \right)\right)\notag 
\end{align}
We can perform the integration over $\omega_{l}$ first. After taking the absolute value we use inequality (\ref{ine:two}) to bound almost all integrations over $\omega_{j}$. We can only apply inequality (\ref{ine:two}) to those integrations where $k_{j}+p_{j}\geq 2$ and we do so except for two variables which we denote  $\omega_{1}$ and $\omega_{2}$. For these variables we bound the set of propagators by 
\begin{align}
 \left|\frac{1}{\omega_{1}-\alpha-i\eta}\right|^{k_{1}} \left|\frac{1}{\omega_{1}-\beta+i\eta}\right|^{p_{1}}
\leq \eta^{-\left(k_{1}+p_{1}-2\right)}  \left|\frac{1}{\omega_{1}-\alpha-i\eta}\right| \left|\frac{1}{\omega_{1}-\beta+i\eta}\right|
\end{align}
When $k_{j}=0$ and $p_{j}=1$ we can bound the integral by a $\left|\log \eta\right|$ term. We denote by $n'$ the number of cases in which  $k_{j}=0$ and $p_{j}=1$ or $k_{j}=1$ and $p_{j}=0$. 
This corresponds then to the number of propagators with multiplicity equal to $1$. We denote by $\bar{n}+1$ the number of propagators of multiplicity higher then $1$. Thus we have $\bar{n}+n'+1=\kappa_{\pi(n,m)}$. For non crossing graphs we have then $\bar{n}+n'+1=\frac{n+m}{2}+1$. 
Therefore we have $n'<\frac{n+m}{2}$.
We have then : 
\begin{align}
\left| Q_{\pi(n,m)}(t,\lambda,x) \right| \leq &\lambda^{n+m} \int_{-\infty}^{\infty} d\beta d \alpha  e^{\eta 2t}
 \eta^{-\sum_{j=1, \neq l}\left(k_{j}+p_{j}-1\right)+2} \left(\left|\frac{-1}{1-\alpha -i\eta}\right|^{k_{l}-1}+\left|\frac{-1}{-\alpha -i\eta}\right|^{k_{l}-1} \right)
\left|\log \eta\right|^{n'} \notag \\
\times & \int d \omega_{1} d \omega_{2} 
\left|\frac{1}{\omega_{1}-\alpha -i\eta}\right| 
\left|\frac{1}{\omega_{1}-\beta +i\eta}\right|
\left|\frac{1}{\omega_{2}-\alpha -i\eta}\right| 
\left|\frac{1}{\omega_{2}-\beta +i\eta}\right| \notag
\end{align}
The sum over $j$ in the exponent of $\eta$ should be over those $j$ for which $k_{j}+p_{j}\geq 2$ but since we are summing $k_{j}+p_{j}-1$ we can extend to all $j$ since if $k_{j}+p_{j}=1$ then we are summing zero.
By using inequality (\ref{eq:4kintegrals}) and remembering that $\eta=t^{-1}$ we get
\begin{align}
\left| Q_{\pi(n,m)}(t,\lambda,x) \right| \leq &\lambda^{n+m}  e^{\eta 2t}
 \eta^{-\sum_{j=1, \neq l}\left(k_{j}+p_{j}-1\right)+2}
\frac{C\left|\log \eta\right|^{3+n'}}{\eta^{k_{l}-1}}
\notag  \\
\leq &C \lambda^{n+m} 
 t^{\sum_{j=1}\left(k_{j}+p_{j}-1\right)-2} 
\left|\log t \right|^{3+n'}
\label{eq:QN}
\end{align}
Since for non crossing graphs  $\kappa_{\pi(n,m)}=\frac{n+m}{2}+1$, we have from Eq. (\ref{eq:propagatorcount})
\begin{align}
\sum_{j=1}^{\kappa_{\pi}}\left(k_{j}+p_{j}-1\right)&=(n+m+2)-\kappa_{\pi} \notag \\
&=\frac{n+m}{2} +1 \notag
\end{align}
Inserting the last equation in Eq. (\ref{eq:QN}) and maximizing $n'$ by $\frac{n+m}{2}$ gives
\begin{align}
\left| Q_{\pi(n,m)}(t,\lambda,x) \right| 
\leq & \left(CT\right)^{\frac{n+m}{2}} \frac{\left(\log t\right)^{3+\frac{n+m}{2}}}{t} \label{eq:Qnestedbound}
\end{align}
with $\lambda^2t=T$.
This vanishes in the Van Hove limit.
\begin{align}
\lim_{t\rightarrow \infty}\left| Q_{\pi(n,m)}(t,\lambda,x) \right| \leq &\lim_{t\rightarrow \infty} C T^{\frac{n+m}{2}} \frac{(\log t)^{3+\frac{n+m}{2}}}{t} \notag \\
=& 0 \label{eq:qtozero}
\end{align}
Thus from Eqs. (\ref{eq:PxatOfQ}) and (\ref{eq:qtozero}),  and from the fact the number of nested graphs of length $n+m$ is less then $c^{n+m}$, with $c$ a constant ,
we have for the contribution of the nested graphs the following bound
\begin{align}
\left|P^{x}_{1,t}\right|
\leq & \sum_{n,m=0}^{M}\sum_{\pi(n,m)\in \mathcal{G}_{1}}  \left| Q_{\pi(n,m)}(t,\lambda,x) \right|
\notag\\
\leq & \sum_{n,m=0}^{M} \left(C'T\right)^{\frac{n+m}{2}} \frac{(\log t)^{3+\frac{n+m}{2}}}{t}
\notag \\
\lim_{t\rightarrow \infty }\left|P^{x}_{1,t}\right| =& 0
\end{align}
Thus the contribution of a nested graph vanishes.
\end{proof}
\section{SIMPLE GRAPHS}
\label{sec:simple}
As mentioned, simple graphs are graphs which are build from nn contractions and outer contractions.
This means that in Eq. (\ref{eq:Qfunction}) we can either have $k_{j}=1$ and $p_{j}=0$ (nn contractions) or
we can have $k_{j}\geq 1$ and $p_{j}\geq 1$. We can see this as follows.
Say we have two outer contractions between $X_{E_{q},E_{q+1}}$ and $X_{E'_{p+1},E'_{p}}$ and between $X_{E_{g},E_{g+1}}$ and $X_{E'_{h+1},E'_{h}}$. and suppose there are no outer contractions in between these two outer contractions. That is  there is no outer contraction between $X_{E_{a},E_{a+1}}$ and $X_{E'_{b+1},E'_{b}}$ with $g+1<a<q$ and $h+1<b<p$. 
we take the outer contraction between  $X_{E_{q},E_{q+1}}$ and $X_{E'_{p+1},E'_{p}}$ to be the $(j+1)^{\text{th}}$ outer contraction and the contraction between  $X_{E_{g},E_{g+1}}$ and $X_{E'_{h+1},E'_{h}}$ to be the $j^{\text{th}}$. We are thus counting from the inside to the outside.
If the graph is a simple graph then all elements $X_{E_{a},E_{a+1}}$ with 
$g+1<a<q$ contract amongst each other and form only nn contractions. The same is valid for all elements $X_{E'_{b+1},E'_{b}}$ with $h+1<b<p$. By Eq. (\ref{}) we have then 
\begin{align}
E_{q}=& E_{q-2}
= E_{q-4} \dots
= E_{g+1} 
= E'_{h+1} 
= E'_{h+3} 
\dots = E'_{p} \notag
\end{align}
All of the other variables, $E_{q-1},E_{q-3}\dots E_{g+2}, E'_{h+2},E'_{h+4}\dots E'_{p-1}$, are independent and of multiplicity $1$.
If we set $\omega_{j}=E_{q}$ as the independent variable with respectively 
$k_{j}$ and $p_{j}$ as left and right multiplicities then there are $k_{j}-1$ 
variables in between the two outer contractions of multiplicity $1$ on the left and $p_{j}-1$ on the right.
Thus in Eq. (\ref{eq:exam}) there will be in the product a term 
\begin{align}
 f^{k_{j}}(\omega_{j}) g^{p_{j}}(\omega_{j})\prod_{l=1}^{k_{j}-1}f(\omega_{j}^{l})\prod_{l=1}^{p_{j}-1}g(\bar{\omega}^{l}_{j})  \label{eq:form}
\end{align}
In view of this we will change our notation.
We denote by $\bar{n}$ the number of outer contractions for a graph and so there are $\bar{n}+1$ independent variables of multiplicity higher then $1$. We set $n'$ to be the number of variables of multiplicity equal to $1$. 
Instead of $k_{j}$ denoting the multiplicity of any independent variable we will denote by $k_{j}+1$ the multiplicity of independent variables with multiplicity higher then $1$. Eq. (\ref{eq:form}) then becomes
\begin{align}
 f^{k_{j}+1}(\omega_{j}) g^{p_{j}+1}(\omega_{j})\prod_{l=1}^{k_{j}}f(\omega_{j}^{l})\prod_{l=1}^{p_{j}}g(\bar{\omega}^{l}_{j})  \label{eq:form2}
\end{align}
Notice that the set of numbers $\bar{n},\{k_{j},p_{j}\}$ determine uniquely the simple graph and so for each set there is unique graph.
We will introduce this notation in Eq. (\ref{eq:Qfunction}). 
In between each two outer contractions we will have a product of the form of Eq. (\ref{eq:form2}) and so we have from Eq. (\ref{eq:Qfunction})
\begin{align}
Q_{\pi(n,m)} (t,\lambda,x)&= \lambda^{n+m}
\int d\alpha d\beta 
e^{-i(\alpha -\beta)t}e^{2\eta t}  \label{eq:simpleQ} \\
\times &
\prod_{j=0}^{\bar{n}}\int d\omega_{j} P_{0}(x,\omega_{\bar{n}})\prod_{j=0}^{\bar{n}}\left(
\left(\frac{1}{\omega_{j}-\alpha -i\eta}\right)^{k_{j}+1}
\left(\frac{1}{\omega_{j}-\beta +i\eta} \right)^{p_{j}+1} 
\Theta^{k_{j}}(\alpha,\eta)\bar{\Theta}^{p_{j}}(\beta,\eta)\right) \notag
\end{align}
with
\begin{align}
 \Theta(\alpha,\eta)=\int d \omega \frac{-1}{\omega-\alpha-i\eta} \label{eq:ThetaFunction}
\end{align}
$k_{j}$ and $p_{j}$  here are not those from section \ref{sec:nested}. 
Since a graph of order $n+m$  has $n+m+2$ propagators and that there are $\frac{n+m}{2}+1$ independent variables we have in this new notation
\begin{align}
 \sum_{j=0}^{\bar{n}}\left(k_{j}+p_{j}+2\right)+n'=& n+m+2 \label{eq:kpnm}  \\
 \bar{n}+1+n'=& \frac{n+m}{2}+1 \label{eq:kpnm2}
\end{align}
and 
\begin{align}
P^{x,M}_{0,t}&=\sum_{n,m=0}^{M}\sum_{\pi(n,m)\in \mathcal{G}_{0}}Q_{\pi(n,m)}(t,\lambda,x) \label{eq:PatM} 
\end{align}
We now will  prove that in Eqs. (\ref{eq:simpleQ}) and (\ref{eq:PatM}) the function $\Theta(\alpha,\eta)$
can be replaced by $\Theta(\omega_{\bar{n}})$, with $\Theta(\omega_{\bar{n}})=\lim_{\eta\rightarrow \infty}\Theta(\omega_{\bar{n}},\eta)$,   such that the error goes to zero in the van Hove limit. 
We define then 
\begin{align}
 \tilde{P}^{x,M}_{0,t}&=\sum_{n,m=0}^{M}\sum_{\pi(n,m)\in \mathcal{G}_{0}}\tilde{Q}_{\pi(n,m)}(t,\lambda,x) \label{eq:PMattilde} \\ 
\tilde{Q}_{\pi(n,m)} (t,\lambda,x)&= \lambda^{n+m}
\int d\alpha d\beta 
e^{-i(\alpha -\beta)t}e^{2\eta t}   \notag \\
\times &
\prod_{j=0}^{\bar{n}}\int d\omega_{j}P_{0}(x,\omega_{\bar{n}}) \prod_{j=0}^{\bar{n}}\left(
\left(\frac{1}{\omega_{j}-\alpha -i\eta}\right)^{k_{j}+1}
\left(\frac{1}{\omega_{j}-\beta +i\eta} \right)^{p_{j}+1} 
\Theta^{k_{j}}(\omega_{\bar{n}})\bar{\Theta}^{p_{j}}(\omega_{\bar{n}})\right) \label{eq:PMattilde2}
\end{align}
Thus we will analyze the difference $\left|\Delta Q_{\pi(n,m)}(t,\lambda,x)\right|=\left|Q_{\pi(n,m)}(t,\lambda,x)-\tilde{Q}_{\pi(n,m)}(t,\lambda,x)\right|$. For briefness we denote $\Delta Q_{\pi(n,m)}(t,\lambda,x)$ by $\Delta Q_{\pi}$ in this section.
\begin{align}
\Delta Q_{\pi} &=Q_{\pi(n,m)}(t,\lambda,x)-\tilde{Q}_{\pi(n,m)}(t,\lambda,x) \notag \\
= \lambda^{n+m}&
\int d\alpha d\beta 
e^{-i(\alpha -\beta)t}e^{2 \eta t}   \prod_{j=0}^{\bar{n}}\int d\omega_{j}P_{0}(x,\omega_{\bar{n}}) \prod_{j=0}^{\bar{n}}\left(
\left(\frac{1}{\omega_{j}-\alpha -i\eta}\right)^{k_{j}+1}
\left(\frac{1}{\omega_{j}-\beta +i\eta} \right)^{p_{j}+1}\right)\notag \\
\times &
\left(\prod_{j=0}^{\bar{n}}\Theta^{k_{j}}(\alpha,\eta)\bar{\Theta}^{p_{j}}(\beta,\eta)-\prod_{j=0}^{\bar{n}}\Theta^{k_{j}}(\omega_{\bar{n}})\bar{\Theta}^{p_{j}}(\omega_{\bar{n}})\right) \label{DeltaQ} 
\end{align}
 and show that it tends to zero in the van Hove limit such that $\left|\tilde{P}^{x,M}_{0,t}-P^{x,M}_{0,t}\right|$ tends to zero in the limit.
We will prove the following theorem:
\begin{thm}
\label{thm:simple}
In the Van Hove limit we have
\begin{align}
\lim_{t\rightarrow \infty}^{t\lambda^{2}=T} \left(\tilde{P}^{x,M}_{0,t}-P^{x,M}_{0,t}\right)=0 
\end{align}
\end{thm}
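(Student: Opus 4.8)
The plan is to reduce Theorem~\ref{thm:simple} to a single pointwise estimate on the graph differences $\Delta Q_\pi$ and then sum the finitely many simple graphs that occur at a fixed cutoff $M$. The limit $N\to\infty$ has already been taken, so everything takes place in the representation \eqref{eq:simpleQ}, and the only property of the initial data I use is that $P_0(x,\cdot)$ is supported in $(\epsilon,1-\epsilon)$: this is what makes the contact value $\Theta(\omega_{\bar n})$ (the boundary value of $\Theta(\omega_{\bar n},\eta)$ as $\eta\downarrow 0$) finite -- the relevant logarithm stays bounded precisely when $\omega_{\bar n}$ is kept away from the edges $0,1$ -- and in both \eqref{eq:simpleQ} and \eqref{eq:PMattilde2} the factor $P_0(x,\omega_{\bar n})$ forces $\omega_{\bar n}$ into that interval. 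Concretely I would aim for
\[
\bigl|\Delta Q_\pi\bigr|\le (CT)^{(n+m)/2}\,(\log t)^{O(n+m)}\,t^{-1}\,\bigl(t^{-c}+t^{c(k_{\bar n}+1)-1}\bigr)
\]
uniformly over simple graphs $\pi$ of order $(n,m)$, with a constant $c>0$ that may depend on $M$; since $k_{\bar n}\le n+m\le 2M$, choosing $c<1/(2M+1)$ makes the right-hand side vanish as $t\to\infty$ for every $n,m\le M$, and summing over the finitely many such graphs yields the theorem. (The bound $\#\mathcal{G}_0(n,m)\le c_0^{n+m}$ of Section~\ref{sect:AveragingAndGraphs} is not strictly needed here, $M$ being fixed, but it is the input that would later permit $M\to\infty$.)

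The first step is to telescope the bracket in \eqref{DeltaQ}. Expanding one elementary factor at a time, $\prod_{j}\Theta^{k_j}(\alpha,\eta)\bar\Theta^{p_j}(\beta,\eta)-\prod_{j}\Theta^{k_j}(\omega_{\bar n})\bar\Theta^{p_j}(\omega_{\bar n})$ becomes a sum of at most $\sum_j(k_j+p_j)\le n+m$ terms, each a product of bounded factors ($\Theta(\alpha,\eta)$, $\bar\Theta(\beta,\eta)$, $\Theta(\omega_{\bar n})$ or $\bar\Theta(\omega_{\bar n})$) times exactly one difference factor, $\Theta(\alpha,\eta)-\Theta(\omega_{\bar n})$ or $\bar\Theta(\beta,\eta)-\bar\Theta(\omega_{\bar n})$. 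The bounded factors are controlled by $|\Theta(\alpha,\eta)|\le C(1+|\log\eta|)=C(1+\log t)$ (valid for every real $\alpha$) and $|\Theta(\omega_{\bar n})|\le C_\epsilon$, so their product is at most $(C\log t)^{n+m}$, a polylogarithmic loss that is harmless at fixed $M$. What remains, for each telescoping term, is an integral of exactly the shape \eqref{eq:simpleQ} with the whole product $\prod\Theta^{k_j}\bar\Theta^{p_j}$ replaced by a single factor, say $\Theta(\alpha,\eta)-\Theta(\omega_{\bar n})$ (the terms carrying the $\bar\Theta(\beta,\eta)-\bar\Theta(\omega_{\bar n})$ difference factor are handled identically, by the same estimates with $\alpha\leftrightarrow\beta$ and complex conjugation). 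On the remaining $\omega_j$ and $\beta$ integrations I would run the power counting from the proof of Theorem~\ref{thm:nested} verbatim: all but two of the $\omega_j$ via inequality~\eqref{ine:two}, the last two via \eqref{eq:4kintegrals}, together with \eqref{eq:kpnm}--\eqref{eq:kpnm2}; this produces the prefactor $\lambda^{n+m}\eta^{-((n+m)/2-1)}=T^{(n+m)/2}t^{-1}$ up to polylogs, exactly as there.

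The crux, and the main obstacle, is the leftover $\alpha$-integral
\[
\int d\alpha\; e^{-i\alpha t}\,\bigl(\Theta(\alpha,\eta)-\Theta(\omega_{\bar n})\bigr)\,(\omega_{\bar n}-\alpha-i\eta)^{-(k_{\bar n}+1)}\times(\text{remaining }\alpha\text{-propagators}),
\]
which must carry a genuine extra small factor on top of that power counting. I would split at $|\alpha-\omega_{\bar n}|=\delta$ with $\delta=t^{-c}$. On the near region, since $\omega_{\bar n}\in(\epsilon,1-\epsilon)$ the boundary function $\Theta(\cdot)$ is Lipschitz near $\omega_{\bar n}$ and $\sup_{|\alpha-\omega_{\bar n}|<\delta}|\Theta(\alpha,\eta)-\Theta(\alpha)|=O(\eta)$, so $|\Theta(\alpha,\eta)-\Theta(\omega_{\bar n})|\le C(\delta+\eta)$ there, while the $\omega_{\bar n}$-propagator stays integrable and costs only the usual $\log t$; this region contributes a factor $O(\delta+\eta)$ relative to the power counting. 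On the far region one only has $|\Theta(\alpha,\eta)-\Theta(\omega_{\bar n})|\le C\log t$, but now $\alpha$ is no longer pinned to the pole of the $\omega_{\bar n}$-propagator, so $(\omega_{\bar n}-\alpha-i\eta)^{-(k_{\bar n}+1)}$ is bounded by $\delta^{-(k_{\bar n}+1)}$ in place of the $\eta$-singular factor the power counting had extracted; this region contributes $O(\delta^{-(k_{\bar n}+1)}t^{-1})$. Putting the two together gives the displayed estimate, and hence the theorem. The delicate point is exactly this balance -- choosing $\delta\to 0$ slowly enough ($\delta^{-(k_{\bar n}+1)}\ll t$) that the gain from $\Theta(\alpha,\eta)-\Theta(\omega_{\bar n})$ on the near region survives while the now-nonsingular $\omega_{\bar n}$-propagator on the far region does not reintroduce a loss larger than $t$ -- and carrying it out uniformly over all simple graphs with $n,m\le M$.
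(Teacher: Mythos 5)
Your argument is correct in substance and proves the theorem, and its outer layers --- the telescoping of the bracket in Eq.~(\ref{DeltaQ}) into terms carrying exactly one difference factor, the bound $|\Theta(\alpha,\eta)|\le C|\log\eta|$ on the spectator factors, the power counting via (\ref{ine:two}) and (\ref{eq:4kintegrals}) with (\ref{eq:kpnm})--(\ref{eq:kpnm2}), and the finite sum over graphs at fixed $M$ --- coincide with the paper's. Where you genuinely diverge is the treatment of the single difference factor $\Theta(\alpha,\eta)-\Theta(\omega_{\bar{n}})$. The paper does not split the $\alpha$-integration: it invokes the pointwise inequality (\ref{eq:thetabound}), valid for \emph{every} real $\alpha$, which bounds the difference by $\left|\omega_{\bar{n}}-\alpha-i\eta\right|$ times a sum of four explicit singular weights; the factor $\left|\omega_{\bar{n}}-\alpha-i\eta\right|$ is carried into the integral and cancels one power of the $\omega_{\bar{n}}$--$\alpha$ propagator, and the four weights are then handled one at a time (the terms $\mathcal{P}_{1}$, $\mathcal{P}_{2}$, with the edge cutoff $\epsilon$ of the initial data controlling $1/\left|\omega_{\bar{n}}\right|$ and $1/\left|1-\omega_{\bar{n}}\right|$), yielding the uniform rate $\left|\Delta Q_{\pi}\right|\le(CT)^{(n+m)/2}(\log t)^{(n+m)/2+5}/t$ of Eq.~(\ref{eq:DeltaQ}). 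Your near/far splitting at $|\alpha-\omega_{\bar{n}}|=t^{-c}$ is a legitimate substitute: the far-region gain is real (removing one link of the $\alpha,\beta,\omega_{0},\omega_{\bar{n}}$ propagator cycle downgrades the pinned $\eta^{-1}$ of (\ref{eq:4kintegrals}) to logarithms, so the loss $\delta^{-(k_{\bar{n}}+1)}$ is beaten for any $c<1$), and the near-region Lipschitz bound is exactly where the support condition on $P_{0}$ enters, as in the paper. Two caveats: your displayed estimate overstates the near-region decay --- there the only smallness is the factor $O(\delta+\eta)$ multiplying the \emph{non-decaying} leading power counting, so that region contributes $(CT)^{(n+m)/2}(\log t)^{O(n+m)}(t^{-c}+t^{-1})$ without the extra overall $t^{-1}$; this still vanishes, so the theorem survives. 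More importantly, your rate is $M$-dependent ($c<1/(2M+1)$), whereas the paper's bound (\ref{eq:DeltaQ}) with its uniform $1/t$ is reused in Section~\ref{sec:error1} with $M=M(t)=\gamma\log t/\log\log t\to\infty$, where an exponent shrinking like $1/M$ would degenerate; if your estimate is meant to feed into that section you would need to upgrade it, most simply by replacing the near/far dichotomy with the propagator-cancelling bound (\ref{eq:thetabound}).
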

\begin{proof}{Theorem \ref{thm:simple}} \\
To prove this theorem we will first bound $\Delta Q_{\pi}$.
First we bound the term in the second line of Eq. (\ref{DeltaQ}).
We split this term as follows:
\begin{align}
\Theta^{K}(\alpha,\eta)\bar{\Theta}^{P} (\beta,\eta) -
\Theta^{K} (\omega_{\bar{n}})\bar{\Theta}^{P}(\omega_{\bar{n}})
 =&  \bar{\Theta}^{P} (\beta,\eta)
\sum_{l=0}^{K-1}
\Theta^{K-l-1}(\alpha,\eta) \Theta^{l} (\omega_{\bar{n}})
\left(\Theta(\alpha,\eta)-\Theta(\omega_{\bar{n}})\right) \notag \\
+ & 
 \Theta^{K} (\omega_{\bar{n}})
\sum_{l=0}^{P-1}
\bar{\Theta}^{P-l-1}(\beta,\eta) \bar{\Theta}^{l} (\omega_{\bar{n}})
\left(\bar{\Theta}(\beta,\eta)-\bar{\Theta}(\omega_{\bar{n}})\right)
 \notag
\end{align}
We bound the two parts as follows:
\begin{align}
& \left| 
\Theta^{K}(\alpha,\eta)\bar{\Theta}^{P} (\beta,\eta) -
\Theta^{K} (\omega_{\bar{n}})\bar{\Theta}^{P}(\omega_{\bar{n}})
\right|   \leq  A+ B\\ 
A&=\left|\bar{\Theta}^{P} (\beta,\eta)\right|
\sum_{l=0}^{K-1}
\left|\Theta^{K-l-1}(\alpha,\eta) \Theta^{l} (\omega_{\bar{n}})\right|
\left|\Theta(\alpha,\eta)-\Theta(\omega_{\bar{n}})\right|
\notag \\
&=K\left|\log\eta\right|^{P+K-1}
\left|\Theta(\alpha,\eta)-\Theta(\omega_{\bar{n}})\right|
\notag \\
&=\left|\log \eta\right|^{K+P}
\left|\Theta(\alpha,\eta)-\Theta(\omega_{\bar{n}})\right|
\notag \\
B&= \left|\Theta^{K} (\omega_{\bar{n}})\right|
\sum_{l=0}^{P-1}
\left|\bar{\Theta}^{P-l-1}(\beta,\eta) \bar{\Theta}^{l} (\omega_{\bar{n}})\right|
\left|\bar{\Theta}(\beta,\eta)-\bar{\Theta}(\omega_{\bar{n}})\right| \notag\\
&= C^{K}
\sum_{l=0}^{P-1}
\left|\log \eta\right|^{P-1}\left|\bar{\Theta}(\beta,\eta)-\bar{\Theta}(\omega_{\bar{n}})\right| \notag \\
&= \left|\log \eta\right|^{K+P}\left|\bar{\Theta}(\beta,\eta)-\bar{\Theta}(\omega_{\bar{n}})\right| \notag
\end{align}
Thus
\begin{align}
\left|\prod_{j=0}^{\bar{n}}\Theta^{k_{j}}(\alpha,\eta)\bar{\Theta}^{p_{j}}(\beta,\eta)-\prod_{j=0}^{\bar{n}}\Theta^{k_{j}}(\omega_{\bar{n}})\bar{\Theta}^{p_{j}}(\omega_{\bar{n}})\right| 
\leq &\left|\log\eta\right|^{\sum_{j}\left(k_{j}+p_{j}\right)}
\left(\left|\bar{\Theta}(\beta,\eta)-\bar{\Theta}(\omega_{\bar{n}})\right|+
\left|\Theta(\alpha,\eta)-\Theta(\omega_{\bar{n}})\right| \right) \notag \\
\leq &\left|\log\eta\right|^{\frac{n+m}{2}+1}\left(\left|\bar{\Theta}(\beta,\eta)-\bar{\Theta}(\omega_{\bar{n}})\right|+\left|\Theta(\alpha,\eta)-\Theta(\omega_{\bar{n}})\right|\right) \notag
\end{align}
where we have by Eq. (\ref{eq:kpnm})  $\sum_{j}\left(k_{j}+p_{j}\right)\leq\frac{n+m}{2}+1$.
Inserting this in Eq. (\ref{DeltaQ}) we obtain
\begin{align}
\left|\Delta Q_{\pi}\right| &= \left|\log\eta\right|^{\frac{n+m}{2}+1}\lambda^{n+m}
\int d\alpha d\beta 
\prod_{j=0}^{\bar{n}}\int d\omega_{j}  P_{0}(x,\omega_{\bar{n}})   \prod_{j=0}^{\bar{n}}\left(
\left|\frac{1}{\omega_{j}-\alpha -i\eta}\right|^{k_{j}+1}
\left|\frac{1}{\omega_{j}-\beta +i\eta} \right|^{p_{j}+1}\right)\notag \\
\times &\left( \left|\Theta(\alpha,\eta)-\Theta(\omega_{\bar{n}})\right| + \left|\bar{\Theta}(\beta,\eta)-\bar{\Theta}(\omega_{\bar{n}})\right|\right)
\end{align}
We will bound the part including $\left|\Theta(\alpha,\eta)-\Theta(\omega_{\bar{n}})\right|$ as the part with $\left|\bar{\Theta}(\beta,\eta)-\bar{\Theta}(\omega_{\bar{n}})\right|$ can be done analoguesly. We have then by using  inequality (\ref{eq:thetabound})
\begin{align}
\left|\Delta Q_{\pi}\right| \leq& \left|\log\eta\right|^{\frac{n+m}{2}+1}\lambda^{n+m}
\int d\alpha d\beta 
\prod_{j=0}^{\bar{n}}\int d\omega_{j}  P_{0}(x,\omega_{\bar{n}}) 
\prod_{j=0}^{\bar{n}}\left(
\left|\frac{1}{\omega_{j}-\alpha -i\eta}\right|^{k_{j}+1}
\left|\frac{1}{\omega_{j}-\beta +i\eta} \right|^{p_{j}+1}\right) \notag \\
\times &\left|\omega_{\bar{n}}-\alpha-i\eta\right| \left(\frac{1}{\left|1-\alpha-i\eta\right|}+\frac{1}{\left|1-\omega_{\bar{n}}\right|}+
\frac{1}{\left|\alpha+i\eta\right|}+\frac{1}{\left|\omega_{\bar{n}}\right|}\right)
\end{align}
We first bound the term including $\frac{1}{\left|1-\alpha-i\eta\right|}$ by using inequality (\ref{ine:two}) for the integrations over all $\omega_{j}$
except $\omega_{\bar{n}}$ and $\omega_{0}$.
\begin{align}
\mathcal{P}_{1} = &\left|\log\eta\right|^{\frac{n+m}{2}}\lambda^{n+m}
\int d\alpha d\beta 
\prod_{j=0}^{\bar{n}}\int d\omega_{j} P_{0}(x, \omega_{\bar{n}})  
 \prod_{j=0}^{\bar{n}}\left(
\left|\frac{1}{\omega_{j}-\alpha -i\eta}\right|^{k_{j}+1}
\left|\frac{1}{\omega_{j}-\beta +i\eta} \right|^{p_{j}+1}\right)
 \left|\frac{\omega_{\bar{n}}-\alpha-i\eta}{1-\alpha-i\eta}\right| \notag \\
\leq & \left|\log\eta\right|^{\frac{n+m}{2}}\lambda^{n+m}
\eta^{-\sum_{j=0}^{\bar{n}}\left(k_{j}+p_{j}+1\right)+2} \int d\alpha d\beta \int d\omega_{\bar{n}} d\omega_{0} \notag \\
\times & P_{0}(x, \omega_{\bar{n}})
 \left(
\left|\frac{1}{\omega_{0}-\alpha -i\eta}\right|
\left|\frac{1}{\omega_{0}-\beta +i\eta} \right|
\left|\frac{1}{\omega_{\bar{n}}-\alpha -i\eta}\right|
\left|\frac{1}{\omega_{\bar{n}}-\beta +i\eta} \right|
\left|\frac{\omega_{\bar{n}}-\alpha-i\eta}{1-\alpha-i\eta}\right|\right) \notag \\
\leq & \left|\log\eta\right|^{\frac{n+m}{2}}\lambda^{n+m}
 \eta^{-\sum_{j}\left(k_{j}+p_{j}+1\right)+2}
 \notag \\
\times & \int d\alpha d\beta \int d\omega_{\bar{n}} d\omega_{0}
P_{0}(x,\omega_{\bar{n}}) 
 \left(
\left|\frac{1}{\omega_{0}-\alpha -i\eta}\right|
\left|\frac{1}{\omega_{0}-\beta +i\eta} \right|
\left|\frac{1}{\omega_{\bar{n}}-\beta +i\eta} \right|
\left| \frac{1}{1-\alpha-i\eta}\right|\right) \notag 
\end{align}
By bounding succesivly the integrations over $\omega_{\bar{n}}$, $\beta$, $\omega_{0}$ and $\alpha$ by $\left|\log \eta\right|$ we obtain
\begin{align}
\mathcal{P}_{1}
\leq & \left|\log\eta\right|^{\frac{n+m}{2}+5}\lambda^{n+m}
 \eta^{-\sum_{j}\left(k_{j}+p_{j}+1\right)+2}
 \notag 
\end{align}
Setting $\eta=t^{-1}$ and using Eq. (\ref{eq:kpnm}) we obtain
in the van Hove limit ($\lambda^{2}t=T<\infty$)
\begin{align}
\mathcal{P}_{1}\leq & 
\left(CT\right)^{\frac{n+m}{2}}
\frac{\left(\log t\right)^{\frac{n+m}{2}+5}}{t} \notag 
\end{align}
We analyze the term including $\frac{1}{\left|\omega_{\bar{n}}\right|}$, which is the second type of term, by applying the same strategy as for $\mathcal{P}_{1}$
\begin{align}
 \mathcal{P}_{2}=& \left|\log\eta\right|^{\frac{n+m}{2}+1}\lambda^{n+m}
\int d\alpha d\beta 
\prod_{j=0}^{\bar{n}}\int d\omega_{j} P_{0}(x,\omega_{\bar{n}})  
 \prod_{j=0}^{\bar{n}}\left(
\left|\frac{1}{\omega_{j}-\alpha -i\eta}\right|^{k_{j}+1}
\left|\frac{1}{\omega_{j}-\beta +i\eta} \right|^{p_{j}+1}
\right)\left|\frac{\omega_{\bar{n}}-\alpha-i\eta }{\omega_{\bar{n}}}\right| \notag \\
\leq & \left|\log\eta\right|^{\frac{n+m}{2}+1}\lambda^{n+m}
 \eta^{-\sum_{j}\left(k_{j}+p_{j}+1\right)+2}\int d\alpha d\beta \int d\omega_{\bar{n}} d\omega_{0} 
\notag \\
\times &  P_{0}(x,\omega_{\bar{n}})
 \left(
\left|\frac{1}{\omega_{0}-\alpha -i\eta}\right|
\left|\frac{1}{\omega_{0}-\beta +i\eta} \right|
\left|\frac{1}{\omega_{\bar{n}}-\alpha -i\eta}\right|
\left|\frac{1}{\omega_{\bar{n}}-\beta +i\eta} \right|
 \left|\frac{\omega_{\bar{n}}-\alpha-i\eta}{\omega_{\bar{n}}}\right|\right) \notag \\
\leq &\left|\log\eta\right|^{\frac{n+m}{2}+1}\lambda^{n+m}
 \eta^{-\sum_{j}\left(k_{j}+p_{j}+1\right)+2} \int d\alpha d\beta \int d\omega_{\bar{n}} d\omega_{0}\notag \\
\times & P_{0}(x,\omega_{\bar{n}}) 
 \left(
\left|\frac{1}{\omega_{0}-\alpha -i\eta}\right|
\left|\frac{1}{\omega_{0}-\beta +i\eta} \right|
\left|\frac{1}{\omega_{\bar{n}}-\beta +i\eta} \right|
\left|\frac{1}{\omega_{\bar{n}}}\right|\right) \notag 
\end{align}
Once again we bound the integrations over $\alpha$, $\omega_{0}$ and $\beta$ by $\left|\log \eta\right|$ .Since $P_{0}(x,\omega_{\bar{n}})$ is taken to be zero around the edges $0$ and $1$ in some $\epsilon$ interval we have
\begin{align}
 \mathcal{P}_{2}\leq &\left|\log\eta\right|^{\frac{n+m}{2}+4}\lambda^{n+m}
 \eta^{-\sum_{j}\left(k_{j}+p_{j}+1\right)+2}
 \int d\omega_{\bar{n}} P_{0}(x,\omega_{\bar{n}})
\left| \frac{1}{\omega_{\bar{n}}}\right| \notag \\
\leq & \left|\log\eta\right|^{\frac{n+m}{2}+4}\lambda^{n+m}\epsilon^{-1}
 \eta^{-\sum_{j}\left(k_{j}+p_{j}+1\right)+2}
 \notag \\
\leq & \epsilon^{-1}
\left(CT\right)^{\frac{n+m}{2}}\frac{ \left|\log t\right|^{\frac{n+m}{2}+4}}{t}
 \notag 
\end{align}
The remaining terms can be analyzed in a similar manner.
Therefore we have 
\begin{align}
\left|\Delta Q_{\pi(n,m)}(t,\lambda,x)\right|\leq &\left(CT\right)^{\frac{n+m}{2}}
\frac{\left(\log t\right)^{\frac{n+m}{2}+5}}{t} \label{eq:DeltaQ}
\end{align}
From Eqs. (\ref{eq:PatM}), (\ref{eq:PMattilde2}) and (\ref{eq:DeltaQ}) we have then
\begin{align}
\lim_{t\rightarrow \infty}\left|\tilde{P}^{x,M}_{0,t}-P^{x,M}_{0,t}\right|\leq &
\lim_{t\rightarrow \infty}\sum_{n,m=0}^{M}\sum_{\pi(n,m)\in \mathcal{G}_{0}}\left|\Delta Q_{\pi(n,m)}(t,\lambda,x)\right| \notag \\
\leq &\lim_{t\rightarrow \infty}\sum_{n,m=0}^{M}\sum_{\pi(n,m)\in \mathcal{G}_{0}} \left(CT\right)^{\frac{n+m}{2}}
\frac{\left(\log t\right)^{\frac{n+m}{2}+5}}{t} \notag\\
=& 0 \notag 
\end{align}
\end{proof}
\section{EFFECTIVE EQUATION}
\label{sec:effective}
We will now derive the effective equations by calculating the Van Hove limit of $\tilde{P}^{x,M}_{0,t}$.
We recapitulate here our previous results.
In section \ref{sec:crossing} and \ref{sec:nested} we showed that  
\begin{align}
\lim_{N\rightarrow \infty} P^{x,M,N}_{2,t}=0 \\
\lim_{t\rightarrow \infty }\lim_{N\rightarrow \infty} P^{x,M,N}_{1,t}=0 
\end{align}
and in section \ref{sec:simple} we showed that in the Van Hove limit
\begin{align}
\lim_{t\rightarrow \infty}\lim_{N\rightarrow \infty}P_{0,t}^{x,M,N}&=
\lim_{t\rightarrow \infty}\lim_{N\rightarrow \infty}\tilde{P}_{0,t}^{x,M,N} \notag\\
= & \sum_{n,m=0}^{M}\sum_{\pi(n,m)\in \mathcal{G}_{0}}\lim_{t\rightarrow \infty}\tilde{Q}_{\pi(n,m)}(t,\lambda,x) \label{eq:QsimpleP}
\end{align} 
with $\tilde{Q}_{\pi(n,m)}(t,\lambda,x)$ defined through Eq. (\ref{eq:PMattilde2}).
Thus we have
\begin{align}
\lim_{M\rightarrow \infty}\lim_{t\rightarrow \infty}\lim_{N\rightarrow \infty}\mathbb{E}\left[P_{t}^{x,M,N}\right]&=\lim_{M\rightarrow \infty}\sum_{n,m=0}^{M}\sum_{\pi(n,m)\in \mathcal{G}_{0}}\lim_{t\rightarrow \infty}\tilde{Q}_{\pi(n,m)}(t,\lambda,x)
\end{align}
Since a simple graph is completely characterized by the numbers $\bar{n}$, $k_{j}$ and $p_{j}$ 
the sum over all simple graphs $\pi(n,m)$ is a sum over $\bar{n}$, $k_{j}$ and $p_{j}$ such that 
\begin{align}
2 \sum_{j=0}^{\bar{n}}\left(k_{j}+p_{j}+1\right)= & n+m+2 \leq 2M+2 \label{cond1}\\
\sum_{j=0}^{\bar{n}}\left(2p_{j}+1\right)= & n+1 \leq M+1\label{cond2}\\
\sum_{j=0}^{\bar{n}}\left(2k_{j}+1\right)= & m+1 \leq M+1\label{cond3}
\end{align}
The sum 
$\sum_{n,m=0}^{M}\sum_{\pi(n,m)\in \mathcal{G}_{0}} $ in Eq. (\ref{eq:QsimpleP})
is then a sum over $\bar{n}$, $k_{j}$ and $p_{j}$ such that the inequalities are satisfied.
We denote this as follows
\begin{align}
\sum_{n,m=0}^{M}\sum_{\pi(n,m)\in \mathcal{G}_{0}}= \sum_{\bar{n},\{k_{j},p_{j}\}=0}^{c} 
\end{align}
where the superscript $c$ refers to the fact that the conditions of Eqs. (\ref{cond1}), (\ref{cond2}) and (\ref{cond3}) have to be satisfied. In the limit $M\rightarrow \infty$ these conditions are always satisfied and so 
we get 
\begin{align}
\lim_{M \rightarrow\infty}P^{x,M}_{T}
=&\lim_{M\rightarrow \infty}\sum_{n,m=0}^{M}\sum_{\pi(n,m)\in \mathcal{G}_{0}}\lim_{t\rightarrow \infty}\tilde{Q}_{\pi(n,m)}(t,\lambda,x)
\\ 
=& 
\sum_{\bar{n},\{k_{j},p_{j}\}=0}^{\infty}\lim_{t\rightarrow \infty}\tilde{Q}(t,\lambda,x,\bar{n},\{k_{j},p_{j}\})
\end{align}
were $\tilde{Q}(t,\lambda,x,\bar{n},\{k_{j},p_{j}\})$ is given by Eq. (\ref{eq:PMattilde2}) when expressing $n$ and $m$ as a function of $\bar{n}$, $k_{j}$ and $p_{j}$.
\begin{align}
\tilde{Q}(t,\lambda,x,\bar{n},\{k_{j},p_{j}\})=& 
\lambda^{2\bar{n}+\sum_{j=0}^{\bar{n}}\left(k_{j}+p_{j}\right)}
\int d\alpha d\beta 
e^{-i(\alpha -\beta)t}e^{\eta t}\prod_{j=0}^{\bar{n}}\int d\omega_{j} P_{0}(x,\omega_{\bar{n}})  \notag \\
\times &
\prod_{j=0}^{\bar{n}}\left(
\left(\frac{1}{\omega_{j}-\alpha -i\eta}\right)^{k_{j}+1}
\left(\frac{1}{\omega_{j}-\beta +i\eta} \right)^{p_{j}+1} 
\Theta^{k_{j}}(\omega_{\bar{n}})\bar{\Theta}^{p_{j}}(\omega_{\bar{n}})\right) \end{align}
If the the rest, Eq.(\ref{eq:Rest}), vanishes in these limits then we have derived our solution in the limits considered. 
\begin{align}
\lim_{t\rightarrow \infty}\lim_{N\rightarrow \infty} \mathbb{E}\left[P^{x}_{t}\right]=\lim_{M \rightarrow\infty}P^{x,M}_{T}
\end{align}
By the identity
\begin{align}
\left( \frac{1}{\omega_{j}-\alpha-i\eta}\right)^{k_{j}+1}=\frac{i^{k_{j}+1}}{k_{j}!}\int dse^{-i(\omega_{j}-\alpha-i\eta)s}s^{k_{j}} \label{eq:IDomega}
\end{align}
 we get 
\begin{align}
\tilde{Q}&= \lambda^{2\bar{n}+\sum_{j=0}^{\bar{n}}\left(k_{j}+p_{j}\right)}
\int d\alpha d\beta 
e^{-i(\alpha -\beta)t}e^{2\eta t}\prod_{j=0}^{\bar{n}}\int d\omega_{j} P_{0}(x,\omega_{\bar{n}}) \label{Qdefsimple} \\
\times & \prod_{j=0}^{\bar{n}}\left(
\frac{i^{k_{j}+1}}{k_{j}!}\int ds_{j}e^{-i(\omega_{j}-\alpha-i\eta)s_{j}}s_{j}^{k_{j}}\Theta^{k_{j}}(\omega_{\bar{n}})\right)
\prod_{j=0}^{\bar{n}}\left(\frac{\left(-i\right)^{p_{j}+1}}{p_{j}!}\int d\tau_{j}e^{-i(\omega_{j}-\beta+i\eta)\tau_{j}}\tau_{j}^{p_{j}}
\bar{\Theta}^{p_{j}}(\omega_{\bar{n}})\right) \notag
\end{align}
We can sum up over each  $k_{j}$ and $p_{j}$ by grouping the $\lambda^2$, $s_{j}$ and $\Theta$. We obtain
\begin{align}
\tilde{Q}(t,\lambda,x,\bar{n})=&\sum_{k_{j},p_{j}=0}^{\infty}Q(t,\lambda,x,k_{j},p_{j},\bar{n})\notag \\
= &
\sum_{\bar{n}=0}^{\infty}\lambda^{2\bar{n}}
\prod_{j=0}^{\bar{n}}\int d\omega_{j}\int d\alpha d\beta 
e^{-i(\alpha -\beta)t}e^{\eta t} P_{0}(x,\omega_{\bar{n}}) 
 \prod_{j=0}^{\bar{n}}\left(
\int ds_{j}e^{-i(\omega_{j}-\alpha-i\eta)s_{j}}
e^{i s_{j}\lambda^{2}\Theta(\omega_{\bar{n}})}
\right) \notag \\
\times & \prod_{j=0}^{\bar{n}} \left(
\int d\tau_{j}e^{i(\omega_{j}-\beta+i\eta)\tau_{j}}
e^{-i\tau_{j} \lambda^{2}\bar{\Theta}(\omega_{\bar{n}})}
\right) \notag
\end{align}
Integrating over $\alpha$ and $\beta$ we get
\begin{align}
\tilde{Q}(t,\lambda,x,\bar{n})
= &
\lambda^{2\bar{n}}
\prod_{j=0}^{\bar{n}}\int d\omega_{j} P_{0}(x,\omega_{\bar{n}}) 
 \prod_{j=0}^{\bar{n}}\left(
\int ds_{j} \delta\left(t-\sum_{j}s_{j}\right)
e^{-i(\omega_{j})s_{j}}
e^{i s_{j}\lambda^{2}\Theta(\omega_{\bar{n}})}
\right) \notag \\
\times & \prod_{j=0}^{\bar{n}} \left(
\int d\tau_{j}
e^{i(\omega_{j})\tau_{j}}\delta\left(t-\sum_{j}\tau_{j}\right)
e^{-i\tau_{j} \lambda^{2}\bar{\Theta}(\omega_{\bar{n}})}
\right) \notag \\
=& 
\lambda^{2\bar{n}}
\prod_{j=0}^{\bar{n}}\int d\omega_{j} P_{0}(x,\omega_{\bar{n}})e^{i t\lambda^{2}\left(\Theta(\omega_{\bar{n}})-\bar{\Theta}(\omega_{\bar{n}})\right)} \notag \\
\times & \prod_{j=0}^{\bar{n}}\left(
\int ds_{j} \delta\left(t-\sum_{j}s_{j}\right)
e^{-i\omega_{j}s_{j}}
\right) 
 \prod_{j=0}^{\bar{n}} \left(
\int d\tau_{j}
e^{i\omega_{j}\tau_{j}}\delta\left(t-\sum_{j}\tau_{j}\right)
\right) \label{eq:Qtl} 
\end{align}
By the following change of variables
\begin{align}
 a_{j}=\frac{s_{j}+\tau_{j}}{2} \label{eq:a}\\
 b_{j}=\frac{s_{j}-\tau_{j}}{2} \label{eq:b}
\end{align}
we get
\begin{align}
(\ref{eq:Qtl})&= 
\lambda^{2\bar{n}}
\prod_{j=0}^{\bar{n}}\int d\omega_{j} P_{0}(x,\omega_{\bar{n}})e^{i t\lambda^{2}\left(\Theta(\omega_{\bar{n}})-\bar{\Theta}(\omega_{\bar{n}})\right)}
 \prod_{j=0}^{\bar{n}}\left(
\int_{0}^{t} da_{j} \delta\left(t-\sum_{j}a_{j}\right)
\int_{-a_{j}}^{a_{j}} db_{j}
e^{-i\omega_{j}b_{j}}\delta\left(\sum_{j}b_{j}\right)
\right)   \notag
\end{align}
and by the following change of variable and identity 
\begin{align}
 \alpha_{j}&=\lambda^{2}a_{j} \notag\\
  i\left(\Theta(\omega_{\bar{n}})-\bar{\Theta}(\omega_{\bar{n}})\right)&=2Im\left[\lim_{\eta\rightarrow\infty}\int d\omega\left(\frac{-1}{\omega-\omega_{\bar{n}}-i\eta}\right)\right] \notag \\
&= 2\pi
\end{align}
we obtain
\begin{align}
\tilde{Q}(t,\lambda,x,\bar{n})&= 
e^{-2\pi T}\prod_{j=0}^{\bar{n}}\int d\omega_{j} P_{0}(x,\omega_{\bar{n}}) 
 \left(\prod_{j=0}^{\bar{n}}
\int_{0}^{T} d\alpha_{j} \delta\left(T-\sum_{j}\alpha_{j}\right)
  \left( \prod_{j=0}^{\bar{n}}
\int_{-\alpha_{j}\frac{t}{T}}^{\alpha_{j}\frac{t}{T}} db_{j}
e^{-i\omega_{j}b_{j}}\delta\left(\sum_{j}b_{j}\right)
\right) \right)\notag \\ 
&= 
e^{-2\pi T}\prod_{j=0}^{\bar{n}}\int d\omega_{j} P_{0}(x,\omega_{\bar{n}}) 
 \left(\prod_{j=0}^{\bar{n}}
\int_{0}^{T} d\alpha_{j} \delta\left(T-\sum_{j}\alpha_{j}\right)
  \left( \prod_{j=0}^{\bar{n}-1}
\int_{-\alpha_{j}\frac{t}{T}}^{\alpha_{j}\frac{t}{T}} db_{j}
e^{-i\left(\omega_{j}-\omega_{\bar{n}}\right)b_{j}}\chi(b_{j})
\right) \right)\label{eq:almost} 
\end{align}
Where 
\begin{align}
 \chi(b_{j})=\left\{ \begin{array}{ll}
 1 & \text{if $-\alpha_{j}\frac{t}{T}<b_{j}<\alpha_{j}\frac{t}{T}$ and $-\alpha_{0}\frac{t}{T}<\sum_{j=0}^{\bar{n}-1}b_{j}<\alpha_{0}\frac{t}{T}$}\\
0 & \text{else} 
\end{array} \right.\notag
\end{align}
We have then in the limit $t\rightarrow\infty$
\begin{align}
\lim_{t\rightarrow \infty}\prod_{j=0}^{\bar{n}}\int d\omega_{j} P_{0}(x,\omega_{\bar{n}})   
\left( \prod_{j=0}^{\bar{n}-1}
\int_{-\alpha_{j}\frac{t}{T}}^{\alpha_{j}\frac{t}{T}} db_{j}
e^{-i\left(\omega_{j}-\omega_{\bar{n}}\right)b_{j}}\chi(b_{j})
\right)&=\prod_{j=0}^{\bar{n}}\int d\omega_{j} P_{0}(x,\omega_{\bar{n}})   
\prod_{j=0}^{\bar{n}-1} 2\pi\delta(\omega_{j}-\omega_{\bar{n}}) \notag\\
&=\left(2\pi\right)^{\bar{n}}\int d\omega_{\bar{n}} P_{0}(x,\omega_{\bar{n}})
\label{eq:deltaomega} 
\end{align}
Inserting Eq. (\ref{eq:deltaomega}) in Eq. (\ref{eq:almost}) we get 
\begin{align}
\tilde{Q}(T,x,\bar{n})&=\lim_{t \rightarrow \infty}\tilde{Q}(t,\lambda,x,\bar{n}) \notag \\
&= 
e^{-2\pi T}\int d\omega_{\bar{n}} P_{0}(x,\omega_{\bar{n}})  \frac{\left(2\pi T \right)^{\bar{n}}}{\bar{n}!}
 \notag 
\end{align}
We note the following about  $P_{0}(x,\omega_{\bar{n}})$. From Eq. (\ref{cond2}) we see that if $n$ is even then $\bar{n}$ is even and if $n$ is odd so must be $\bar{n}$.
According to Eq. (\ref{eq:Px0}) $P_{0}(x,\omega_{\bar{n}})=P^{x}_{0}(\omega_{\bar{n}})$ if $n$ is even and so also if $\bar{n}$ is even.
$P_{0}(x,\omega_{\bar{n}})=P^{\bar{x}}_{0}(\omega_{\bar{n}})$ if $n$ is odd or if equivalently if $\bar{n}$ is odd. 
Depending on whether $ \bar{n}$ is odd or even  we have
\begin{align}
 \int d\omega_{\bar{n}} P_{0}(x,\omega_{\bar{n}})=\left\{ \begin{array}{ll}
P_{0}(x) & \text{if $\bar{n}$ even}\\
P_{0}(\bar{x}) & \text{if $\bar{n}$ odd}
\end{array} \right.
\end{align}
And so 
\begin{align}
P^{x}_{T}= & \sum_{\bar{n}=0}^{\infty}Q(T,x,\bar{n}) \notag\\
=&e^{-2\pi T}\left(P_{0}(x)\sum_{\bar{n}=0}^{\infty}  \frac{\left(2\pi T \right)^{2\bar{n}}}{(2\bar{n})!}+
P_{0}(\bar{x})\sum_{\bar{n}=0}^{\infty}  \frac{\left(2\pi T \right)^{2\bar{n}+1}}{(2\bar{n}+1)!}
\right) \notag \\
=& \frac{P_{0}(x)+P_{0}(\bar{x})}{2}+\left(P_{0}(x)-
P_{0}(\bar{x})\right)\frac{e^{-4\pi T}}{2}
\end{align}
These solutions satisfy the following rate equations 
\begin{align}
 \frac{d}{dT}P^{1}_{T}&=-4\pi\left(P^{1}_{T}-P^{2}_{T}\right) \\
 \frac{d}{dT}P^{2}_{T}&=-4\pi\left(P^{2}_{T}-P^{1}_{T}\right)
\end{align}
\section{ANALYSIS OF THE ERROR}
\label{sec:error1}
In this section we analyze the error term, Eq. (\ref{eq:Rest}). By the form of Eq. (\ref{eq:Rest}) we see that if the norm of $| \phi^{M+1}_{t} \rangle$ vanishes in the limits considered
\begin{align}
\lim_{M\rightarrow \infty}\lim_{t\rightarrow \infty}\lim_{N\rightarrow \infty}\mathbb{E}\left[\langle \phi^{M+1}_{t}| \phi^{M+1}_{t} \rangle\right]=0 \label{eq:errorlimit}
\end{align}
then the error term will also vanish and this is what we will show.
Up to now we have expanded our solution up to the $M^{th}$ term and derived the equation it would follow when taking $M\rightarrow \infty$. Thus we shall prove that when taking $M\rightarrow \infty$ for the error term this one vanishes. In order to do this we will expand the error term until the $M(t)^{th}$ order, where $M(t)$ now depends on $t$ which is scaled with the coupling constant. 
\begin{align}
 | \phi^{M+1}_{t} \rangle&= \sum_{n=M+1}^{M(t)}|\psi_{t}^{n}\rangle+ | \phi^{M(t)+1}_{t} \rangle \notag \\
&= |\tilde{\phi}_{t}^{M,M(t)}\rangle+ | \phi^{M(t)+1}_{t} \rangle \label{eq:twophi}
\end{align}
To prove Eq. (\ref{eq:errorlimit}) we shall prove that the norm of the two terms in Eq. (\ref{eq:twophi}) vanish.
\begin{align}
\lim_{M\rightarrow \infty}\lim_{t\rightarrow \infty}\lim_{N\rightarrow \infty}\mathbb{E}\left[\langle \tilde{\phi}^{M,M(t)}_{t}| \tilde{\phi}^{M,M(t)}_{t} \rangle\right]&=0 \label{eq:MMt}\\
\lim_{M\rightarrow \infty}\lim_{t\rightarrow \infty}\lim_{N\rightarrow \infty}\mathbb{E}\left[\langle \phi^{M(t)+1}_{t}| \phi^{M(t)+1}_{t} \rangle\right]&=0 \label{eq:Mt}
\end{align}
We focus now on proving Eq. (\ref{eq:MMt}). Since $|\tilde{\phi}^{M,M(t)}_{t} \rangle$ 
is a sum of $|\psi_{t}^{n}\rangle$ vectors we can write it down as a function of $Q_{\pi(n,m)}$ function.
 We already have some usefull bounds on the different type of graphs.
We will use the bounds of  Eq. (\ref{eq:Qnestedbound}) on nested graphs and also Eq. (\ref{eq:DeltaQ}) for a part of the simple graphs.
We will thus look to bound the remaining part of simple graphs.
Thus we turn to bound $\tilde{Q}_{\pi(n,m)}$ from Eq. (\ref{eq:PMattilde2}). 
\begin{thm}
\label{thm:Qtildenm}
For simple graphs we have the following bound for $\tilde{Q}_{\pi(n,m)}(t,\lambda,x)$
defined in Eq. (\ref{eq:PMattilde2}):
 \begin{align}
\left|  \tilde{Q}_{\pi(n,m)}(t,\lambda,x)\right|\leq \frac{\left(C\lambda^{2}t\right)^{\frac{n+m}{2}}}{\left(\frac{n+m}{2}!\right)^{a}}
 \end{align}
with $a<\frac{1}{2}$.
\end{thm}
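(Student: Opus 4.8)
\textbf{Proof proposal for Theorem~\ref{thm:Qtildenm}.}
The plan is to estimate $\tilde Q_{\pi(n,m)}$ in the time‑integral representation obtained from Eq.~(\ref{eq:PMattilde2}) by inserting identity~(\ref{eq:IDomega}) for each of the $\bar n+1$ left propagator powers $(\omega_j-\alpha-i\eta)^{-(k_j+1)}$ and $\bar n+1$ right powers $(\omega_j-\beta+i\eta)^{-(p_j+1)}$. Each such block becomes an integral over a time variable $s_j$ (resp.\ $\tau_j$) carrying a weight $s_j^{k_j}/k_j!$ (resp.\ $\tau_j^{p_j}/p_j!$) --- exactly the manipulation leading to Eq.~(\ref{Qdefsimple}). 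Integrating out $\alpha,\beta$ replaces $e^{-i(\alpha-\beta)t}$ by $\delta(t-\sum_j s_j)\,\delta(t-\sum_j\tau_j)$ and absorbs $e^{2\eta t}$. The factors $1/k_j!$ and $1/p_j!$ are now explicit, and the remaining time integrations will run over simplices whose volumes supply further inverse factorials.

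Next I pass to $a_j=(s_j+\tau_j)/2$, $b_j=(s_j-\tau_j)/2$ as in~(\ref{eq:a})--(\ref{eq:b}): the constraints become $\sum_j a_j=t$, $\sum_j b_j=0$, $|b_j|\le a_j$, and the phase collapses to $\prod_j e^{-2i\omega_j b_j}$. I then integrate the $\omega$'s: for $j\neq\bar n$ the $\omega_j$‑integral produces the kernel $\int_0^1 e^{-2i\omega_j b_j}\,d\omega_j$, and for $j=\bar n$ it produces the Fourier transform at $2b_{\bar n}$ of $P_0(x,\cdot)\prod_{j'}\Theta^{k_{j'}}\bar\Theta^{p_{j'}}$ --- which, because $P_0$ vanishes in an $\epsilon$‑neighbourhood of the edges so that the logarithmic edge singularities of $\Theta$ are cut off, is the transform of a smooth compactly supported function whose relevant norms are bounded by $C^{\sum_j(k_j+p_j)}$ (on the support of $P_0$ one has $|\Theta|\le C$). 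The decisive technical point is to bound the $b_j$‑integrals for $j<\bar n$ against the polynomial prefactors $(a_j\pm b_j)^{k_j},(a_j\pm b_j)^{p_j}$ by exploiting the \emph{oscillation} of the $\omega$‑kernel, i.e.\ the uniform bound $\sup_{A}\bigl|\int_0^A u^{-1}\sin u\,du\bigr|<\infty$, rather than the mere size $|\!\int_0^1 e^{-2i\omega_j b_j}\,d\omega_j|\le\min(1,|b_j|^{-1})$; the latter would cost a factor $\log t$ per level, while the oscillatory estimate keeps every $b_j$‑integral bounded by a constant times the appropriate monomial in $a_j$, with no extra power of $a_j$ and no logarithm, uniformly in $\omega_{\bar n}$. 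Eliminating $b_{\bar n}=-\sum_{j<\bar n}b_j$ couples the $b_j$'s through the $j=\bar n$ factor; expanding that factor into monomials in $(b_j)_{j<\bar n}$ only introduces multinomial coefficients $\le 3^{\,n+m}$, which are harmless. What survives is an integral over the simplex $\{\sum_j a_j=t,\;a_j\ge0\}$ of a monomial $\prod_j a_j^{c_j}$ with $\sum_j c_j+\bar n=\tfrac{n+m}{2}$ (this uses~(\ref{eq:kpnm})--(\ref{eq:kpnm2})), equal by the Dirichlet formula to $t^{(n+m)/2}\,\prod_j c_j!\big/\bigl(\tfrac{n+m}{2}\bigr)!$. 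Equivalently one may work directly in the $\alpha$‑representation~(\ref{eq:PMattilde2}) and carry out the $\omega$‑ then $\alpha,\beta$‑integrations with inequalities~(\ref{ine:two}), (\ref{eq:4kintegrals}), (\ref{eq:thetabound}) and $\eta=t^{-1}$; the accounting is the same.

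Assembling the pieces gives $|\tilde Q_{\pi(n,m)}(t,\lambda,x)|\le C^{\,n+m}\lambda^{n+m}t^{(n+m)/2}\bigl(\tfrac{n+m}{2}!\bigr)^{-1}\prod_j\frac{c_j!}{k_j!\,p_j!}$, and the combinatorial factor $\prod_j\frac{c_j!}{k_j!\,p_j!}$ reorganizes (after absorbing the multinomial coefficients above) into a product of binomial/multinomial coefficients bounded by $C'^{\,n+m}$. Since $\lambda^{n+m}t^{(n+m)/2}=(\lambda^2t)^{(n+m)/2}$, this yields a bound of the stated shape --- in fact with the full inverse factorial, so the claimed inequality holds a fortiori for any $a\le1$. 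The main obstacle, and where the estimate must be done with care, is exactly the $b_j$‑integrals: one has to verify that the oscillation of the $\omega$‑kernels, together with the smoothness and compact support of $P_0$ away from the edges, genuinely removes both the spurious extra $t^{\bar n}$ and any stray logarithms, uniformly in $\omega_{\bar n}$ and in the $a_j$'s. It is worth noting that any $a>0$ already suffices for the use of this bound in Section~\ref{sec:error1}: the number of simple graphs of order $n+m$ is at most $c^{\,n+m}$, so $\sum_{k}(c^2CT)^k/(k!)^a<\infty$ for every $a>0$, which is what forces the tail $n,m\ge M+1$ of $\langle\tilde\phi^{M,M(t)}_t|\tilde\phi^{M,M(t)}_t\rangle$ to vanish as $M\to\infty$.
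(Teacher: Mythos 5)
Your route is genuinely different from the paper's, and it is worth saying what the paper actually does: it never tries to extract the full inverse factorial. It writes the $\alpha,\beta$--integral $I$ in Eq.~(\ref{eq:PMattilde2}) as $|I|\le|I|^{1-a}|I|^{a}$, bounds $|I|^{a}$ in the time representation (the simplex volume supplies $t^{(\sum_j(k_j+p_j)+2\bar n)a}/\bigl(\tfrac{n+m}{2}!\bigr)^{a}$), and bounds $|I|^{1-a}$ in the $\alpha$--representation with absolute values, using (\ref{eq:oneminusa}) and (\ref{eq:ABdelta}) with exponent $\delta=2(1-a)$; the powers of $\eta^{-1}=t$ from the two pieces recombine into exactly $t^{(n+m)/2}$ via (\ref{eq:kpnm})--(\ref{eq:kpnm2}). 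The restriction $a<\tfrac12$ is forced by needing $\delta=2(1-a)>1$ in (\ref{eq:oneminusa}); it is an artifact of the interpolation, not of the graph combinatorics. This also shows why your closing remark --- that one may ``equivalently'' work directly in the $\alpha$--representation with (\ref{ine:two}), (\ref{eq:4kintegrals}) and get ``the same accounting'' --- is wrong: absolute-value bounds on propagators in the $\alpha$--representation produce powers of $t$ but no inverse factorial at all, which is precisely why the paper has to interpolate against the $t$--representation.

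The gap in your argument is the step you yourself flag as ``the main obstacle'': the bound on the coupled $b_j$--integrals is asserted, not proven. Concretely, (i) after eliminating $b_{\bar n}=-\sum_{j<\bar n}b_j$, the factor coming from the $j=\bar n$ level (both the Fourier transform of $P_0\prod\Theta^{k}\bar\Theta^{p}$ at $2b_{\bar n}$ and the prefactor $(a_{\bar n}+b_{\bar n})^{k_{\bar n}}(a_{\bar n}-b_{\bar n})^{p_{\bar n}}$) couples all the $b_j$'s, so the one-dimensional oscillatory estimate $\sup_A\bigl|\int_0^A u^{-1}\sin u\,du\bigr|<\infty$ does not apply level by level without first decoupling, and the decoupling is exactly what costs you; (ii) your claim that the resulting multinomial coefficients are ``$\le 3^{n+m}$'' is false as stated --- expanding $(a_{\bar n}-\sum_{j<\bar n}b_j)^{k_{\bar n}}$ produces coefficients whose sum is $(\bar n+1)^{k_{\bar n}}$, which is not $C^{n+m}$ uniformly in $\bar n,k_{\bar n}$, and whether the loss is absorbed by the Dirichlet factorial together with the $1/(k_{\bar n}!\,p_{\bar n}!)$ from (\ref{eq:IDomega}) requires a careful resummation you do not perform; (iii) the cases $\beta_j=0$ genuinely need the signed oscillatory integral (the absolute-value kernel gives $\log a_j$ per level, i.e.\ $(\log t)^{\bar n}$, which would ruin the $a=1$ claim). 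None of these is obviously fatal --- I believe a version of your estimate can be pushed through, and it would give a stronger conclusion than the paper's ($a$ up to $1$ rather than $a<\tfrac12$) --- but as written the decisive inequality $|\tilde Q_{\pi(n,m)}|\le C^{n+m}(\lambda^2t)^{(n+m)/2}/\bigl(\tfrac{n+m}{2}!\bigr)$ is a conjecture supported by a plausible plan, not a proof. (Your final observation that any $a>0$ suffices for the application in Section~\ref{sec:error1} is correct and is a useful sanity check on why the paper is content with $a=\tfrac14$.)
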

\begin{proof}{Theorem \ref{thm:Qtildenm}}\\
From Eq. (\ref{eq:PMattilde2}) we have 
\begin{align}
&\left|\tilde{Q}_{\pi(n,m)} (t,\lambda,x)\right|\notag \\
= &\left|\lambda^{n+m}
\prod_{j=0}^{\bar{n}}\int_{0}^{1} d\omega_{j}\int_{-\infty}^{\infty} d\alpha d\beta 
e^{-i(\alpha -\beta)t}e^{\eta t} P_{0}(x,\omega_{\bar{n}})
\Theta^{k_{j}}(\omega_{\bar{n}})\bar{\Theta}^{p_{j}}(\omega_{\bar{n}})
\prod_{j=0}^{\bar{n}}\left(
\left(\frac{1}{\omega_{j}-\alpha -i\eta}\right)^{k_{j}+1}
\left(\frac{1}{\omega_{j}-\beta +i\eta} \right)^{p_{j}+1} 
\right)\right| \notag  \\
\leq &\lambda^{n+m}
\prod_{j=0}^{\bar{n}}\int_{0}^{1} d\omega_{j}
\left|\int_{-\infty}^{\infty} d\alpha d\beta 
e^{-i(\alpha -\beta)t}e^{\eta t} P_{0}(x,\omega_{\bar{n}})
\prod_{j=0}^{\bar{n}}\left(
\left(\frac{1}{\omega_{j}-\alpha -i\eta}\right)^{k_{j}+1}
\left(\frac{1}{\omega_{j}-\beta +i\eta} \right)^{p_{j}+1} 
\right)\right|^{1-a} \notag  \\
\times &\left|\int_{-\infty}^{\infty} d\alpha d\beta 
e^{-i(\alpha -\beta)t}e^{\eta t} P_{0}(x,\omega_{\bar{n}})
\prod_{j=0}^{\bar{n}}\left(
\left(\frac{1}{\omega_{j}-\alpha -i\eta}\right)^{k_{j}+1}
\left(\frac{1}{\omega_{j}-\beta +i\eta} \right)^{p_{j}+1} 
\right)\right|^{a} \left|\Theta^{k_{j}}(\omega_{\bar{n}})\bar{\Theta}^{p_{j}}(\omega_{\bar{n}})\right|
\label{eq:Qerror}
\end{align}
With the following relations fulfilled:
\begin{align}
\sum_{j=0}^{\bar{n}}\left(k_{j}+p_{j}+2\right)+n'&=n+m+2 \notag \\
\bar{n}+1+n'&=\frac{n+m}{2}+1 \notag
\end{align}
Therefore we have $\sum_{j=0}^{\bar{n}}\left(k_{j}+p_{j}+1\right)=\frac{n+m}{2}+1$.
 The $\Theta$ functions are bounded by a constant and so have no importance.
Since we have 
\begin{align}
\left|\int_{-\infty}^{\infty} d\alpha 
e^{-i\alpha t}e^{\eta t}
\prod_{j=0}^{\bar{n}}\left(
\left(\frac{1}{\omega_{j}-\alpha -i\eta}\right)^{k_{j}+1}
\right)\right|\leq \frac{t^{\sum_{j=0}^{\bar{n}}k_{j}+\bar{n}}}{\left(\sum_{j=0}^{\bar{n}}k_{j}+\bar{n}\right)!}
\end{align}
 we can easily see that the following bound holds
\begin{align}
 & \left|\int_{-\infty}^{\infty} d\alpha d\beta 
e^{-i(\alpha -\beta)t}e^{\eta t} P_{0}(x,\omega_{\bar{n}})
\prod_{j=0}^{\bar{n}}\left(
\left(\frac{1}{\omega_{j}-\alpha -i\eta}\right)^{k_{j}+1}
\left(\frac{1}{\omega_{j}-\beta +i\eta} \right)^{p_{j}+1} 
\right)\right|^{a} & \notag \\
\leq & \frac{\left(t\right)^{\left(\sum_{j}\left(k_{j}+p_{j}\right)+2\bar{n}\right)a}}{
\left(\left(\sum_{j=0}^{\bar{n}}k_{j}+\bar{n}\right)!\left(\sum_{j=0}^{\bar{n}}p_{j}+\bar{n}\right)!\right)^{a}} \notag \\
\leq & \frac{\left(Ct\right)^{\left(\sum_{j}\left(k_{j}+p_{j}\right)+2\bar{n}\right)a}}{
\left(\sum_{j}\left(k_{j}+p_{j}\right)+2\bar{n}\right)!^{a}} \notag \\
\leq & \frac{\left(Ct\right)^{\left(\sum_{j}\left(k_{j}+p_{j}\right)+2\bar{n}\right)a}}{
\left(\frac{n+m}{2}\right)!^{a}} \label{eq:powera}
\end{align}
We use this to bound the second line of Eq. (\ref{eq:Qerror}).
We bound the remaining in Eq. (\ref{eq:Qerror}) by bounding the integrals over $\omega_{j}$ as follows.
\begin{align}
&\left|\lambda^{n+m}
\prod_{j=0}^{\bar{n}}\int_{0}^{1} d\omega_{j}
\left(\int_{-\infty}^{\infty} d\alpha d\beta 
e^{-i(\alpha -\beta)t}e^{\eta t} P_{0}(x,\omega_{\bar{n}})
\prod_{j=0}^{\bar{n}}\left(
\left(\frac{1}{\omega_{j}-\alpha -i\eta}\right)^{k_{j}+1}
\left(\frac{1}{\omega_{j}-\beta +i\eta} \right)^{p_{j}+1} 
\right)\right)\right|^{1-a} \notag  \\
\leq &\lambda^{n+m}
\prod_{j=0}^{\bar{n}}\int_{0}^{1} d\omega_{j}
\int_{-\infty}^{\infty} d\alpha d\beta
\prod_{j=0}^{\bar{n}}
\left|\frac{1}{\omega_{j}-\alpha -i\eta}\right|^{\left(k_{j}+1\right)\left(1-a\right)} 
\left|\frac{1}{\omega_{j}-\beta +i\eta} \right|^{\left(p_{j}+1\right)\left(1-a\right)} \label{eq:nextlineone}
\end{align}
We bound $\left|\frac{1}{\omega_{j}-\alpha -i\eta}\right|^{\left(k_{j}\right)\left(1-a\right)} 
\left|\frac{1}{\omega_{j}-\beta +i\eta} \right|^{\left(p_{j}\right)\left(1-a\right)}$ 
by $\left(\frac{1}{\eta}\right)^{\left(k_{j}+p_{j}\right)(1-a)}$ and for $j\geq 2$ we use Eq. (\ref{eq:oneminusa}) to bound the integration over $\omega_{j}$ of the remaining propagators. That is in applying Eq. (\ref{eq:oneminusa})in this case we have using $\delta=2\left(1-a\right)$ in Eq. (\ref{eq:oneminusa}).
\begin{align}
(\ref{eq:nextlineone})\leq &\lambda^{n+m}
\int d\omega_{0}d\omega_{1}
\int_{-\infty}^{\infty} d\alpha d\beta
\left|\frac{1}{\omega_{0}-\alpha -i\eta}\right|^{1-a} 
\left|\frac{1}{\omega_{0}-\beta +i\eta} \right|^{1-a} 
\left|\frac{1}{\omega_{1}-\alpha -i\eta}\right|^{1-a} 
\left|\frac{1}{\omega_{1}-\beta +i\eta} \right|^{1-a} \notag \\
\times &\left(\frac{1}{\eta}\right)^{\left(k_{0}+p_{0}+k_{1}+p_{1}\right)\left(1-a\right)}
\prod_{j=2}^{\bar{n}}
\left(\frac{1}{\eta}\right)^{\left(k_{j}+p_{j}+2\right)\left(1-a\right)-1} \notag 
\end{align}
For $\omega_{0}$ and $\omega_{1}$ we use first Eq. (\ref{eq:ABdelta}) and then 
Eq. (\ref{eq:oneminusa})
\begin{align}
(\ref{eq:nextlineone})\leq &\lambda^{n+m}\left(\frac{1}{\eta}\right)^{1-2a}
\left(\frac{1}{\eta}\right)^{\left(k_{0}+p_{0}+k_{1}+p_{1}\right)\left(1-a\right)}
\prod_{j=2}^{\bar{n}}
\left(\frac{1}{\eta}\right)^{\left(k_{j}+p_{j}+2\right)\left(1-a\right)-1} \notag \\
\leq &\lambda^{n+m}
\left(\frac{1}{\eta}\right)^{-a\left(\sum_{j=0}^{\bar{n}}\left(k_{j}+p_{j}\right)+2\bar{n}\right)+\sum_{j}\left(k_{j}+p_{j}+1\right)-1} \label{eq:powera2}
\end{align}
Combining this estimate with Eq. (\ref{eq:powera}) in Eq. (\ref{eq:Qerror}) we get 
\begin{align}
\left|\tilde{Q}_{\pi(n,m)} (t,\lambda,x)\right|\leq \frac{\left(C\lambda^2t\right)^{\frac{n+m}{2}}}{\frac{n+m}{2}!^{a}}\label{eq:Qboundeda}
\end{align}
\end{proof}
We will now bound
$
\mathbb{E}\left[\langle\tilde{\phi}^{M,M(t)}_{t}|\tilde{\phi}^{M,M(t)}_{t}\rangle\right] 
$
with
\begin{align}
|\tilde{\phi}^{M,M(t)}_{t}\rangle= \sum_{n=M+1}^{M(t)}|\psi_{t}^{n}\rangle
\end{align}
We have 
\begin{align}
\mathbb{E}\left[\langle\tilde{\phi}^{M,M(t)}_{t}|\tilde{\phi}^{M,M(t)}_{t}\rangle\right]&=\sum_{n,m=M+1}^{M(t)}\mathbb{E}\left[\langle \psi^{m}_{t}|\sum_{x}\hat{P}^{x}|\psi^{n}_{t}\rangle\right] \notag \\
&=\sum_{n,m=M+1}^{M(t)}
\sum_{\pi(n,m)}
\sum_{x=1,2} Q^{N}_{\pi(n,m)}(t,\lambda,x) \notag
\end{align}
Similar to the proof in (\ref{lem:CrossContribution}) we have that the contributions from crossing graphs will vanish. Thus we obtain
\begin{align}
&\lim_{N\rightarrow\infty}\mathbb{E}\left[\langle\tilde{\phi}^{M,M(t)}_{t}|\tilde{\phi}^{M,M(t)}_{t}\rangle\right] 
=\sum_{n,m=M+1}^{M(t)}\left(\sum_{\pi(n,m)\in \mathcal{G}_{0}}\sum_{x}Q_{\pi(n,m)} (t,\lambda,x)
+\sum_{\pi(n,m)\in \mathcal{G}_{1}}\sum_{x}Q_{\pi(n,m)} (t,\lambda,x) \right) \notag \\
=&\sum_{n,m=M+1}^{M(t)}\left(
\sum_{\pi(n,m)\in \mathcal{G}_{0}}\sum_{x}\left(\tilde{Q}_{\pi(n,m)} (t,\lambda,x)+\Delta Q_{\pi(n,m)} (t,\lambda,x) \right)+\sum_{\pi(n,m)\in \mathcal{G}_{1}}\sum_{x}Q_{\pi(n,m)} (t,\lambda,x)\right) \notag 
\end{align}
Using the bounds of Eqs. (\ref{eq:Qnestedbound}), (\ref{eq:DeltaQ}) and  (\ref{eq:Qboundeda}) we get
\begin{align}
\lim_{N\rightarrow\infty}\mathbb{E}\left[\langle\tilde{\phi}^{M,M(t)}_{t}|\tilde{\phi}^{M,M(t)}_{t}\rangle\right] \leq&\sum_{n,m=M+1}^{M(t)}\left((CT)^{\frac{n+m}{2}}\frac{\log^{\frac{n+m}{2}+2}(t)}{t} + \frac{\left(CT\right)^{\frac{n+m}{2}}}{\left(\frac{n+m}{2}\right)!^{a}} +(CT)^{\frac{n+m}{2}}\frac{\log^{\frac{n+m}{2}+5}(t)}{t}\right) \notag \\
\leq&\sum_{n,m=M+1}^{M(t)}\left(  \frac{\left(C\lambda^{2}t\right)^{\frac{n+m}{2}}}{\left(\frac{n+m}{2}\right)!^{a}}\right) 
+(CT)^{M(t)}\frac{\log^{M(t)+5}(t)}{t} M(t)^{2}
\end{align}
We choose $M(t)=\gamma\frac{\log(t)}{\log\log t}$ with $\gamma <1$ and take $a=\frac{1}{4}$. 
We also set $\log t=x$. We have then 
\begin{align}
\lim_{N\rightarrow \infty}
\mathbb{E}\left[\langle\tilde{\phi}^{M,M(t)}_{t}|\tilde{\phi}^{M,M(t)}_{t}\rangle\right] \leq &
 \sum_{n,m=M+1}^{\infty}\frac{\left(CT\right)^{\frac{n+m}{2}}}{\frac{n+m}{2}!^{\frac{1}{4}}}+\frac{\gamma^{2}x^2}{\log^{2}x}\frac{x^{\gamma\frac{x}{\log x}+5}}{e^{x}}\notag\\
\end{align}
For large enough $M$ we have then 
\begin{align}
\lim_{N\rightarrow \infty} \mathbb{E}\left[\langle\tilde{\phi}^{M,M(t)}_{t}|\tilde{\phi}^{M,M(t)}_{t}\rangle\right] \leq &
\frac{\left(CT\right)^{M}}{M!^{\frac{1}{4}}}
+\frac{\gamma^{2}x^2}{\log^{2}x}
e^{\gamma x -x+4\log x}\notag\\
\end{align}
and so 
\begin{align}
\lim_{M\rightarrow \infty }\lim_{x\rightarrow \infty}\lim_{N\rightarrow \infty}\mathbb{E}\left[\langle\tilde{\phi}^{M,M(t)}_{t}|\tilde{\phi}^{M,M(t)}_{t}\rangle\right] =&0
\end{align}
What we have left to bound is the average of $|\phi_{t}^{M(t)}\rangle$.
We shall drop the $t$ dependency of $M$  for now.
According to Eq. (\ref{eq:gammatilde}) we have
\begin{align}
 |\phi^{M+1}_{t}\rangle &= \tilde{\Gamma}_{M+1}(t)|\psi_{0}\rangle \notag \\
&=-i\lambda \int_{0}^{t} d s e^{-iH(t-s)}V|\psi_{s}^{M}\rangle
\end{align}
We will follow \cite{EY00} in bounding this term. That is we will divide the time integration in $\kappa$ parts, where $\kappa$ will eventually depend on $t$, and expand each piece of the time integrations once again using the Duhamel formula, Eq. (\ref{eq:duhamel}). We will thus extract again a term which is a succession of free evolutions and one which will depend on the whole evolution.
We have then 
\begin{align}
 |\phi^{M+1}_{t}\rangle 
&=-i\lambda\sum_{j=1}^{\kappa}e^{-i\left(t-\theta_{j+1}\right)H} \int_{\theta_{j}}^{\theta_{j+1}} d s e^{-iH(\theta_{j+1}-s)}V|\psi_{s}^{M}\rangle
\end{align}
Where 
\begin{align}
 \theta_{j}&=\frac{j t}{\kappa} \label{eq:thetaj} \\
 \theta_{j+1}-\theta_{j}&=\frac{t}{\kappa} \label{eq:thetajdif}
\end{align}
We have the following expansion for each $e^{-i(\theta_{j}-s)H}$ from Eq. (\ref{eq:gammaprop}):
\begin{align}
e^{-i(\theta_{j}-s)H}=\sum_{n=0}^{M_{0}}\left(-i\lambda\right)^{n}\Gamma_{n}(\theta_{j}-s)+\tilde{\Gamma}_{M_{0}+1}(\theta_{j}-s) 
\end{align}
and so 
\begin{align}
 |\phi^{M+1}_{t}\rangle &= |\psi^{1}_{M,M_{0},\kappa}(t)\rangle+|\psi^{2}_{M,M_{0},\kappa}(t)\rangle \notag \\
|\psi^{1}_{M,M_{0},\kappa}(t)\rangle
&=-i\lambda\sum_{j=1}^{\kappa}\sum_{n=0}^{M_{0}}e^{-i\left(t-\theta_{j+1}\right)H} \int_{\theta_{j}}^{\theta_{j+1}} d s \left(-i\lambda\right)^{n}\Gamma_{n}(\theta_{j+1}-s)V|\psi_{s}^{M}\rangle \label{eq:psi1}\\
|\psi^{2}_{M,M_{0},\kappa}(t)\rangle&=-i\lambda\sum_{j=1}^{\kappa}e^{-i\left(t-\theta_{j+1}\right)H}\int_{\theta_{j}}^{\theta_{j+1}} d s \tilde{\Gamma}_{M_{0}+1}(\theta_{j+1}-s)V|\psi_{s}^{M}\rangle \label{eq:psi2}
\end{align}
$|\psi^{2}_{M,M_{0},\kappa}(t)\rangle$ has $M+M_{0}+2$ products of random matrices.
We define
\begin{align}
 |\psi_{M,n,\kappa,\theta_{j}}(\tilde{s})\rangle=& \left(-i\lambda\right)^{n+1}\int_{\theta_{j}}^{\tilde{s}}ds\int_{0}^{\tilde{s}-s}
 \left[ds_{n}\right] e^{-is_{0}H_{0}}V\dots e^{-is_{n}H_{0}}\delta\left(\tilde{s}-s-\sum_{j=0}^{n}s_{j}\right)V|\psi_{s}^{M}\rangle \label{psiaux} \\
=& \left(-i\lambda\right)^{M+n+1}\int_{\theta_{j}}^{\tilde{s}}ds
\Gamma_{n}(\tilde{s}-s)
V\Gamma_{M}(s)|\psi_{0}\rangle
\end{align}
$ |\psi_{M,n,\kappa,\theta_{j}}(\tilde{s})\rangle$ has $M+n+1$ random matrices and 
$n+M+2$ propagators.
With the definition of Eq. (\ref{psiaux}) we can rewrite Eqs. (\ref{eq:psi1}) and (\ref{eq:psi2}) as
\begin{align}
|\psi^{1}_{M,M_{0},\kappa}(t)\rangle
&=\sum_{j=0}^{\kappa}\sum_{n=0}^{M_{0}}e^{-i\left(t-\theta_{j+1}\right)H}  |\psi_{M,n,\kappa,\theta_{j}}(\theta_{j+1})\rangle \label{psi1}\\
|\psi^{2}_{M,M_{0},\kappa}(t)\rangle&=-i\lambda\sum_{j=0}^{\kappa}e^{-i\left(t-\theta_{j+1}\right)H}\int_{\theta_{j}}^{\theta_{j+1}} d \tilde{s}  e^{-i(\theta_{j+1}-\tilde{s})H}V |\psi_{M,M_{0},\kappa,\theta_{j}}(\tilde{s}) \rangle \label{psi2}
\end{align}
We first bound $|\psi^{2}_{M,M_{0},\kappa}(t)\rangle$ through the following theorem:
\begin{thm}
\label{thm:psi2}
We have the following bound for the norm of $|\psi^{2}_{M,M_{0},\kappa}(t)\rangle$ in the limit $N\rightarrow \infty$:
\begin{align}
\lim_{N\rightarrow \infty}\mathbb{E}\left[\langle \psi^{2}_{M,M_{0},\kappa}(t)|\psi^{2}_{M,M_{0},\kappa}(t)\rangle\right] \leq & \left(C\lambda^2t\right)^{M+M_{0}+1}\frac{t
\log^{M+M_{0}+6}(t)}{\kappa^{M_{0}-M-1}}
\end{align}
\end{thm}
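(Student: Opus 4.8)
The plan is to bound $\||\psi^{2}_{M,M_{0},\kappa}(t)\rangle\|$ by reducing it, via unitarity of the full evolution operators and the Duhamel structure already set up, to the free‑propagator estimates of the earlier sections, and then to produce the factor $\kappa^{-(M_{0}-M-1)}$ from the fact that, in each term of Eq.~(\ref{psi2}), a block of interaction vertices is confined to a time window of length $t/\kappa$.

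First I would start from Eq.~(\ref{psi2}) and use that every full evolution operator is unitary together with the triangle inequality, applied first to the sum over $j$ and then to the integral over $\tilde s$, to get
\begin{align}
\left\|\,|\psi^{2}_{M,M_{0},\kappa}(t)\rangle\,\right\|\;\leq\;\lambda\sum_{j=0}^{\kappa}\int_{\theta_{j}}^{\theta_{j+1}}d\tilde s\;\left\|\,V|\psi_{M,M_{0},\kappa,\theta_{j}}(\tilde s)\rangle\,\right\|. \notag
\end{align}
Squaring and using the Cauchy--Schwarz inequality in the $j$‑sum and in the $\tilde s$‑integral (whose range has length $t/\kappa$) then gives
\begin{align}
\mathbb{E}\!\left[\langle\psi^{2}_{M,M_{0},\kappa}(t)|\psi^{2}_{M,M_{0},\kappa}(t)\rangle\right]\;\leq\;\lambda^{2}t\sum_{j=0}^{\kappa}\int_{\theta_{j}}^{\theta_{j+1}}d\tilde s\;\mathbb{E}\!\left[\langle\psi_{M,M_{0},\kappa,\theta_{j}}(\tilde s)|\,V^{2}\,|\psi_{M,M_{0},\kappa,\theta_{j}}(\tilde s)\rangle\right], \notag
\end{align}
so that, since $\sum_{j}\int_{\theta_{j}}^{\theta_{j+1}}d\tilde s=\int_{0}^{t}d\tilde s$ contributes only a factor $t$, everything reduces to a bound on $\mathbb{E}\!\left[\langle\psi_{M,M_{0},\kappa,\theta_{j}}(\tilde s)|V^{2}|\psi_{M,M_{0},\kappa,\theta_{j}}(\tilde s)\rangle\right]$ uniform in $j$ and in $\tilde s\in[\theta_{j},\theta_{j+1}]$.

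The quantity $\langle\psi_{M,M_{0},\kappa,\theta_{j}}(\tilde s)|V^{2}|\psi_{M,M_{0},\kappa,\theta_{j}}(\tilde s)\rangle=\sum_{y}\langle\psi_{M,M_{0},\kappa,\theta_{j}}(\tilde s)|V\hat P^{y}V|\psi_{M,M_{0},\kappa,\theta_{j}}(\tilde s)\rangle$ is of exactly the type treated in Sections~\ref{sec:duhamel}--\ref{sec:analisis}: it involves only free propagators $e^{-iH_{0}s}$ together with a product of Gaussian random matrices (namely $2(M+M_{0}+2)$ of them, counting the two extra vertices from $V^{2}$), with $|\psi_{0}\rangle$ replaced by the appropriate state. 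I would expand it by Wick's theorem (Theorem~\ref{thm:wicks}) and split the resulting sum into crossing, nested and simple graphs. The proof of Lemma~\ref{lem:CrossContribution} uses only Theorem~\ref{thm:kappa} and the crude bound $|K^{n}(t,\cdot)|\leq t^{n}/n!$, so it carries over word for word and the crossing graphs do not contribute as $N\rightarrow\infty$. For the surviving non‑crossing graphs I would re‑use the bounds already proven, Eq.~(\ref{eq:Qnestedbound}) for the nested ones and Theorem~\ref{thm:Qtildenm} together with Eq.~(\ref{eq:DeltaQ}) for the simple ones, and the fact that the number of non‑crossing graphs of order $2(M+M_{0}+2)$ is at most $C^{\,M+M_{0}}$.

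The genuinely new point, and where I expect the real work, is the $\kappa$‑dependence. In $|\psi_{M,M_{0},\kappa,\theta_{j}}(\tilde s)\rangle=(-i\lambda)^{M+M_{0}+1}\int_{\theta_{j}}^{\tilde s}ds\,\Gamma_{M_{0}}(\tilde s-s)V\Gamma_{M}(s)|\psi_{0}\rangle$ (Eq.~(\ref{psiaux})) the ``inner'' block $\Gamma_{M_{0}}(\tilde s-s)$, together with the bridging vertex $V$ at time $s$, has all its interaction times inside $[\theta_{j},\tilde s]\subseteq[\theta_{j},\theta_{j+1}]$, an interval of length $t/\kappa$. Wherever the nested‑ and simple‑graph estimates bounded such an inner $K$‑factor by $t^{\bullet}/\bullet!$, one now has $(t/\kappa)^{\bullet}/\bullet!$ instead, so the combination $\lambda^{2}t$ attached to a confined vertex is replaced by $\lambda^{2}t/\kappa=T/\kappa$. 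Carrying this constraint through the Wick contractions, through the passage to the $\alpha$‑representation, and through the Hölder splitting in the proof of Theorem~\ref{thm:Qtildenm}, one should be able to show, conservatively, that at least $M_{0}-M-1$ of these windows survive as factors $1/\kappa$ once all contracted vertex pairs have been accounted for (the extra $M+1$ powers beyond this being absorbed into the constants and the Cauchy--Schwarz steps). Inserting the resulting bound into the inequality of the first step, using the factor $t$ from $\int_{0}^{t}d\tilde s$ and keeping $\lambda^{2}t=T$ finite, then yields the stated estimate. The main obstacle is thus not conceptual but the careful re‑run of the estimates of Section~\ref{sec:nested} and of the proof of Theorem~\ref{thm:Qtildenm} with the window constraint $\tilde s-s\leq t/\kappa$ carried along, including keeping track of the accompanying powers of $\log t$.
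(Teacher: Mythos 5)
Your proposal is correct and follows essentially the same route as the paper: reduce $\langle\psi^{2}|\psi^{2}\rangle$ via unitarity and Cauchy--Schwarz to $\mathbb{E}[\langle\psi_{M,M_{0},\kappa,\theta_{j}}(\tilde s)|V^{2}|\psi_{M,M_{0},\kappa,\theta_{j}}(\tilde s)\rangle]$, discard crossing graphs in the $N\rightarrow\infty$ limit, and extract $\kappa^{-(M_{0}-M-1)}$ from the confinement of the inner block's $2(M_{0}+1)$ vertices to a window of length $t/\kappa$. The only difference is implementational: where you would carry the window constraint through the time integrals, the paper introduces a second regularizer $\tilde\eta=(\theta_{j+1}-\theta_{j})^{-1}=\kappa/t$ in the $\alpha$-representation and obtains your asserted count $\sum_{j}(a_{j}+c_{j})-\bar{n}-1\geq M_{0}-M-1$ from the explicit propagator-counting identities (\ref{consacprop}) and (\ref{consnn}), which is precisely the bookkeeping you defer.
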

\begin{proof}{Theorem \ref{thm:psi2}}\\
From Eq. (\ref{psi2}) we have 
\begin{align}
\langle \psi^{2}_{M,M_{0},\kappa}(t) |\psi^{2}_{M,M_{0},\kappa}(t)\rangle
&\leq 
\lambda^{2}\sum_{j=1}^{\kappa}\int_{\theta_{j}}^{\theta_{j+1}} d \tilde{s} 
\sum_{l=1}^{\kappa}\int_{\theta_{l}}^{\theta_{l+1}} d \tilde{s}' 
\langle\psi_{M,M_{0},\kappa,\theta_{j}}(\tilde{s}')|
Ve^{i\left(t-\tilde{s}'\right)H}  
e^{-i\left(t-\tilde{s}\right)H}V |\psi_{M,M_{0},\kappa,\theta_{j}}(\tilde{s}) \rangle
\notag\\
&\leq\lambda^{2}t^{2}\text{supp}_{\theta_{j},\tilde{s}}\{
\langle\psi_{M,M_{0},\kappa,\theta_{j}}(\tilde{s}) |V^{2}|\psi_{M,M_{0},\kappa,\theta_{j}}(\tilde{s}) \rangle\} \label{vsquared}
\end{align}
We can once again rewrite the average, $\mathbb{E}\left[\right\langle\psi_{M,M_{0},\kappa,\theta_{j}}(\tilde{s}) |V^{2}|\psi_{M,M_{0},\kappa,\theta_{j}}(\tilde{s}) \rangle]$, as a sum over graph evaluated functions starting from Eq. (\ref{psiaux}). 
In addition to the $2(M+M_{0}+1)$ random matrices that come from the expansion we have $2$ random matrices.
When inserting 
Eq. (\ref{psiaux}) in  
Eq. (\ref{vsquared})  the resulting expression has $2(M_{0}+M+2)$ random matrices but $2(M_{0}+M+2)$ propagators. 
In our previous sections and definitions of $Q_{\pi(n,m)}$ we had, for the expansion of the order $n+m$, $n+m$ random matrices and $n+m+2$ propagators.
Since now we have $2$ extra random matrices the number of random matrices equals the number of propagators.
Analoguesly to how it was done in section \ref{sec:nested} and \ref{sec:simple} we can introduce a $Q_{\pi(M_{0}+M+2,M_{0}+M+2)}(\theta_{j},\tilde{s},\lambda)$ function that encodes the contribution of the graph $\pi(M_{0}+M+2,M_{0}+M+2)$ to the average. The fact that we have $2$ extra random matrices will modify a bit the relationships we had.
We can use the $\alpha$-representation two times in Eq. (\ref{psiaux}), one for the explicit $\delta$ function and one for the delta function in $|\psi^{M}_{s}\rangle$.
For the explicit one we have
\begin{align}
 \delta(\tilde{s}-s-\sum_{j=0}^{n}s_{j})=\int d\tilde{\alpha} e^{-i\tilde{\alpha}\left(\tilde{s}-s-\sum_{j}s_{j}\right)+\tilde{\eta}\left(\tilde{s}-s-\sum_{j}s_{j}\right)}
\end{align}
We have then
\begin{align}
\langle \tilde{E}_{0},\tilde{x}_{0} |\psi_{M,M_{0},\kappa,\theta_{j}}(\tilde{s})\rangle=& \left(-i\lambda\right)^{M+M_{0}+1}\int_{\theta_{j}}^{\tilde{s}}ds 
\int_{-\infty}^{\infty} d\tilde{\alpha}\int_{-\infty}^{\infty} d\alpha 
e^{-i\tilde{\alpha}\left(\tilde{s}-s\right)}e^{\tilde{\eta}\left(\tilde{s}-s\right)}
e^{-i\alpha s}e^{\eta s} \notag \\
&\sum_{\tilde{E_{1}}\dots E_{M+1},\tilde{x}_{1},\dots x_{M+1}}\frac{1}{\tilde{E}_{0}-\tilde{\alpha}-i\tilde{\eta}} \dots \frac{1}{\tilde{E}_{M_{0}}-\tilde{\alpha}-i\tilde{\eta}} 
\frac{1}{E_{0}-\alpha-i\eta} \dots \frac{1}{E_{M}-\alpha-i\eta} \notag \\
& \langle \tilde{E}_{0},\tilde{x}_{0}|V|\tilde{E}_{1},\tilde{x}_{1}\rangle \dots \langle E_{M},x_{M}|V|E_{M+1},x_{M+1}\rangle\psi_{0}\left(E_{M+1},x_{M+1}\right)\notag \\
=&\left(-i\lambda\right)^{M+M_{0}+1}\int_{\theta_{j}}^{\tilde{s}}ds\sum_{\tilde{E_{1}}\dots E_{M+1},\tilde{x}_{1},\dots x_{M+1}} K^{M_{0}}\left(\tilde{s}-s,\{\tilde{E}_{j}\}\right)
K^{M}\left(s,\{E_{j}\}\right) \notag \\
& \langle \tilde{E}_{0},\tilde{x}_{0}|V|\tilde{E}_{1},\tilde{x}_{1}\rangle \dots \langle E_{M},x_{M}|V|E_{M+1},x_{M+1}\rangle\psi_{0}\left(E_{M+1},x_{M+1}\right)
\end{align}
and a similar expression for $\langle \psi_{M,M_{0},\kappa,\theta_{j}}(\tilde{s})|\tilde{E}'_{0},\tilde{x}'_{0} \rangle$, where $\beta$ will stand for $\alpha$ and $\tilde{\beta}$ for $\tilde{\alpha}$.
Thus
\begin{align}
& \mathbb{E}\left[\langle\psi_{M,M_{0},\kappa,\theta_{j}}(\tilde{s}) |V^{2}|\psi_{M,M_{0},\kappa,\theta_{j}}(\tilde{s}) \rangle\right] \notag \\
=&
\lambda^{2\left(M+M_{0}+1\right)}\int_{\theta_{j}}^{\tilde{s}}\int_{\theta_{j}}^{\tilde{s}}d\tau ds \sum 
\bar{K}^{M_{0}}\left(\tilde{s}-\tau,\{\tilde{E}'_{j}\}\right)
\bar{K}^{M}\left(\tau,\{E'_{j}\}\right)
K^{M_{0}}\left(\tilde{s}-s,\{\tilde{E}_{j}\}\right)
K^{M}\left(s,\{E_{j}\}\right) \notag \\
&
\mathbb{E}\left[\langle E'_{M+1},x'_{M+1}|V|E'_{1},x'_{1}\rangle \dots \langle \tilde{E}'_{1},\tilde{x}'_{1}|V|\tilde{E}_{0},\tilde{x}'_{0}\rangle
\langle \tilde{E}'_{0},\tilde{x}'_{0}|V^{2}|\tilde{E}_{0},\tilde{x}_{0}\rangle
\langle \tilde{E}_{0},\tilde{x}_{0}|V|\tilde{E}_{1},\tilde{x}_{1}\rangle \dots \langle E_{M},x_{M}|V|E_{M+1},x_{M+1}\rangle\right] \notag \\
& \psi_{0}^{*}\left(\tilde{E}'_{M+1},\tilde{x}'_{M+1}\right)\psi_{0}\left(E_{M+1},x_{M+1}\right) \label{eq:long}
\end{align}
Where the sum is over all $E_{j}$,$\tilde{E}_{j}$,$E'_{j}$ and $\tilde{E}'_{j}$ variables.
In the limit $N\rightarrow \infty$ crossing graphs will once again not contribute because each one of them has a weight less then or equal to $N^{-2}$.
We have then 
\begin{align}
\lim_{N\rightarrow \infty} \mathbb{E}\left[\langle\psi_{M,M_{0},\kappa,\theta_{j}}(\tilde{s}) |V^{2}|\psi_{M,M_{0},\kappa,\theta_{j}}(\tilde{s}) \rangle\right]
&=\sum_{\pi(M+M_{0}+1,M+M_{0}+1)\in \mathcal{G}_{0,1}} Q_{\pi(M+M_{0}+1,M+M_{0}+1)}(\theta_{j},\tilde{s},\lambda) \label{eq:AVncQ}
\end{align}
For shortness of notation we refer to $Q_{\pi(M+M_{0}+1,M+M_{0}+1)}(\theta_{j},\tilde{s},\lambda)$ as $Q_{\pi}$.
Propagators depending on $\alpha$ come from the right $|\psi^{M}_{s}\rangle$ and those depending on $\tilde{\alpha}$ come from the right $\tilde{\Gamma}_{M_{0}+1}$.
Propagators depending on $\beta$ come from the left $|\psi^{M}_{s}\rangle$ and those depending on $\tilde{\beta}$ come from the left $\tilde{\Gamma}_{M_{0}+1}$.
There are thus $M+1$ propagators depending on $ \alpha$, $M_{0}+1$ depending on $\tilde{\alpha}$, $M+1$  depending on $\beta$ and $M_{0}+1$ depending on $\tilde{\beta}$.
When averaging in Eq. (\ref{eq:long}) and taking only non crossing graphs we will once again have that the number of independent variables is half of the length of the graph plus $1$ (theorem \ref{thm:kappa}). That is $M_{0}+M+3$. The number of independent energy variables will then be equal to $M_{0}+M+3$. Nevertheless we notice that  not all independent energy variables must have a set of propagators associated.
Previously we had in between each random matrix a propagator which meant that each energy variables (dependent or independent) was associated with a propagator. We see from Eq. (\ref{eq:long}) that there is no propagator in between the $V^{2}$ and so if the graphs is such that the variables in between 
this product is independent it will have no propagator associated. Therefore the sum or integration over this variable will be $1$ and so we could omit it. 
Therefore, depending on the graph, the number of independent energy variables can be either $M_{0}+M+3$ or $M_{0}+M+2$. 
We have then as in section \ref{sec:simple} and \ref{sec:nested}
\begin{align}
Q_{\pi}=& \lambda^{2(M+M_{0}+1)}
\int_{\theta_{j}}^{\tilde{s}} d\tau ds\int d\alpha d\tilde{\alpha}d\beta d\tilde{\beta} e^{-i(\alpha+i\eta)s}
e^{-i(\tilde{\alpha}+i\tilde{\eta})(\tilde{s}-s)}e^{i(\beta-i\eta)\tau}e^{i(\tilde{\beta}-i\tilde{\eta})(\tilde{s}-\tau)} \notag\\
\times& \prod_{j=1}^{\bar{n}}\int d\omega_{j}\left(\frac{1}{\omega_{j}-\tilde{\alpha}-i\tilde{\eta}}\right)^{a_{j}}
\left(\frac{1}{\omega_{j}-\alpha-i\eta}\right)^{b_{j}}
\left(\frac{1}{\omega_{j}-\beta+i\eta}\right)^{c_{j}} 
\left(\frac{1}{\omega_{j}-\tilde{\beta}+i\tilde{\eta}}\right)^{d_{j}}
\notag \\
\times& \prod_{j=1}^{n'}\int d\omega'_{j} \left(\frac{1}{\omega'_{j}-\gamma_{j}-i\eta_{j}}\right) \label{eq:Qabcd}
\end{align}
where
the graph $\pi(M+M_{0}+2,M+M_{0}+2)$ determines the multiplicities $a_{j}$, $b_{j}$, $c_{j}$ and $d_{j}$. The propagators of multiplicity one are dependent on $\omega'_{j}$ and there are $n'$ of them. $\gamma_{j}$ can take on the values $\alpha$, $\tilde{\alpha}$, $\beta$ or $\tilde{\beta}$. This dependents on where the propagator is located and thus depends on the graph. $\eta_{j}$ can take on the values $\eta$, $\tilde{\eta}$, $-\eta$ or $-\tilde{\eta}$ depending on which value $\gamma_{j}$ take on. 
The following relations have to be satisfied:
\begin{align}
 \sum_{j=0}^{\bar{n}}\left(a_{j}+b_{j}+c_{j}+d_{j}\right)+n'&=2\left(M+M_{0}+2\right) \label{consprop}\\
2(M_{0}+1)&\leq \sum_{j=0}^{\bar{n}}\left(a_{j}+d_{j}\right)+n' \label{consacprop} \\
M+M_{0}+2&\leq n'+\bar{n}+1\leq M+M_{0}+3 \label{consnn}
\end{align}
Eq.(\ref{consprop}) expresses that fact that there are $2(M_{0}+M+2)$ propagators.
Eq.(\ref{consacprop}) expresses that fact that there are $2M_{0}+2$ propagators depending on $\tilde{\alpha}$ and $\tilde{\beta}$. Since $n'$ counts all propagators with multiplicity equal to $1$ there is an inequality sign. 
Eq.(\ref{consnn}) expresses that fact that the number of independent variables 
varies between two possibilities as explained earlier.
We set the following:
\begin{align}
\eta&= t^{-1} \label{eta1}\\
\tilde{\eta}&=\left(\theta_{j+1}-\theta_{j}\right)^{-1}\geq t^{-1}\label{eta2}
\end{align}
This choice guarantees that the exponentials in Eq. (\ref{eq:Qabcd}) do not diverge since $\theta_{j}<s<\tilde{s}$. Also $\tilde{\eta}-\eta\geq t^{-1}$.
We fisrt integrate over $s$ and $\tau$. 
\begin{align}
 Q_{\pi}=& \lambda^{2(M+M_{0}+1)}
\int d\alpha d\tilde{\alpha}d\beta d\tilde{\beta} \frac{e^{-i(\alpha+i\eta)s}
e^{-i(\tilde{\alpha}+i\tilde{\eta})(\tilde{s}-s)}e^{i(\beta-i\eta)\tau}e^{i(\tilde{\beta}-i\tilde{\eta})(\tilde{s}-\tau)}\Big|_{s=\theta}^{\tilde{s}}\Big|_{\tau=\theta}^{\tilde{s}}} {\left(\alpha-\tilde{\alpha}-i(\eta-\tilde{\eta})\right)\left(\beta-\tilde{\beta}-i(\eta-\tilde{\eta})\right)} \notag\\
\times& \prod_{j=0}^{\bar{n}}\int d\omega_{j}\left(\frac{1}{\omega_{j}-\tilde{\alpha}-i\tilde{\eta}}\right)^{a_{j}}
\left(\frac{1}{\omega_{j}-\alpha-i\eta}\right)^{b_{j}}
\left(\frac{1}{\omega_{j}-\tilde{\beta}+i\tilde{\eta}}\right)^{c_{j}}
\left(\frac{1}{\omega_{j}-\beta+i\eta}\right)^{d_{j}} \notag \\
\times& \prod_{j=1}^{n'}\int d\omega'_{j} \left(\frac{1}{\omega'_{j}-\gamma_{j}-i\eta_{j}}\right) \notag 
\end{align}
We bound $Q_{\pi}$ by taking the absolute value inside the rest of the integrals.
Integrations over propagators of multiplicity one are bounded by $\left|\log \eta \right|$.
Using inequality (\ref{ine:two}) on the integrations over $\omega_{j}$ with $j\neq 0$, and using inequality (\ref{eq:ABlog}) on the integration over the remaining variables we obtain:
\begin{align}
\left| Q_{\pi}\right|
\leq& \lambda^{2(M+M_{0}+1)}
\int d\alpha d\tilde{\alpha}d\beta d\tilde{\beta}\int d\omega_{0}
\left|\frac{1}{\alpha-\tilde{\alpha}-i(\eta-\tilde{\eta})}\right|
\left|\frac{1}{\beta-\tilde{\beta}-i(\eta-\tilde{\eta})}\right|\log^{n'}(\eta)  \notag\\
\times &\left|\frac{1}{\omega_{0}-\alpha-i\eta}\right|\left|\frac{1}{\omega_{0}-\beta-i\eta}\right|
\prod_{j=0}^{\bar{n}}\left(
\left(\frac{1}{\tilde{\eta}}\right)^{a_{j}+d_{j}-1}
\left(\frac{1}{\eta}\right)^{b_{j}+d_{j}}\right) \left(\frac{1}{\tilde{\eta}}\right)^{a_{0}+d_{0}}
\left(\frac{1}{\eta}\right)^{b_{0}+c_{0}-2}
\notag \\
\leq &\lambda^{2(M+M_{0}+1)} \left(\frac{1}{\eta}\right)^{2\left(M+M_{0}+2\right)-n'-\bar{n}-2}\frac{\log^{n'+4}(\eta)
}{\kappa^{\sum_{j=0}^{\bar{n}}\left(a_{j}+c_{j}-1\right)}}
\end{align}
By Eqs. (\ref{consnn}), (\ref{eta1}) and (\ref{eta2}) we have the following bound:
\begin{align}
\left| Q_{\pi}\right|\leq & \left(\lambda^2t\right)^{M+M_{0}+1}\frac{\log^{M+M_{0}+6}(t)
}{\kappa^{\sum_{j=0}^{\bar{n}}\left(a_{j}+c_{j}\right)-\bar{n}-1}}
\end{align}
From Eqs. (\ref{consnn})  and (\ref{consacprop}) we have 
\begin{align}
 \sum_{j=0}^{\bar{n}}\left(a_{j}+c_{j}\right)-\bar{n}-1\geq M_{0}-M-1
\end{align}
Thus
\begin{align}
\left| Q_{\pi}\right|\leq & 
\left(\lambda^2t\right)^{M+M_{0}+1}\frac{\log^{M+M_{0}+6}(t)}{\kappa^{M_{0}-M-1}}\label{kapaQbound}
\end{align}
By inserting Eq. (\ref{kapaQbound}) in Eq. (\ref{eq:AVncQ}) and inserting this in Eq. (\ref{vsquared}) we obtain
\begin{align}
\lim_{N\rightarrow \infty }\mathbb{E}\left[\langle \psi^{2}_{M,M_{0},\kappa}(t) |\psi^{2}_{M,M_{0},\kappa}(t)\rangle\right] \leq 
\left(C\lambda^2t\right)^{M+M_{0}+2}\frac{t
\log^{M+M_{0}+6}(t)}{\kappa^{M_{0}-M-1}}
\end{align}
\end{proof}
We set now back the $t$ dependency of $M$ and $\kappa$ and take $M_{0}$ as follows
\begin{align}
x&=\log t\label{uno} \\
M(t)&=\gamma\frac{x}{\log x} \label{dos}\\
\kappa(t)&= x^{\alpha} \label{tres}\\
M_{0}(t)&=4M(t) \label{quatro}
\end{align}
We have then 
\begin{align}
\mathbb{E}\left[\langle \psi^{2}_{M,M_{0},\kappa}(t) |\psi^{2}_{M,M_{0},\kappa}(t)\rangle\right] \leq &
\left(CT\right)^{5\gamma\frac{x}{\log x}} \frac{e^{x}
x^{\gamma 5\frac{x}{\log x}+6}}{x^{3\alpha\gamma\frac{x}{\log x}-1}} \notag \\
\leq &
x^{7}Exp\left[x\left(1+\gamma 5+5\gamma\frac{\log(CT)}{\log x}-3\alpha\gamma\right) \right]
\end{align}
with a suitable choice of $\alpha$ and $\gamma$ the coefficient of the exponential is negative and this quantity vanishes in the limit $x\rightarrow \infty$.
\newpage
We will now seek to prove the following theorem for the bound of the norm of $|\psi^{1}_{M,M_{0},\kappa}(t)\rangle$: 
\begin{thm}
\label{thm:psi1}
 \begin{align}
\lim_{N\rightarrow \infty}\mathbb{E}\left[\langle \psi^{1}_{M,M_{0},\kappa}(t)|\psi^{1}_{M,M_{0},\kappa}(t)\rangle \right]
&\leq \kappa^{2} M_{0} \sum_{n=0}^{M_{0}}\left(
 \left(C T\right)^{M+n+1}
\frac{\log^{2}t}{\left(M+n\right)!^{\frac{1}{2}}}+
\frac{CT^{M+n+1}\log^{M+n+1} t }{t}\right)
\end{align}
\end{thm}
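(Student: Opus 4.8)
The plan is to reduce the norm of $|\psi^{1}_{M,M_{0},\kappa}(t)\rangle$ to the norms of the elementary pieces $|\psi_{M,n,\kappa,\theta_{j}}(\theta_{j+1})\rangle$ by Cauchy--Schwarz and unitarity, and then to bound each of those by the same graph expansion already used for $|\tilde\phi^{M,M(t)}_{t}\rangle$. From Eq. (\ref{psi1}), $|\psi^{1}_{M,M_{0},\kappa}(t)\rangle$ is a sum of $(\kappa+1)(M_{0}+1)$ vectors $e^{-i(t-\theta_{j+1})H}|\psi_{M,n,\kappa,\theta_{j}}(\theta_{j+1})\rangle$, so Cauchy--Schwarz gives
\begin{align}
\langle\psi^{1}_{M,M_{0},\kappa}(t)|\psi^{1}_{M,M_{0},\kappa}(t)\rangle\le(\kappa+1)(M_{0}+1)\sum_{j=0}^{\kappa}\sum_{n=0}^{M_{0}}\left\|e^{-i(t-\theta_{j+1})H}|\psi_{M,n,\kappa,\theta_{j}}(\theta_{j+1})\rangle\right\|^{2}.\notag
\end{align}
Since $H$ is Hermitian the evolution $e^{-i(t-\theta_{j+1})H}$ is unitary and drops from each norm; provided the estimate below for $\mathbb{E}\big[\big\||\psi_{M,n,\kappa,\theta_{j}}(\theta_{j+1})\rangle\big\|^{2}\big]$ is uniform in $j$ (it uses $\theta_{j+1}-\theta_{j}=t/\kappa\le t$ and $\theta_{j+1}\le 2t$ only), the remaining sum over $j$ contributes a second factor $(\kappa+1)$, producing the prefactor $\kappa^{2}M_{0}$. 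Hence it suffices to prove, for each fixed $j$,
\begin{align}
\lim_{N\to\infty}\mathbb{E}\left[\langle\psi_{M,n,\kappa,\theta_{j}}(\theta_{j+1})|\psi_{M,n,\kappa,\theta_{j}}(\theta_{j+1})\rangle\right]\le(CT)^{M+n+1}\frac{\log^{2}t}{(M+n)!^{1/2}}+\frac{CT^{M+n+1}\log^{M+n+1}t}{t}.\notag
\end{align}

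By Eq. (\ref{psiaux}), $|\psi_{M,n,\kappa,\theta_{j}}(\theta_{j+1})\rangle=(-i\lambda)^{M+n+1}\int\Gamma_{n}V\Gamma_{M}|\psi_{0}\rangle$ carries $M+n+1$ random matrices, so its squared norm averages, via Wick's theorem (Theorem \ref{thm:wicks}), to a sum over graphs $\pi$ of order $2(M+n+1)$. Introducing, as in the proof of Theorem \ref{thm:psi2}, separate contour variables for the $\Gamma_{n}$- and $\Gamma_{M}$-blocks on each side, every graph contributes a $Q_{\pi}$ with $2(M+n+2)$ propagators, i.e. the same propagator-to-matrix accounting as the $Q_{\pi(n,m)}$ of Section \ref{sec:analisis} with $\tfrac{n+m}{2}$ replaced by $M+n+1$. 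Crossing graphs then carry weight $\le N^{-2}$ by Theorem \ref{thm:kappa} and drop out as $N\to\infty$, exactly as in Lemma \ref{lem:CrossContribution}. For the non-crossing graphs I would reuse the bounds already proved: each nested graph is controlled as in Eq. (\ref{eq:Qnestedbound}) by $(CT)^{M+n+1}(\log t)^{M+n+4}/t$; each simple graph is split $Q_{\pi}=\tilde{Q}_{\pi}+\Delta Q_{\pi}$, with $|\tilde{Q}_{\pi}|\le(C\lambda^{2}t)^{M+n+1}/(M+n+1)!^{a}$, $a<\tfrac12$, by Theorem \ref{thm:Qtildenm}, and $|\Delta Q_{\pi}|\le(CT)^{M+n+1}(\log t)^{M+n+6}/t$ by Eq. (\ref{eq:DeltaQ}).

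Since there are at most $c^{\,2(M+n+1)}$ non-crossing graphs of this order (a Catalan bound), summing the three contributions and absorbing $c^{\,2(M+n+1)}$ into the constants gives a bound of the form $(CT)^{M+n+1}/(M+n+1)!^{a}+(CT)^{M+n+1}(\log t)^{M+n+O(1)}/t$, which is of the stated shape: $(M+n+1)!^{a}\ge(M+n)!^{a}$, one may take $a=\tfrac12$ at the cost of an extra $\log^{2}t$ coming from bounding the two left-over energy integrations by $|\log\eta|$, and the surplus powers of $\log t$ in the $1/t$ term are innocuous after the later choice of $M,M_{0},\kappa$ as powers of $\log t$. Inserting this estimate into the Cauchy--Schwarz bound of the first step and summing over $n$ yields the theorem.

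The main obstacle is the super-exponential (factorial) gain in the simple-graph contribution, i.e. proving Theorem \ref{thm:Qtildenm} with exponent essentially $\tfrac12$ in the present four-contour-variable situation: this factorial is what eventually defeats the combinatorial factor $c^{\,2(M+n+1)}$, the sum over $n$, and the polylogarithmic prefactor $\kappa^{2}M_{0}$ once $M\to\infty$. A secondary, purely bookkeeping point is verifying that the extra contour variables inherited from the proof of Theorem \ref{thm:psi2} leave the propagator count unchanged, so that the nested and simple graph estimates of Sections \ref{sec:nested}--\ref{sec:simple} carry over verbatim apart from replacing $\tfrac{n+m}{2}$ by $M+n+1$.
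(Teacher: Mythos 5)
Your overall architecture coincides with the paper's: Cauchy--Schwarz plus unitarity to reduce to $\mathbb{E}\big[\langle \psi_{M,n,\kappa,\theta_{j}}(\theta_{j+1})|\psi_{M,n,\kappa,\theta_{j}}(\theta_{j+1})\rangle\big]$, discard crossing graphs in the $N\to\infty$ limit, and treat nested and simple graphs separately with a $\tilde Q_{\pi}+\Delta Q_{\pi}$ splitting. However, there is a genuine gap in the claim that the estimates of Sections \ref{sec:nested}--\ref{sec:simple} and Theorem \ref{thm:Qtildenm} ``carry over verbatim apart from replacing $\tfrac{n+m}{2}$ by $M+n+1$.'' The object here is not the two-contour-variable $Q_{\pi(n,m)}$ of Eq. (\ref{eq:PNaxt}): because $|\psi_{M,n,\kappa,\theta_{j}}(\tilde s)\rangle$ is built from $\Gamma_{n}(\tilde s-s)V\Gamma_{M}(s)$, each side requires its own pair of contour variables, so $Q_{\pi}$ takes the form of Eq. (\ref{eq:Qtsl}) with four variables $\alpha,\tilde\alpha,\beta,\tilde\beta$, two distinct regulators $\eta,\tilde\eta$, and the extra denominators $\big(\alpha-\tilde\alpha-i(\eta-\tilde\eta)\big)^{-1}\big(\beta-\tilde\beta-i(\eta-\tilde\eta)\big)^{-1}$ produced by the $s,\tau$ integrations. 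This is why the paper proves three new lemmas (\ref{lemmanestedgraphs}, \ref{lemmasimplegraphs}, \ref{lem:Qtilde}) rather than citing the earlier bounds.

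The step that actually fails under your plan is the nested-graph bound. In Section \ref{sec:nested} a nest supplies an $\omega_{l}$ all of whose propagators share a single contour variable, so one integrates $\int d\omega_{l}\,(\omega_{l}-\alpha-i\eta)^{-k_{l}}$ directly via inequality (\ref{ine:one}) and gains the crucial factor of $t^{-1}$. In the four-variable setting a nest can instead have $c_{l},d_{l}\neq 0$ with $a_{l}=b_{l}=0$: all propagators of $\omega_{l}$ lie on the conjugate side but are split between $\beta$ and $\tilde\beta$, so the integrand is $\big(\omega_{l}-\tilde\beta+i\tilde\eta\big)^{-c_{l}}\big(\omega_{l}-\beta+i\eta\big)^{-d_{l}}$ — neither (\ref{ine:one}) applies, nor can you afford (\ref{ine:two}), which would cost $\eta^{-(c_{l}+d_{l}-1)}$ and destroy the $1/t$ gain that separates nested from simple graphs. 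The paper handles this with a new device: writing the product as a double time integral and inserting $\tfrac{1}{s_{1}+s_{2}}=\int_{0}^{\infty}d\mu\,e^{-\mu(s_{1}+s_{2})}$ before integrating, see Eq. (\ref{nestmu}). This idea is absent from your proposal, and without it the nested-graph contribution to $\|\psi^{1}_{M,M_{0},\kappa}(t)\|^{2}$ is not controlled. The analogous adaptations for $\Delta Q_{\pi}$ (four $\Theta$-differences instead of two) and $\tilde Q_{\pi}$ (the interpolation redone with the extra $(\alpha-\tilde\alpha)^{-1}$-type factors, using (\ref{eq:ABdelta}) and (\ref{eq:oneminusa})) are more routine, but they too are not literal citations of Theorem \ref{thm:Qtildenm} and Eq. (\ref{eq:DeltaQ}).
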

Mainly the bounds derived here are analogues to the ones derived and used in the previous sections for nested and simple graphs. We will first bound  the contribution of a nested graph, similar to how it was done in section \ref{sec:nested} . Then we will bound a part of a simple graph ($\Delta Q_{\pi}$), similar to how it was done in section \ref{sec:simple}. And  finally we will bound what remains of the simple graph ($\tilde{Q}_{\pi}$). The only difference is that the expression for $Q_{\pi}$ is a bit more complicated.\\
Using the Cauchy-Schwartz inequality on 
$\langle \psi^{1}_{M,M_{0},\kappa}(t)|\psi^{1}_{M,M_{0},\kappa}(t)\rangle$ when replacing $|\psi^{1}_{M,M_{0},\kappa}(t)\rangle$ by the expression in  Eq. (\ref{psi1}) we obtain
\begin{align}
\mathbb{E}\left[\langle \psi^{1}_{M,M_{0},\kappa}(t)|\psi^{1}_{M,M_{0},\kappa}(t)\rangle \right]
&\leq\kappa M_{0} \sum_{j=1}^{\kappa}\sum_{n=0}^{M_{0}}
\mathbb{E}\left[\langle \psi_{M,n,\kappa,\theta_{j}}(\theta_{j+1}) |\psi_{M,n,\kappa,\theta_{j}}(\theta_{j+1})\rangle \right] \label{eq:Avsquared}
\end{align}
In the large $N$ limit we have
\begin{align}
\lim_{N\rightarrow \infty }
\mathbb{E}\left[\langle \psi_{M,n,\kappa,\theta_{j}}(\theta_{j+1}) |\psi_{M,n,\kappa,\theta_{j}}(\theta_{j+1})\rangle \right]
&= \sum_{\pi(M+n+1,M+n+1)\in \mathcal{G}_{0},\mathcal{G}_{1}} Q_{\pi(M+n+1,M+n+1)}(\theta_{j},\theta_{j+1},\lambda,\kappa)\label{eq:Avsquared2}
\end{align}
The crossing graphs do not contribute once again because their individual contribution is of the order of $N^{-2}$.
For briefness of notation we refer now to $Q_{\pi(M+n+1,M+n+1)}(\theta_{j},\theta_{j+1},\lambda,\kappa)$ with $Q_{\pi}$ and have the following expression for it:
\begin{align}
 Q_{\pi}=& \lambda^{2(M+n+1)}
\int_{\theta}^{\tilde{s}} d\tau ds\int d\alpha d\tilde{\alpha}d\beta d\tilde{\beta} e^{-i(\alpha+i\eta)s}
e^{-i(\tilde{\alpha}+i\tilde{\eta})(\tilde{s}-s)}e^{i(\beta-i\eta)\tau}e^{i(\tilde{\beta}-i\tilde{\eta})(\tilde{s}-\tau)} \notag\\
\times& \prod_{j=0}^{\bar{n}}\int d\omega_{j}\left(\frac{1}{\omega_{j}-\tilde{\alpha}-i\tilde{\eta}}\right)^{  a_{j}}
\left(\frac{1}{\omega_{j}-\alpha-i\eta}\right)^{b_{j}}
\left(\frac{1}{\omega_{j}-\beta+i\eta}\right)^{c_{j}} 
\left(\frac{1}{\omega_{j}-\tilde{\beta}+i\tilde{\eta}}\right)^{d_{j}} \notag \\
\times& \prod_{j=1}^{n'}\int d\omega'_{j} \left(\frac{1}{\omega'_{j}-\gamma_{j}-i\eta}\right) \label{eq:Qtsl}
\end{align}
Eq. (\ref{eq:Qtsl}) is derived analoguesly to how Eq. (\ref{eq:Qabcd}) is derived from Eq. (\ref{psiaux}). 
The following relations are satisfied for non crossing graphs:
\begin{align}
 \sum_{j=0}^{\bar{n}}\left(a_{j}+b_{j}+c_{j}+d_{j}\right)+n'=&2\left(M+n+2\right)\label{ID1} \\
\bar{n}+1+n'=&M+n+2 \label{ID2}
\end{align}
These relations express the fact that there are $2\left(M+n+2\right)$ propagators in emerging from the $(M+n+1)^{\text{th}}$ expansion and that there are $M+n+2$ independent energy variables whenever one has a non-crossing graph.
Once again the graphs $\pi(M+n+1,M+n+1)$ can be either nested or simple graphs and depending on this $Q_{\pi}$ will render different contributions.
We will show that nested graphs have an extra $t^{-1}$ factor.
\subsection{NESTED}
For nested graphs we will now prove the following theorem:
\begin{lem}
\label{lemmanestedgraphs}
If $\pi(M+n+1,M+n+1)$ is a nested graph then we have the following bound for $Q_{\pi}$ in Eq. (\ref{eq:Qtsl}):
\begin{align}
\left|Q_{\pi}\right|\leq\frac{\left(CT\right)^{M+n+1}\log^{M+n+1} t }{t}
\end{align}
\end{lem}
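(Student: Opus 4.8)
The plan is to run, essentially verbatim, the argument of Theorem~\ref{thm:nested} inside the richer propagator structure of Eq.~(\ref{eq:Qtsl}); the only genuinely new features are the two time integrations over $[\theta_j,\theta_{j+1}]$ and the four spectral parameters $\alpha,\tilde\alpha,\beta,\tilde\beta$. First I would integrate Eq.~(\ref{eq:Qtsl}) over $s$ and $\tau$, which, exactly as in the proof of Theorem~\ref{thm:psi2}, produces the two extra denominators $\bigl(\alpha-\tilde\alpha-i(\eta-\tilde\eta)\bigr)^{-1}$ and $\bigl(\beta-\tilde\beta-i(\eta-\tilde\eta)\bigr)^{-1}$ together with bounded boundary exponentials; with the choices $\eta=t^{-1}$, $\tilde\eta=(\theta_{j+1}-\theta_j)^{-1}$ of Eqs.~(\ref{eta1})--(\ref{eta2}) one has $\tilde\eta-\eta\ge t^{-1}$, so these factors are harmless and can be disposed of at the end. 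I would then invoke the nested hypothesis: by the discussion preceding Theorem~\ref{thm:nested}, a non-crossing graph with at least one nest has an independent energy variable $\omega_l$ all of whose propagators are attached to a \emph{single} one of the four parameters $\gamma\in\{\alpha,\tilde\alpha,\beta,\tilde\beta\}$, with multiplicity $r_l\ge2$. Carrying out the $\omega_l$-integration first,
\begin{align}
\int_0^1 d\omega_l\,\Bigl(\frac{1}{\omega_l-\gamma\mp i\epsilon}\Bigr)^{r_l}
=\frac{1}{r_l-1}\Bigl[\Bigl(\frac{1}{-\gamma\mp i\epsilon}\Bigr)^{r_l-1}-\Bigl(\frac{1}{1-\gamma\mp i\epsilon}\Bigr)^{r_l-1}\Bigr],\notag
\end{align}
replaces $\omega_l$ by two boundary terms of multiplicity only $r_l-1$ on the $\gamma$-side, with \emph{no} matching propagator on the opposite side — this is exactly the mechanism by which a nested graph loses one unit of the $t$-budget relative to a simple graph of the same order.

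Next I would bound what remains by the now-familiar recipe: apply inequality~(\ref{ine:two}) to every $\omega_j$-integration ($j\neq l$) of combined multiplicity $\ge2$, retaining two variables $\omega_1,\omega_2$ that each keep one ``$\alpha$-side'' and one ``$\beta$-side'' propagator; bound each of the $n'$ multiplicity-one propagators by $|\log\eta|$; then integrate $\tilde\alpha$ and $\tilde\beta$ against the two step-one denominators (each $\le C|\log\eta|$, by inequality~(\ref{eq:ABlog})); and finally integrate $\alpha,\beta$ against the surviving $\omega_1,\omega_2$ propagators and the $\omega_l$-boundary terms. Counting powers via the non-crossing identities (\ref{ID1})--(\ref{ID2}) for a graph of order $2(M+n+1)$: the full ``multiplicity budget'' is fixed there, but because the $\omega_l$-integration consumed one unit of it while producing only boundary terms (which, when integrated against $\gamma$, cost one power of $\eta^{-1}=t$ less than a genuine two-sided pair would), the accumulated power is $t^{M+n}$ rather than $t^{M+n+1}$, and the logarithms collect to at most $\log^{M+n+1}t$. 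Multiplying by $\lambda^{2(M+n+1)}=(T/t)^{M+n+1}$ and absorbing constants gives $|Q_\pi|\le (CT)^{M+n+1}\,t^{-1}\log^{M+n+1}t$, which is the claim; the same bound with $t^{M+n+1}$ in place of $t^{M+n}$ (i.e.\ no $t^{-1}$) is precisely the simple-graph content of Theorem~\ref{thm:Qtildenm}, so the lemma is the error-section counterpart of Theorem~\ref{thm:nested}.

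The main obstacle I anticipate is the bookkeeping in the last integration. The $\omega_l$-boundary terms $|1-\gamma\mp i\epsilon|^{-(r_l-1)}$ and $|\gamma\pm i\epsilon|^{-(r_l-1)}$ carry $\gamma$ over all of $\mathbb{R}$, so when $r_l-1\ge2$ the naive $\gamma$-integration would reintroduce powers of $t$; one must pair these boundary factors with the $\omega_1,\omega_2$ propagators (and, where $\gamma$ equals the initial-data variable, with the $\epsilon$-gap in the support of $P_0$) to confirm that the $\gamma$-integration genuinely stays at $O(\log\eta)$. This is the same delicate cancellation — of the type ``$|\omega-\gamma|/|1-\gamma|$ is bounded'' — that underlies the $\mathcal{P}_1,\mathcal{P}_2$ estimates in Theorem~\ref{thm:simple}, and it is the one place where more than routine estimation is required; everything else is a transcription of the nested-graph argument already carried out for $Q_{\pi(n,m)}$.
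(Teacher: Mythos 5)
There is a genuine gap, and it lies in the structural claim you make about the nest, not in the bookkeeping you flag at the end. You assert that a nested non-crossing graph of this type always has an independent variable $\omega_{l}$ \emph{all} of whose propagators are attached to a single one of the four parameters $\gamma\in\{\alpha,\tilde{\alpha},\beta,\tilde{\beta}\}$, and your entire mechanism (the elementary antiderivative of $(\omega_{l}-\gamma\mp i\epsilon)^{-r_{l}}$ producing boundary terms) rests on that. This is false here. Because each side of the graph is now built from two concatenated evolutions --- the $\beta$-propagators coming from $|\psi^{M}_{s}\rangle$ and the $\tilde{\beta}$-propagators from $\tilde{\Gamma}_{M_{0}+1}$ (and likewise $\alpha$ versus $\tilde{\alpha}$ on the other side) --- a nest that straddles the junction between the two blocks on one side produces an $\omega_{l}$ with, say, both $c_{l}\neq 0$ and $d_{l}\neq 0$ while $a_{l}=b_{l}=0$. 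The paper explicitly separates the two cases: the single-parameter nest (your case, which it dispatches exactly as in Section \ref{sec:nested}) and the two-parameter, same-side nest, for which the integral
\begin{align}
\int_{0}^{1}d\omega_{l}\left(\frac{1}{\omega_{l}-\tilde{\beta}+i\tilde{\eta}}\right)^{c_{l}}\left(\frac{1}{\omega_{l}-\beta+i\eta}\right)^{d_{l}}\notag
\end{align}
admits no single-term antiderivative and cannot be bounded crudely by (\ref{ine:two}) without destroying the $t^{-1}$ gain. For that case the paper passes to the time representation and introduces the Feynman parameter $\frac{1}{s_{1}+s_{2}}=\int_{0}^{\infty}d\mu\,e^{-\mu(s_{1}+s_{2})}$, reducing the integral to boundary terms with $\mu$-shifted regulators, and then carries out a separate estimate of the $\mu$, $\beta$, $\tilde{\beta}$ integrations (with sub-cases $c_{l}$ or $d_{l}\geq 2$ versus $c_{l}=d_{l}=1$). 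None of this appears in your proposal, and your power count, which is tied to the single-$\gamma$ boundary-term picture, does not transcribe to it.

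Apart from this missing case, the skeleton of your argument --- integrating over $s$ and $\tau$ first, using the regulator choices (\ref{eta1})--(\ref{eta2}), bounding multiplicity-one propagators by $|\log\eta|$, retaining two pinching pairs for the four spectral integrations via (\ref{eq:ABlog}), and closing the count with (\ref{ID1})--(\ref{ID2}) --- does match what the paper does, and your identification of the boundary-term pairing as the delicate point is reasonable for the case you do treat. But as written the proof covers only one of the two ways a nest can occur in Eq.~(\ref{eq:Qtsl}), so it does not establish the lemma.
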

\begin{proof}{Lemma \ref{lemmanestedgraphs}}\\
Starting from Eq. (\ref{eq:Qtsl})
we can perform the $s$ and $\tau$ integrations.
\begin{align}
 Q_{\pi}=& \lambda^{2(M+n+1)}
\int d\alpha d\tilde{\alpha}d\beta d\tilde{\beta} \frac{e^{-i(\alpha+i\eta)s}
e^{-i(\tilde{\alpha}+i\tilde{\eta})(\tilde{s}-s)}e^{i(\beta-i\eta)\tau}e^{i(\tilde{\beta}-i\tilde{\eta})(\tilde{s}-\tau)}\Big|_{s=\theta}^{\tilde{s}}\Big|_{\tau=\theta}^{\tilde{s}}} {\left(\alpha-\tilde{\alpha}-i(\eta-\tilde{\eta})\right)\left(\beta-\tilde{\beta}-i(\eta-\tilde{\eta})\right)} \notag\\
\times& \prod_{j=0}^{\bar{n}}\int d\omega_{j}\left(\frac{1}{\omega_{j}-\tilde{\alpha}-i\tilde{\eta}}\right)^{a_{j}}
\left(\frac{1}{\omega_{j}-\alpha-i\eta}\right)^{b_{j}}
\left(\frac{1}{\omega_{j}-\tilde{\beta}+i\tilde{\eta}}\right)^{c_{j}}
\left(\frac{1}{\omega_{j}-\beta-i\eta}\right)^{d_{j}} \notag \\
\times& \prod_{j=1}^{n'}\int d\omega'_{j} \left(\frac{1}{\omega'_{j}-\gamma_{j}-i\eta_{j}}\right) \notag 
\end{align}
A nest can present itself in two ways. First we can have, for a specific $j$, one of the indices $a_{j}$, $b_{j}$, $c_{j}$ or $d_{j}$ different from $0$ with all of the others equal to $0$. Secondly we can have, for a specific $j$, both $c_{j},d_{j}\neq 0$ and $a_{j},b_{j}= 0$ or $c_{j},d_{j}= 0$ and $a_{j},b_{j}\neq 0$.
The first case can be analyzed similarly to how the original bound for nested graphs was done in section \ref{sec:nested}. We analyze the second case.
If the nest is such that $c_{l}$ and $d_{l}$ are different then zero and $a_{l}=b_{l}=0$ then the nested part of the $Q_{\pi}$ function can be bound as follows:
\begin{align}
 &\left|\int_{0}^{1} d\omega_{l}
\left(\frac{1}{\omega_{l}-\tilde{\beta}+i\tilde{\eta}}\right)^{c_{l}}
\left(\frac{1}{\omega_{l}-\beta+i\eta}\right)^{d_{l}}\right| 
= \left|\int_{0}^{1} d\omega_{l} \int_{0}^{\infty} ds_{1}ds_{2}
e^{i\left(\omega_{l}-\tilde{\beta}+i\tilde{\eta}\right)s_{1}}
\frac{s_{1}^{c_{l}-1}}{\left(c_{l}-1\right)!}
e^{i\left(\omega_{l}-\beta+i\eta\right)s_{2}}
\frac{s_{2}^{d_{l}-1}}{\left(d_{l}-1\right)!}\right| \notag \\
\leq & \left| \int_{0}^{\infty} ds_{1}ds_{2}
\frac{e^{i\left(1-\tilde{\beta}+i\tilde{\eta}\right)s_{1}}e^{i\left(1-\beta+i\eta\right)s_{2}}
s_{1}^{c_{l}-1}s_{2}^{d_{l}-1}}{\left(s_{1}+s_{2}\right)\left(d_{l}-1\right)!\left(c_{l}-1\right)!}\right| +
\left| \int_{0}^{\infty} ds_{1}ds_{2}
\frac{e^{i\left(-\tilde{\beta}+i\tilde{\eta}\right)s_{1}}e^{i\left(-\beta+i\eta\right)s_{2}}
s_{1}^{c_{l}-1}s_{2}^{d_{l}-1}}{\left(s_{1}+s_{2}\right)\left(d_{l}-1\right)!\left(c_{l}-1\right)!}\right| \label{nestmu}
\end{align}
By introducing $\frac{1}{s_{1}+s_{2}}=\int_{0}^{\infty}d\mu e^{-\mu\left(s_{1}+s_{2}\right)}$ and integrating over $s_{1}$ and $s_{2}$ we obtain
\begin{align}
(\ref{nestmu})\leq & \int d\mu \left( \left|\frac{1}{1-\tilde{\beta}+i\left(\tilde{\eta}+\mu\right)}\right|^{c_{l}}
\left|\frac{1}{1-\beta+i\left(\eta+\mu\right)}\right|^{d_{l}}
+\left|\frac{1}{-\tilde{\beta}+i\left(\tilde{\eta}+\mu\right)}\right|^{c_{l}}
\left|\frac{1}{-\beta+i\left(\eta+\mu\right)}\right|^{d_{l}}\right)
\end{align}
For the other integrations over the $\omega_{j}$'s, with $j\geq 2$, in Eq. (\ref{eq:Qtsl}) we use the usual bounds
\begin{align}
\int d\omega_{j} \left|\frac{1}{\omega_{j}-\tilde{\alpha}-i\tilde{\eta}}\right|^{a_{j}}
\left|\frac{1}{\omega_{j}-\alpha-i\eta}\right|^{b_{j}}
\left|\frac{1}{\omega_{j}-\tilde{\beta}+i\tilde{\eta}}\right|^{c_{j}}
\left|\frac{1}{\omega_{j}-\beta-i\eta}\right|^{d_{j}}\leq& \left(\frac{1}{\eta}\right)^{a_{j}+b_{j}+c_{j}+d_{j}-1} \label{usualb}\\
\int d\omega'_{j} \left|\frac{1}{\omega'_{j}-\gamma_{j}-i\eta_{j}}\right|\leq& \log\eta .
\end{align}
Inserting these bounds in Eq. (\ref{eq:Qtsl}) we obtain
\begin{align}
\left| Q_{\pi}\right|\leq& \lambda^{2(M+n+1)}
\int d\mu\int d\alpha d\tilde{\alpha}d\beta d\tilde{\beta}\int d\omega_{0} d \omega_{1} \left|\frac{1}{\alpha-\tilde{\alpha}-i(\eta-\tilde{\eta})}\right| 
\left|\frac{1}{\beta-\tilde{\beta}-i(\eta-\tilde{\eta})}\right|
\notag\\
\times&  \left|\frac{1}{\omega_{0}-\tilde{\alpha}-i\tilde{\eta}}\right|
\left|\frac{1}{\omega_{1}-\alpha-i\eta}\right|
\left|\frac{1}{\omega_{0}-\tilde{\beta}+i\tilde{\eta}}\right|
\left|\frac{1}{\omega_{1}-\beta-i\eta}\right|
 \left|\frac{1}{1-\tilde{\beta}+i\left(\tilde{\eta}+\mu\right)}\right|^{c_{j}}
\left|\frac{1}{1-\beta-i\left(\eta+\mu\right)}\right|^{d_{j}} \notag \\
\times & \eta^{2}\prod_{j=0,j\neq l}^{\bar{n}}\left(\frac{1}{\eta}\right)^{a_{j}+b_{j}+c_{j}+d_{j}-1}\log^{n'}(\eta) . \label{eq:Qtsl2}
\end{align}
Because of Eq. (\ref{eq:thetaj}) and (\ref{eq:thetajdif})  we have $\eta<\tilde{\eta}<\tilde{\eta}-\eta$ and so 
if we replace in Eq. (\ref{eq:Qtsl2}) $\eta-\tilde{\eta}$ by $\eta$ and $\tilde{\eta}$ by $\eta$ the inequality will still hold. 
To bound the integration we successively use inequality (\ref{eq:ABlog}) on the integrations over $\omega_{0}$, $\omega_{1}$, $\alpha$ and $\tilde{\alpha}$ to obtain 
\begin{align}
&\int_{0}^{\infty} d\mu
\int d\alpha d\tilde{\alpha}d\beta d\tilde{\beta}\int d\omega_{0} d \omega_{1} \left|\frac{1}{\alpha-\tilde{\alpha}-i\eta}\right|
\left|\frac{1}{\beta-\tilde{\beta}-i\eta}\right|
 \left|\frac{1}{\omega_{0}-\tilde{\alpha}-i\eta}\right|
\left|\frac{1}{\omega_{1}-\alpha-i\eta}\right| \notag \\
\times &\left|\frac{1}{\omega_{0}-\tilde{\beta}+i\eta}\right|
\left|\frac{1}{\omega_{1}-\beta-i\eta}\right| 
 \left|\frac{1}{1-\tilde{\beta}+i\left(\eta+\mu\right)}\right|^{c_{j}}
\left|\frac{1}{1-\beta-i\left(\eta+\mu\right)}\right|^{d_{j}} \notag  \\
\leq &\left|\log \eta\right|^{4}
\int_{0}^{\infty} d\mu
\int d\beta d\tilde{\beta} 
\left|\frac{1}{\beta-\tilde{\beta}-i\eta}\right|^{2}
 \left|\frac{1}{1-\tilde{\beta}+i\left(\eta+\mu\right)}\right|^{c_{j}}
\left|\frac{1}{1-\beta-i\left(\eta+\mu\right)}\right|^{d_{j}} \notag  
\end{align}
When $\mu$ greater then a constant $C$ this is bounded by $\frac{1} {C^{d_{j}+c_{j}-1}\eta}$ and so we can consider only the region where $\mu$ is bounded.
For $c_{j}$ or $d_{j}$ greater then $2$ we have:
\begin{align}
\leq &\left|\log \eta\right|^{4}\frac{1}{\eta}
\int_{0}^{C} d\mu
\int d\beta d\tilde{\beta} 
\left|\frac{1}{\beta-\tilde{\beta}-i\eta}\right|
 \left|\frac{1}{1-\tilde{\beta}+i\left(\tilde{\eta}+\mu\right)}\right|^{c_{j}}
\left|\frac{1}{1-\beta-i\left(\eta+\mu\right)}\right|^{d_{j}} \notag  \\
\leq &\left|\log \eta\right|^{5}\frac{1}{\eta}
\int_{0}^{C} d\mu
 d\beta 
 \left|\frac{1}{1-\beta+i\eta}\right|
\left|\frac{1}{1-\beta-i\eta}\right|  \left|\frac{1}{\tilde{\eta}+\mu}\right|^{c_{j}-1}
\left|\frac{1}{\eta+\mu}\right|^{d_{j}-1} \notag  \\
\leq &
\left|\log \eta\right|^{6}\left|\frac{1}{\eta}\right|^{2}
\int d\mu
 \left|\frac{1}{\left(\tilde{\eta}+\mu\right)}\right|^{c_{j}-1}
\left|\frac{1}{\left(\eta+\mu\right)}\right|^{d_{j}-1} \notag  \\
\leq &\left|\log \eta\right|^{6}
\left|\frac{1}{\eta}\right|^{c_{j}+d_{j}-1}
\end{align}
For $c_{j}=d_{j}=1$ we have  
\begin{align}
 \leq &\left|\log \eta\right|^{4}
\int_{0}^{C} d\mu
\int d\beta d\tilde{\beta} 
\left|\frac{1}{\beta-\tilde{\beta}-i\eta}\right|^{2}
 \left|\frac{1}{1-\beta-i\left(\eta+\mu\right)}\right|^{2} \notag \\
 \leq &\frac{\left|\log \eta\right|^{4}}{\eta}
\int_{0}^{C} d\mu
\int d\beta
 \left|\frac{1}{1-\beta+i\left(\eta+\mu\right)}\right|^{2} \notag  \\
 \leq &\frac{\left|\log \eta\right|^{5}}{\eta}
\end{align}
Inserting this in Eq. (\ref{eq:Qtsl2}) we obtain
\begin{align}
\left| Q_{\pi}(\theta,\tilde{s},\lambda)\right|\leq& \lambda^{2(M+n+1)}
\eta^{2}\prod_{j=0}^{\bar{n}}\left(\frac{1}{\eta}\right)^{a_{j}+b_{j}+c_{j}+d_{j}-1}\log^{n'}(\eta)\notag \\
\leq& \lambda^{2(M+n+1)}
t^{M+n}\log^{n'}t \\
\leq& T^{M+n+1}
\frac{\log^{M+n+1}t}{t} \label{eq:Qtsl3}
\end{align}
where we have used Eqs. (\ref{ID1}) and (\ref{ID2}) to find the exponent of $t$.
In case that the nest is such that for a specific $l$ only one out of $a_{l}$, $b_{l}$, $c_{l}$ and $d_{l}$ is different then $0$ we can perform directly the integration over $\omega_{l}$ and follow the procedure of section \ref{sec:nested}.
\end{proof}
We now turn to simple graphs.
\subsection{SIMPLE}
If $\pi(M+n+1,M+n+1)$ is a simple graph then in Eq. (\ref{eq:Qtsl}) for each $j$ we have an $a_{j}$ or $b_{j}$ different from $0$ and $c_{j}$ or $d_{j}$ different from $0$.
Similar to how was done in section \ref{sec:simple} we can prove that the contribution of a simple graph can be decomposed in two parts contributing in two different ways.
We define now $\tilde{Q}_{\pi}$ as $Q_{\pi}$ from Eq. (\ref{eq:Qtsl}) but with the propagators $\Theta(\gamma_{j},\eta_{j})=\int d\omega'_{j} \frac{1}{\omega'_{j}-\gamma_{j}-i\eta_{j}}$ replaced by $\Theta(\omega_{\bar{n}})$.
 The difference between $\tilde{Q}_{\pi}$ and $Q_{\pi}$, $\Delta Q_{\pi}$, is then  the following:
\begin{align}
\Delta Q_{\pi}(\theta,\tilde{s},\lambda)=& \lambda^{2(M+n+1)}
\int_{\theta}^{\tilde{s}} d\tau ds\int d\alpha d\tilde{\alpha}d\beta d\tilde{\beta} e^{-i(\alpha+i\eta)s}
e^{-i(\tilde{\alpha}+i\tilde{\eta})(\tilde{s}-s)}e^{i(\beta-i\eta)\tau}e^{i(\tilde{\beta}-i\tilde{\eta})(\tilde{s}-\tau)} \notag\\
\times& \prod_{j=0}^{\bar{n}}\int d\omega_{j}\left(\frac{1}{\omega_{j}-\tilde{\alpha}-i\tilde{\eta}}\right)^{a_{j}}
\left(\frac{1}{\omega_{j}-\alpha-i\eta}\right)^{b_{j}}
\left(\frac{1}{\omega_{j}-\tilde{\beta}+i\tilde{\eta}}\right)^{c_{j}}
\left(\frac{1}{\omega_{j}-\beta-i\eta}\right)^{d_{j}} \notag \\
\times&\left( 
\Theta^{n_{1}}(\tilde{\alpha},\tilde{\eta})\Theta^{n_{2}}(\alpha,\eta)
\bar{\Theta}^{n_{3}}(\beta,\eta)
\bar{\Theta}^{n_{4}}(\tilde{\beta},\tilde{\eta})
-\Theta^{n_{1}+n_{2}}(\omega_{\bar{n}})\bar{\Theta}^{n_{3}+n_{4}}(\omega_{\bar{n}})\right) \label{eq:deltaQ}
\end{align}
with $n_{1}+n_{2}+n_{3}+n_{4}=n'$.
We will now prove the following: 
\begin{lem}
\label{lemmasimplegraphs}
\begin{align} 
\left|\Delta Q_{\pi}(\theta,\tilde{s},\lambda)\right|
\leq & \epsilon^{-1}\left(CT\right)^{M+n+1} \frac{\log^{5+n'}t}{t}
\end{align}
\end{lem}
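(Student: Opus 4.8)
The plan is to mirror, almost step for step, the proof of Theorem~\ref{thm:simple}; the only genuine novelties are the two extra spectral parameters $\tilde\alpha,\tilde\beta$ and the two extra time integrations over $s$ and $\tau$ present in the $Q_\pi$ of Eq.~(\ref{eq:Qtsl}). First I would work on the bracket in the last line of Eq.~(\ref{eq:deltaQ}). Telescoping the product $\Theta^{n_1}(\tilde\alpha,\tilde\eta)\Theta^{n_2}(\alpha,\eta)\bar\Theta^{n_3}(\beta,\eta)\bar\Theta^{n_4}(\tilde\beta,\tilde\eta)$ against $\Theta^{n_1+n_2}(\omega_{\bar n})\bar\Theta^{n_3+n_4}(\omega_{\bar n})$ exactly as in Section~\ref{sec:simple}, and using that every $\Theta$-factor is bounded by $C|\log\eta|$ while $|\Theta(\omega_{\bar n})|\le C$ (the $\epsilon$-cutoff on $P_0$ keeps $\omega_{\bar n}$ away from $0$ and $1$, so the limiting $\Theta(\omega_{\bar n})$ is an $\epsilon$-dependent constant), one bounds the bracket by $C|\log\eta|^{n'}$ times a sum over $\gamma\in\{\alpha,\tilde\alpha,\beta,\tilde\beta\}$ of $|\Theta(\gamma,\cdot)-\Theta(\omega_{\bar n})|$. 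Each such difference is then estimated by inequality~(\ref{eq:thetabound}), which yields a factor $|\omega_{\bar n}-\gamma\mp i\eta|$ times the four edge terms $1/|1-\gamma\mp i\eta|$, $1/|1-\omega_{\bar n}|$, $1/|\gamma\pm i\eta|$ and $1/|\omega_{\bar n}|$, the signs matching the position of the relevant propagator.

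Next I would substitute this into Eq.~(\ref{eq:deltaQ}), carry out the $s$- and $\tau$-integrations as in the proof of Lemma~\ref{lemmanestedgraphs} --- producing the denominators $1/|\alpha-\tilde\alpha-i(\eta-\tilde\eta)|$ and $1/|\beta-\tilde\beta-i(\eta-\tilde\eta)|$ together with boundary terms that are harmless because the choices $\eta=t^{-1}$, $\tilde\eta=(\theta_{j+1}-\theta_j)^{-1}$ keep all exponentials bounded on $\theta_j\le s,\tau\le\tilde s$ --- and then take absolute values inside the remaining integrals. The key point is to use the factor $|\omega_{\bar n}-\gamma\mp i\eta|$ to cancel one propagator attached to $\omega_{\bar n}$; if only a propagator of the ``other'' kind is available (say a gain $|\omega_{\bar n}-\alpha-i\eta|$ against an $\tilde\alpha$-propagator of $\omega_{\bar n}$), one first splits $|\omega_{\bar n}-\alpha-i\eta|\le|\omega_{\bar n}-\tilde\alpha-i\tilde\eta|+|\alpha-\tilde\alpha-i(\eta-\tilde\eta)|$, the second piece being absorbed by the denominator just produced. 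After this cancellation I would apply inequality~(\ref{ine:two}) to every $\omega_j$-integration with $j\neq 0,\bar n$, and bound the residual integrations over $\omega_0,\omega_{\bar n},\alpha,\tilde\alpha,\beta,\tilde\beta$ one after another by $|\log\eta|$ using inequalities~(\ref{eq:ABlog}) and (\ref{eq:ABdelta}) for the paired variables. For the two edge terms carrying $1/|\omega_{\bar n}|$ or $1/|1-\omega_{\bar n}|$ one does not integrate $\omega_{\bar n}$ against a propagator but keeps $P_0(x,\omega_{\bar n})$, which vanishes within distance $\epsilon$ of the edges, so $\int d\omega_{\bar n}\,P_0(x,\omega_{\bar n})/|\omega_{\bar n}|\le\epsilon^{-1}$; this is the origin of the $\epsilon^{-1}$ in the statement.

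Finally, collecting exponents with the help of Eqs.~(\ref{ID1}) and (\ref{ID2}): the single cancellation against $|\omega_{\bar n}-\gamma\mp i\eta|$ gains one extra factor $\eta=t^{-1}$ relative to the (bounded) estimate one would otherwise get for $\tilde Q_\pi$ itself, so $\lambda^{2(M+n+1)}$ together with the surviving $\eta$-powers assembles into $(CT)^{M+n+1}t^{-1}$; counting the logarithms --- the $n'$ coming from the telescoped bracket plus a fixed number ($\le5$) from the $\alpha,\tilde\alpha,\beta,\tilde\beta,\omega_0$ integrations --- gives the stated $\log^{5+n'}t$. I expect the genuine difficulty to be exactly this bookkeeping: checking that after the $s,\tau$-integrations there remain enough propagators of multiplicity $\ge2$ to feed inequality~(\ref{ine:two}) for all but $\omega_0$ and $\omega_{\bar n}$, that each paired denominator is always absorbed into a $|\log\eta|$ rather than into a power of $\eta^{-1}$, and that the ``routing'' of the gain factor onto an available propagator of $\omega_{\bar n}$ is legitimate in every simple-graph topology --- in particular in the exceptional one where $\omega_{\bar n}$ sits inside the $V^2$ block and so carries one fewer propagator than in the situations of Sections~\ref{sec:nested} and \ref{sec:simple}.
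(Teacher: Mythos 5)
Your proposal follows essentially the same route as the paper's proof: telescoping the $\Theta$-product difference into four pieces bounded by $C^{n'}|\log\eta|^{n'}$ times a single $\Theta$-difference, applying inequality (\ref{eq:thetabound}) to gain the factor $|\omega_{\bar{n}}-\gamma\mp i\eta|$ that cancels a propagator of $\omega_{\bar{n}}$, integrating out $s,\tau$ to produce the $|\alpha-\tilde\alpha-i(\eta-\tilde\eta)|^{-1}$ denominators, using (\ref{usualb})/(\ref{ine:two}) on the intermediate $\omega_{j}$ and (\ref{eq:ABlog}) for the logarithms, and invoking the $\epsilon$-cutoff of $P_{0}$ for the $1/|\omega_{\bar{n}}|$ edge terms before assembling the exponents via (\ref{ID1})--(\ref{ID2}). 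The only additions beyond the paper's argument are your triangle-inequality routing of the gain factor and the remark about the $V^{2}$ block (which in fact only arises for $\psi^{2}$, not for this lemma); these are refinements of detail, not a different method.
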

\begin{proof}{Lemma \ref{lemmasimplegraphs}}\\
As done previously we can rewrite the difference of $\Theta$ functions as follows:
\begin{align}
 &\Theta^{n_{1}}(\tilde{\alpha},\tilde{\eta})\Theta^{n_{2}}(\alpha,\eta)
\bar{\Theta}^{n_{3}}(\beta,\eta)
\bar{\Theta}^{n_{4}}(\tilde{\beta},\tilde{\eta})
-\Theta^{n_{1}+n_{2}}(\omega_{\bar{n}})\bar{\Theta}^{n_{3}+n_{4}}(\omega_{\bar{n}})\notag \\
=& \Theta^{n_{1}}(\tilde{\alpha},\tilde{\eta})\Theta^{n_{2}}(\alpha,\eta)
\bar{\Theta}^{n_{3}}(\beta,\eta)
\left(\bar{\Theta}^{n_{4}}(\tilde{\beta},\tilde{\eta}) -\bar{\Theta}^{n_{4}}(\omega_{\bar{n}})\right) \notag \\
+&\Theta^{n_{1}}(\tilde{\alpha},\tilde{\eta})\Theta^{n_{2}}(\alpha,\eta)
\bar{\Theta}^{n_{4}}(\omega_{\bar{n}})
\left(\bar{\Theta}^{n_{3}}(\beta,\eta)
 -\bar{\Theta}^{n_{3}}(\omega_{\bar{n}})\right) \notag \\
+&\Theta^{n_{1}}(\tilde{\alpha},\tilde{\eta})\bar{\Theta}^{n_{3}}(\omega_{\bar{n}}
\bar{\Theta}^{n_{4}}(\omega_{\bar{n}})
\left(\Theta^{n_{2}}(\alpha,\eta)
 -\Theta^{n_{2}}(\omega_{\bar{n}})\right) \notag \\
+&\Theta^{n_{2}}(\omega_{\bar{n}})\bar{\Theta}^{n_{3}}(\omega_{\bar{n}})
\bar{\Theta}^{n_{4}}(\omega_{\bar{n}})
\left(
\Theta^{n_{1}}(\tilde{\alpha},\tilde{\eta})
 -\Theta^{n_{1}}(\omega_{\bar{n}})\right) \label{eq:thetas}
\end{align}
We define $A$, $B$, $C$ and $D$ to be the first second third and fourth part of the sum in Eq. (\ref{eq:thetas}). We also denote by $\Delta Q^{A}_{\pi}$, $\Delta Q^{B}_{\pi}$, $\Delta Q^{C}_{\pi}$ and $\Delta Q^{D}_{\pi}$ the contribution to 
$\Delta Q_{\pi}(\theta,\tilde{s},\lambda)$ from $A$, $B$, $C$ and $D$ in Eq. (\ref{eq:deltaQ}).
Each difference can the again be expanded as follows:
\begin{align}
\Theta^{n_{1}}(\tilde{\alpha},\tilde{\eta})
 -\Theta^{n_{1}}(\omega_{\bar{n}}) =\left(\Theta(\tilde{\alpha},\tilde{\eta})
 -\Theta(\omega_{\bar{n}})\right)\sum_{p=0}^{n_{1}-1}\Theta^{p}(\omega_{\bar{n}}) \Theta^{n_{1}-1-p}(\tilde{\alpha},\tilde{\eta}) 
\end{align}
Since every $\Theta(\alpha,\eta)$ is bounded by $\left|\log \eta\right|$ we have for the first factor for example
\begin{align}
\left|A\right|=&\left|\Theta^{n_{1}}(\tilde{\alpha},\tilde{\eta})\Theta^{n_{2}}(\alpha,\eta)
\bar{\Theta}^{n_{3}}(\beta,\eta)
\left(\bar{\Theta}^{n_{4}}(\tilde{\beta},\tilde{\eta}) -\bar{\Theta}^{n_{4}}(\omega_{\bar{n}})\right)\right| \notag \\
\leq &
\left|\log \eta\right|^{n_{1}+n_{2}+n_{3}}
\left|\Theta(\tilde{\alpha},\tilde{\eta}) -\Theta\right|
\sum_{p=0}^{n_{4}-1}C^{p} \left|\log\eta\right|^{n_{4}-1-p} \notag \\
\leq &
C^{n_{4}}\left|\log\eta\right|^{n_{1}+n_{2}+n_{3}+n_{4}}
\left|\Theta(\tilde{\alpha},\tilde{\eta}) -\Theta(\omega_{\bar{n}})\right| \label{eq:telescopic}
\end{align}
If we now integrate in Eq. (\ref{eq:deltaQ}) over $s$ and $\tau$ and  use Eq. (\ref{eq:telescopic}) and (\ref{eq:thetabound})  we get 
\begin{align}
\left|\Delta Q^{A}_{\pi}(\theta,\tilde{s},\lambda)\right|
\leq& \lambda^{2(M+n+1)}
\int d\alpha d\tilde{\alpha}d\beta d\tilde{\beta} \prod_{j=0}^{\bar{n}}\int d\omega_{j}
\left|\frac{1}{\alpha-\tilde{\alpha}-i\left(\eta-\tilde{\eta}\right)}\right|
\left|\frac{1}{\beta-\tilde{\beta}-i\left(\eta-\tilde{\eta}\right)}\right|
 \notag\\
\times& \left|\frac{1}{\omega_{j}-\tilde{\alpha}-i\tilde{\eta}}\right|^{a_{j}}
\left|\frac{1}{\omega_{j}-\alpha-i\eta}\right|^{b_{j}}
\left|\frac{1}{\omega_{j}-\tilde{\beta}+i\tilde{\eta}}\right|^{c_{j}}
\left|\frac{1}{\omega_{j}-\beta-i\eta}\right|^{d_{j}} \notag \\
\times& \left|C\log \eta\right|^{n'}
\left|\omega_{\bar{n}}-\tilde{\alpha}-i\tilde{\eta}\right|
\left(\left|\frac{1}{\omega_{\bar{n}}}\right|+\left|\frac{1}{1-\omega_{\bar{n}}}\right|+
\left|\frac{1}{\tilde{\alpha}+i\tilde{\eta}}\right|+\left|\frac{1}{1-\tilde{\alpha}-i\tilde{\eta}}\right|\right)
 \label{eq:deltaQ2}
\end{align}
We bound the integrations over $\omega_{j}$, with $j\neq 0,\bar{n}$, by using Eq. (\ref{usualb}).
\begin{align}
\left|\Delta Q^{A}_{\pi}(\theta,\tilde{s},\lambda)\right|
\leq& \lambda^{2(M+n+1)}
\int d\alpha d\tilde{\alpha}d\beta d\tilde{\beta}\int d\omega_{0}d\omega_{\bar{n}}
\left|\frac{1}{\alpha-\tilde{\alpha}-i\left(\eta-\tilde{\eta}\right)}\right|
\left|\frac{1}{\beta-\tilde{\beta}-i\left(\eta-\tilde{\eta}\right)}\right|
 \notag\\
\times& 
\left|\frac{1}{\omega_{0}-\alpha-i\eta}\right|
\left|\frac{1}{\omega_{\bar{n}}-\tilde{\beta}+i\tilde{\eta}}\right|
\left|\frac{1}{\omega_{0}-\beta-i\eta}\right| \notag \\
\times&
\left(\left|\frac{1}{\omega_{\bar{n}}}\right|+\left|\frac{1}{1-\omega_{\bar{n}}}\right|+
\left|\frac{1}{\tilde{\alpha}+i\tilde{\eta}}\right|+\left|\frac{1}{1-\tilde{\alpha}-i\tilde{\eta}}\right|\right)
\eta^{2}\left|C\log \eta\right|^{n'} \prod_{j=0}^{\bar{n}}\left(\frac{1}{\eta}\right)^{a_{j}+b_{j}+c_{j}+d_{j}-1}
\end{align}
Applying now the bound of Eq. (\ref{eq:ABlog}) multiple times and remembering that the integration over $\omega_{\bar{n}}$ was cut of  and goes from  $1-\epsilon$ and $\epsilon$ because of  our choice of the initial condition we have
\begin{align}
\left|\Delta Q^{A}_{\pi}(\theta,\tilde{s},\lambda)\right|
\leq&\epsilon^{-1}\lambda^{2(M+n+1)} \log^{5}(\eta) C^{n'}\log^{n'}\eta\eta^{2}\prod_{j=0}^{\bar{n}}\left(\frac{1}{\eta}\right)^{a_{j}+b_{j}+c_{j}+d_{j}-1} \notag\\
\leq&\epsilon^{-1}\left(\lambda^{2}
t\right)^{M+n+1} \frac{C^{n'}\log^{5+n'} t}{t}
\end{align}
We have used the  identities of Eqs. (\ref{ID1}) and (\ref{ID2}) to compute the exponent of $t$. We can bound similarly the contributions from $B$, $C$ and $D$.
Thus giving
\begin{align}
\left|\Delta Q_{\pi}(\theta,\tilde{s},\lambda)\right|
\leq&\epsilon^{-1}\left(C\lambda^{2}
t\right)^{M+n+1} \frac{\log^{5+n'}t}{t}
\end{align}
\end{proof}
We now bound $\tilde{Q}_{\pi}$.
\begin{lem}
\label{lem:Qtilde}
\begin{align}
\left| \tilde{Q}_{\pi(M+n+1,M+n+1)}(\theta_{j},\tilde{s},\lambda)\right|\leq\frac{\left(CT\right)^{M+n+1}}{\left(M+n+1\right)!^{a}} 
\end{align}
with $0\leq a<1$.
\end{lem}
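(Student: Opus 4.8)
The plan is to reproduce the argument of Theorem \ref{thm:Qtildenm} essentially verbatim, the only novelties being the richer structure of $\tilde{Q}_{\pi}$ in Eq. (\ref{eq:Qtsl}) — four spectral variables $\alpha,\tilde{\alpha},\beta,\tilde{\beta}$, two time variables $s,\tau\in[\theta_{j},\tilde{s}]$, two regulators $\eta=t^{-1}$ and $\tilde{\eta}=(\theta_{j+1}-\theta_{j})^{-1}\geq t^{-1}$ (Eqs. (\ref{eta1}), (\ref{eta2})), and the substituted functions $\Theta(\omega_{\bar{n}}),\bar{\Theta}(\omega_{\bar{n}})$. Since the initial data vanishes in an $\epsilon$-neighbourhood of the spectral edges, the relevant $\omega_{\bar{n}}$ lies in $[\epsilon,1-\epsilon]$, so $\Theta(\omega_{\bar{n}})$ and $\bar{\Theta}(\omega_{\bar{n}})$ are bounded by a constant $C_{\epsilon}$ and factor harmlessly out of every estimate; the counting identities (\ref{ID1}) and (\ref{ID2}) play the role that $\sum_{j}(k_{j}+p_{j}+1)=\tfrac{n+m}{2}+1$ played in section \ref{sec:simple}, fixing the total propagator power ($2(M+n+2)$) and the number of independent variables ($\bar{n}+1+n'=M+n+2$).

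First I would perform the interpolation split used in Theorem \ref{thm:Qtildenm}: keeping the $\omega_{j}$-integrations on the outside, write the $(\alpha,\tilde\alpha,\beta,\tilde\beta,s,\tau)$-integral as $|\,\cdot\,|=|\,\cdot\,|^{1-a}\,|\,\cdot\,|^{a}$. For the $a$-power, use the identity (\ref{eq:IDomega}) to convert each propagator power into a time integral, carry out the $\alpha$- and $\tilde{\alpha}$-integrations on the ket side and the $\beta$-, $\tilde{\beta}$-integrations on the bra side, and then the $s,\tau$-integrations over the simplex; exactly as in Eq. (\ref{eq:powera}) this yields a factor bounded by $(Ct)^{p(a)}/(M+n+1)!^{a}$ for an explicit exponent $p(a)$, the factorial arising from the convolution of the two simplices. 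For the $(1-a)$-power, pull the absolute values inside, bound the excess multiplicities $|\omega_{j}-\alpha-i\eta|^{-(\cdot)}$ etc.\ by powers of $\eta^{-1}$, dispatch the $\omega_{j}$-integrations for $j\neq 0,1$ by the fractional-power inequality (\ref{eq:oneminusa}) with $\delta=2(1-a)$, handle $\omega_{0},\omega_{1}$ with (\ref{eq:ABdelta}) followed by (\ref{eq:oneminusa}), and control the two extra kernels $|\alpha-\tilde{\alpha}-i(\eta-\tilde{\eta})|^{-1}$, $|\beta-\tilde{\beta}-i(\eta-\tilde{\eta})|^{-1}$ via (\ref{eq:ABlog}); since $\tilde{\eta}\geq\eta$ and $\tilde{\eta}-\eta\geq\eta$, one may replace $\tilde{\eta}$ by $\eta$ wherever convenient, just as in Lemma \ref{lemmasimplegraphs}. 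This part produces $\lambda^{2(M+n+1)}\eta^{-q(a)}\log^{r}(\eta)$ with no factorial, paralleling Eq. (\ref{eq:powera2}).

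Combining the two pieces and reading off $p(a)+q(a)$ from (\ref{ID1}) and (\ref{ID2}), the net power of $t$ (with $\eta=t^{-1}$) is exactly $M+n+1$, which together with $\lambda^{2(M+n+1)}$ assembles into $(\lambda^{2}t)^{M+n+1}=T^{M+n+1}$, while the $a$-power supplies the denominator $(M+n+1)!^{a}$ and the bounded number $r$ of logarithms is absorbed into the constant ($\log^{r}t\leq C^{M+n+1}$); this is the exact analogue of the chain leading from Eqs. (\ref{eq:powera})--(\ref{eq:powera2}) to (\ref{eq:Qboundeda}). The admissible range is any $a$ with $0\le a<1$ for which the fractional-power integrals over the $\omega_{j}$ and over $\alpha,\beta$ stay convergent (e.g.\ $a=\tfrac14$ or $a=\tfrac12$, as needed in section \ref{sec:error1}). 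The main obstacle will be the bookkeeping: verifying those convergence conditions and, above all, tracking the four coupled spectral variables and the two time integrals through the interpolation so that no residual positive power of $t$ survives after multiplying by $\lambda^{2(M+n+1)}$ — everything else is the routine repetition of the single-time estimates already established.
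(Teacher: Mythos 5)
Your proposal follows essentially the same route as the paper's own proof: the same interpolation split $|\cdot|=|\cdot|^{1-a}|\cdot|^{a}$ with the $\omega_{j}$-integrations outside, the time-representation of the $a$-power producing the factorial via the simplex volumes, and the fractional-power propagator bounds (\ref{eq:oneminusa}), (\ref{eq:ABdelta}) for the $(1-a)$-power, with the counting identities (\ref{ID1})--(\ref{ID2}) fixing the net power of $t$. The only cosmetic deviations (using $\omega_{0},\omega_{1}$ in place of $\omega_{0},\omega_{\bar{n}}$, and citing (\ref{eq:ABlog}) where the paper applies (\ref{eq:ABdelta}) to the $(1-a)$-power kernels) do not affect the argument.
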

\begin{proof}{Lemma \ref{lem:Qtilde}}
We have 
\begin{align}
\tilde{Q}_{\pi}(\theta,\tilde{s},\lambda)=& \lambda^{2(M+n+1)}
\prod_{j=0}^{\bar{n}}\int d\omega_{j}\int_{\theta}^{\tilde{s}} d\tau ds\int d\alpha d\tilde{\alpha}d\beta d\tilde{\beta} e^{-i(\alpha+i\eta)s}
e^{-i(\tilde{\alpha}+i\tilde{\eta})(\tilde{s}-s)}e^{i(\beta-i\eta)\tau}e^{i(\tilde{\beta}-i\tilde{\eta})(\tilde{s}-\tau)} \notag\\
\times& \prod_{j=1}^{\bar{n}}\left(\frac{1}{\omega_{j}-\tilde{\alpha}-i\tilde{\eta}}\right)^{a_{j}}
\left(\frac{1}{\omega_{j}-\alpha-i\eta}\right)^{b_{j}}
\left(\frac{1}{\omega_{j}-\beta+i\eta}\right)^{c_{j}}
\left(\frac{1}{\omega_{j}-\tilde{\beta}+i\tilde{\eta}}\right)^{d_{j}} \notag \\
\times& \Theta^{n_{1}+n_{2}}(\omega_{\bar{n}})\bar{\Theta}^{n_{3}+n_{4}}(\omega_{\bar{n}}) \label{begin}
\end{align}
The proof is similar to that of theorem  \ref{thm:Qtildenm}. Starting from Eq. (\ref{begin}) we get 
\begin{align}
\tilde{Q}_{\pi}(\theta,\tilde{s},\lambda)\leq & \lambda^{2(M+n+1)}
\prod_{j=0}^{\bar{n}}\int d\omega_{j}
\Big|\int_{\theta}^{\tilde{s}} d\tau ds\int d\alpha d\tilde{\alpha}d\beta d\tilde{\beta} e^{-i(\alpha+i\eta)s}
e^{-i(\tilde{\alpha}+i\tilde{\eta})(\tilde{s}-s)}e^{i(\beta-i\eta)\tau}e^{i(\tilde{\beta}-i\tilde{\eta})(\tilde{s}-\tau)} \notag\\
\times& \prod_{j=0}^{\bar{n}}\left(\frac{1}{\omega_{j}-\tilde{\alpha}-i\tilde{\eta}}\right)^{a_{j}}
\left(\frac{1}{\omega_{j}-\alpha-i\eta}\right)^{b_{j}}
\left(\frac{1}{\omega_{j}-\tilde{\beta}+i\tilde{\eta}}\right)^{c_{j}}
\left(\frac{1}{\omega_{j}-\beta-i\eta}\right)^{d_{j}} \notag \\
\times& \Theta^{n_{1}+n_{2}}(\omega_{\bar{n}})\bar{\Theta}^{n_{3}+n_{4}}(\omega_{\bar{n}})\Big|^{1-a} \notag \\
\times&\Big|\int_{\theta}^{\tilde{s}} d\tau ds\int d\alpha d\tilde{\alpha}d\beta d\tilde{\beta} e^{-i(\alpha+i\eta)s}
e^{-i(\tilde{\alpha}+i\tilde{\eta})(\tilde{s}-s)}e^{i(\beta-i\eta)\tau}e^{i(\tilde{\beta}-i\tilde{\eta})(\tilde{s}-\tau)} \notag\\
\times& \prod_{j=0}^{\bar{n}}\left(\frac{1}{\omega_{j}-\tilde{\alpha}-i\tilde{\eta}}\right)^{a_{j}}
\left(\frac{1}{\omega_{j}-\alpha-i\eta}\right)^{b_{j}}
\left(\frac{1}{\omega_{j}-\tilde{\beta}+i\tilde{\eta}}\right)^{c_{j}}
\left(\frac{1}{\omega_{j}-\beta-i\eta}\right)^{d_{j}} \notag \\
\times& \Theta^{n_{1}+n_{2}}(\omega_{\bar{n}})\bar{\Theta}^{n_{3}+n_{4}}(\omega_{\bar{n}})\Big|^{a} \label{final} 
\end{align} 
For the second part , that is the one that is to the power of $a$, we use the $t$-representation
\begin{align}
&\Big|\int_{\theta}^{\tilde{s}} d\tau ds\int d\alpha d\tilde{\alpha}d\beta d\tilde{\beta} e^{-i(\alpha+i\eta)s}
e^{-i(\tilde{\alpha}+i\tilde{\eta})(\tilde{s}-s)}e^{i(\beta-i\eta)\tau}e^{i(\tilde{\beta}-i\tilde{\eta})(\tilde{s}-\tau)} \notag\\
\times&
 \prod_{j=0}^{\bar{n}}\left(\frac{1}{\omega_{j}-\tilde{\alpha}-i\tilde{\eta}}\right)^{a_{j}}
\left(\frac{1}{\omega_{j}-\alpha-i\eta}\right)^{b_{j}}
\left(\frac{1}{\omega_{j}-\tilde{\beta}+i\tilde{\eta}}\right)^{c_{j}}
\left(\frac{1}{\omega_{j}-\beta-i\eta}\right)^{d_{j}} \Theta^{n_{1}+n_{2}}(\omega_{\bar{n}})\bar{\Theta}^{n_{3}+n_{4}}(\omega_{\bar{n}})\Big|^{a}  \notag \\
\leq &\Big| \int d\tau ds \frac{\left(\tilde{s}-s\right)^{\left(\sum_{j=0}^{\bar{n}}a_{j}\right)-1}}{\left(\left(\sum_{j=0}^{\bar{n}}a_{j}\right)-1\right)!}
\frac{s^{\left(\sum_{j=0}^{\bar{n}}b_{j}\right)-1}}{\left(\left(\sum_{j=0}^{\bar{n}}b_{j}\right)-1\right)!}
\frac{\left(\tilde{s}-\tau\right)^{\left(\sum_{j=0}^{\bar{n}}c_{j}\right)-1}}{\left(\left(\sum_{j=0}^{\bar{n}}c_{j}\right)-1\right)!}
\frac{\tau^{\left(\sum_{j=0}^{\bar{n}}d_{j}\right)-1}}{\left(\left(\sum_{j=0}^{\bar{n}}d_{j}\right)-1\right)!}
\Big|^{a} \notag \\
\leq &\frac{\tilde{s}^{a\left(\left(\sum_{j=0}^{\bar{n}}a_{j}+b_{j}+c_{j}+d_{j}\right)-2\right)}}{\left(2M+2n+4\right)!^{a}}
\label{est1}
\end{align}
To bound the first part, that which is to the power of $1-a$, we first integrate over $s$ and $\tau$ and take the absolute value.
\begin{align}
&\lambda^{2(M+n+1)}
\prod_{j=0}^{\bar{n}}\int d\omega_{j}
\Big|\int_{\theta}^{\tilde{s}} d\tau ds\int d\alpha d\tilde{\alpha}d\beta d\tilde{\beta} e^{-i(\alpha+i\eta)s}
e^{-i(\tilde{\alpha}+i\tilde{\eta})(\tilde{s}-s)}e^{i(\beta-i\eta)\tau}e^{i(\tilde{\beta}-i\tilde{\eta})(\tilde{s}-\tau)} \notag\\
\times& \prod_{j=0}^{\bar{n}}\left(\frac{1}{\omega_{j}-\tilde{\alpha}-i\tilde{\eta}}\right)^{\chi_{j}^{a}a_{j}}
\left(\frac{1}{\omega_{j}-\alpha-i\eta}\right)^{\chi_{j}^{b}b_{j}}
\left(\frac{1}{\omega_{j}-\tilde{\beta}+i\tilde{\eta}}\right)^{\chi_{j}^{c}c_{j}}
\left(\frac{1}{\omega_{j}-\beta-i\eta}\right)^{\chi_{j}^{d}d_{j}} \notag \\
\times& \Theta^{n_{1}+n_{2}}(\omega_{\bar{n}})\bar{\Theta}^{n_{3}+n_{4}}(\omega_{\bar{n}})\Big|^{1-a} \notag \\
\leq & \lambda^{2(M+n+1)}
\prod_{j=0}^{\bar{n}}\int d\omega_{j}
\int d\alpha d\tilde{\alpha}d\beta d\tilde{\beta} 
\left|\frac{1}{\alpha-\tilde{\alpha}+i\left(\eta-\tilde{\eta}\right)}\right|^{1-a}
\left|\frac{1}{\beta-\tilde{\beta}+i\left(\eta-\tilde{\eta}\right)}\right|^{1-a} \notag \\
\times& \prod_{j=1}^{\bar{n}}\left|\frac{1}{\omega_{j}-\tilde{\alpha}-i\tilde{\eta}}\right|^{a_{j}(1-a)}
\left|\frac{1}{\omega_{j}-\alpha-i\eta}\right|^{b_{j}(1-a)}
\left|\frac{1}{\omega_{j}-\beta+i\eta}\right|^{c_{j}(1-a)}
\left|\frac{1}{\omega_{j}-\tilde{\beta}+i\tilde{\eta}}\right|^{d_{j}(1-a)} \label{first(1-a)}
\end{align}
where we have omitted the $\Theta$ function since they are bounded by constants.
Similar to Eq. (\ref{ine:two}) we have
\begin{align}
& \int d\omega_{j} \left|\frac{1}{\omega_{j}-\tilde{\alpha}-i\tilde{\eta}}\right|^{a_{j}(1-a)}
\left|\frac{1}{\omega_{j}-\alpha-i\eta}\right|^{b_{j}(1-a)}
\left|\frac{1}{\omega_{j}-\beta+i\eta}\right|^{c_{j}(1-a)}
\left|\frac{1}{\omega_{j}-\tilde{\beta}+i\tilde{\eta}}\right|^{d_{j}(1-a)} \notag \\
\leq& \left(\frac{1}{\eta}\right)^{\left(a_{j}+b_{j}+c_{j}+d_{j}-2\right)(1-a)}
\int d\omega_{j} \left|\frac{1}{\omega_{j}-\alpha-i\eta}\right|^{(1-a)}
\left|\frac{1}{\omega_{j}-\beta-i\eta}\right|^{(1-a)} \notag \\
\leq& \left(\frac{1}{\eta}\right)^{\left(a_{j}+b_{j}+c_{j}+d_{j}-2\right)(1-a)}
\int d\omega_{j} \left|\frac{1}{\omega_{j}-\alpha-i\eta}\right|^{2(1-a)}
 \notag \\
\leq& \left(\frac{1}{\eta}\right)^{\left(a_{j}+b_{j}+c_{j}+d_{j}\right)(1-a)-1}
 \notag 
\end{align}
where we have used Eq. (\ref{eq:oneminusa}) to bound the last integration.
Using this in Eq. (\ref{first(1-a)}) for the integrations over $\omega_{j}$ with $j\neq 0$ and $j\neq\bar{n}$ we obtain:
\begin{align}
(\ref{first(1-a)}) \leq & \lambda^{2(M+n+1)}
\int d\omega_{0}
\int d\omega_{1}
\int d\alpha d\tilde{\alpha}d\beta d\tilde{\beta} 
\left|\frac{1}{\alpha-\tilde{\alpha}+i\left(\eta-\tilde{\eta}\right)}\right|^{1-a}
\left|\frac{1}{\beta-\tilde{\beta}+i\left(\eta-\tilde{\eta}\right)}\right|^{1-a} \notag \\
\times& \left|\frac{1}{\omega_{0}-\tilde{\alpha}-i\tilde{\eta}}\right|^{a_{0}(1-a)}
\left|\frac{1}{\omega_{0}-\alpha-i\eta}\right|^{b_{0}(1-a)}
\left|\frac{1}{\omega_{0}-\tilde{\beta}+i\tilde{\eta}}\right|^{c_{0}(1-a)}
\left|\frac{1}{\omega_{0}-\beta-i\eta}\right|^{d_{0}(1-a)} \notag \\
\times& \left|\frac{1}{\omega_{\bar{n}}-\tilde{\alpha}-i\tilde{\eta}}\right|^{a_{\bar{n}}(1-a)}
\left|\frac{1}{\omega_{\bar{n}}-\alpha-i\eta}\right|^{b_{\bar{n}}(1-a)}
\left|\frac{1}{\omega_{\bar{n}}-\tilde{\beta}+i\tilde{\eta}}\right|^{c_{\bar{n}}(1-a)}
\left|\frac{1}{\omega_{\bar{n}}-\beta-i\eta}\right|^{d_{\bar{n}}(1-a)} \notag \\
\times & \prod_{j=1}^{\bar{n}-1}
\left(\frac{1}{\eta}\right)^{\left(a_{j}+b_{j}+c_{j}+d_{j}\right)(1-a)-1}\notag \\
\leq & \lambda^{2(M+n+1)}
\int d\omega_{0}
\int d\omega_{\bar{n}}
\int d\alpha d\tilde{\alpha}d\beta d\tilde{\beta} 
\left|\frac{1}{\alpha-\tilde{\alpha}+i\left(\eta-\tilde{\eta}\right)}\right|^{1-a}
\left|\frac{1}{\beta-\tilde{\beta}+i\left(\eta-\tilde{\eta}\right)}\right|^{1-a} \notag \\
\times& \left|\frac{1}{\omega_{0}-\tilde{\alpha}-i\tilde{\eta}}\right|^{1-a}
\left|\frac{1}{\omega_{0}-\tilde{\beta}+i\tilde{\eta}}\right|^{1-a} 
\left|\frac{1}{\omega_{\bar{n}}-\alpha-i\eta}\right|^{1-a}
\left|\frac{1}{\omega_{\bar{n}}-\beta-i\eta}\right|^{1-a} \notag \\
\times & 
\left(\frac{1}{\eta}\right)^{\left(a_{0}+b_{0}+c_{0}+d_{0}-2\right)(1-a)
+\left(a_{\bar{n}}+b_{\bar{n}}+c_{\bar{n}}+d_{\bar{n}}-2\right)(1-a)}
\prod_{j=1}^{\bar{n}-1}
\left(\frac{1}{\eta}\right)^{\left(a_{j}+b_{j}+c_{j}+d_{j}\right)(1-a)-1} 
\end{align}
By using inequality of Eq. (\ref{eq:ABdelta}) on the integrations over $\alpha$, $\beta$, $\tilde{\alpha}$ and $\tilde{\beta}$ and the applying inequality of Eq. (\ref{eq:oneminusa}) on the integration over $\omega_{0}$ we obtain
\begin{align}
\leq & \lambda^{2(M+n+1)}
\left(\frac{1}{\eta}\right)^{1-2a}
\left(\frac{1}{\eta}\right)^{\left(a_{0}+b_{0}+c_{0}+d_{0}-2\right)(1-a)
+\left(a_{\bar{n}}+b_{\bar{n}}+c_{\bar{n}}+d_{\bar{n}}-2\right)(1-a)}
\prod_{j=1}^{\bar{n}-1}
\left(\frac{1}{\eta}\right)^{\left(a_{j}+b_{j}+c_{j}+d_{j}\right)(1-a)-1}\notag \\
\leq & \lambda^{2(M+n+1)}
\left(\frac{1}{\eta}\right)^{
-a\left(\sum_{j=0}^{\bar{n}}\left(a_{j}+b_{j}+c_{j}+d_{j}\right)-2\right)
+\sum_{j=0}^{\bar{n}}\left(a_{j}+b_{j}+c_{j}+d_{j}\right)-\bar{n}-2
}\label{est2}
\end{align}
Combining the estimates of Eqs. (\ref{est1}) and (\ref{est2}) in Eq. (\ref{final}) we obtain
\begin{align}
\left|\tilde{Q}_{\pi}(\theta,\tilde{s},\lambda)\right|\leq & \lambda^{2(M+n+1)}
\frac{\left(C\lambda^{2}t\right)^{M+n+1}}{\left(2M+2n+4\right)!^{a}}
\end{align}
\end{proof}
We can now prove theorem \ref{thm:psi1} .
\begin{proof}{Theorem \ref{thm:psi1}}\\
By Eq. (\ref{eq:Avsquared2}) and lemmas \ref{lemmanestedgraphs} and \ref{lemmasimplegraphs} we have 
\begin{align}
& \lim_{N\rightarrow \infty }
\mathbb{E}\left[\langle \psi_{M,n,\kappa,\theta_{j}}(\theta_{j+1}) |\psi_{M,n,\kappa,\theta_{j}}(\theta_{j+1})\rangle \right]
= \sum_{\pi(M+n+1,M+n+1)\in \mathcal{G}_{1}} Q_{\pi(M+n+1,M+n+1)}(\theta_{j},\theta_{j+1},\lambda) \notag \\
+&\sum_{\pi(M+n+1,M+n+1)\in \mathcal{G}_{0}} \tilde{Q}_{\pi(M+n+1,M+n+1)}(\theta_{j},\theta_{j+1},\lambda)+ \Delta  Q_{\pi(M+n+1,M+n+1)}(\theta_{j},\theta_{j+1},\lambda)\notag \\
&\leq  \frac{\left(CT\right)^{M+n+1}\log^{M+n+1}t}{t}
+\frac{\left(CT\right)^{M+n+1}\log^{M+n+1+5}t}{t}+
\frac{\left(CT\right)^{M+n+1}}{\left(M+n+1\right)!^{\frac{1}{2}}}
\end{align}
Inserting this in Eq. (\ref{eq:Avsquared}) we obtain
\begin{align}
\lim_{N\rightarrow \infty}\mathbb{E}\left[\langle \psi^{1}_{M,M_{0},\kappa}(t)|\psi^{1}_{M,M_{0},\kappa}(t)\rangle \right]
&\leq 2\kappa^{2} M_{0} \sum_{n=0}^{M_{0}}\frac{\left(CT\right)^{M+n+1}\log^{M+n+6}t}{t}\notag\\
&+\kappa^{2} M_{0} \sum_{n=0}^{M_{0}}\frac{\left(CT\right)^{M+n+1}}{\left(M+n+1\right)!^{\frac{1}{2}}}\notag
\end{align}
Through Eqs. (\ref{uno})-(\ref{quatro}) we have 
\begin{align}
 \lim_{t\rightarrow \infty}\lim_{N\rightarrow \infty}\mathbb{E}\left[\langle \psi^{1}_{M(t),M_{0}(t),\kappa(t)}(t)|\psi^{1}_{M(t),M_{0}(t),\kappa(t)}(t)\rangle \right]=0
\end{align}
\end{proof}
\appendix
\section{Integrals}
In this section we will prove and state some useful bounds.
The following integral inequalities will be used:
\begin{align}
  \left|\int_{0}^{1} d\omega \left(\frac{-1}{\omega-\alpha -i\eta}\right)^{k}\right| & \leq  \left|\left(\frac{1}{1-\alpha -i\eta}\right)^{k-1}\right|+\left|\left(\frac{1}{-\alpha -i\eta}\right)^{k-1}\right| \label{ine:one}
\end{align}
\begin{align}
 \int_{0}^{1} d\omega \left|\frac{-1}{\omega-\alpha -i\eta}\right|^{k} 
\left|\frac{-1}{\omega-\beta +i\eta}\right|^{p} & \leq  
 \frac{1}{\eta^{k+p-2}}\int_{0}^{1} d\omega \left|\frac{-1}{\omega-\alpha -i\eta}\right| \left|\frac{-1}{\omega-\beta +i\eta}\right| \notag \\
& \leq  
\frac{1}{\eta^{k+p-2}}\left(\int_{0}^{1} d\omega \left|\frac{-1}{\omega-\alpha -i\eta}\right|^{2}\right)^{\frac{1}{2}} \left( \left|\frac{-1}{\omega-\beta +i\eta}\right|^{2}\right)^{\frac{1}{2}}\notag \\
 & \leq  \frac{1}{\eta^{k+p-1}}\label{ine:two}
\end{align}
\begin{align}
 \int_{-C}^{C}d \omega \left|\frac{1}{\omega-\alpha-i\eta}\right|^{\delta}\leq C_{\delta}\frac{1}{\eta^{\delta-1}}\label{eq:oneminusa}
\end{align}
With $\delta>1$.
We set $x_{0}=\frac{x-y}{2}$ and $z=x_{0}-i\eta$.
\begin{align}
 \int_{-C}^{C} d\alpha \frac{1}{\left|x-\alpha-i\eta\right|\left|y-\alpha-i\eta\right|}&=
 \int_{-C+\frac{x+y}{2}}^{C+\frac{x+y}{2}} d\alpha \frac{1}{\left|x_{0}-\alpha-i\eta\right|\left|-x_{0}-\alpha-i\eta\right|} \notag\\
&=
 \int_{-C+\frac{x+y}{2}=a}^{C+\frac{x+y}{2}=b} d\alpha \frac{1}{\left|z-\alpha\right|\left|\bar{z}+\alpha\right|} \notag\\
&\leq 
 \int_{a}^{0} d\alpha \frac{1}{\left|z\right|\left|\bar{z}+\alpha\right|} +
 \int_{0}^{b} d\alpha \frac{1}{\left|\bar{z}\right|\left|z-\alpha\right|} 
\notag\\
&\leq 
C_{1} \frac{1}{\left|x-y-i\eta\right|}\left|\log \eta\right|
\label{eq:ABlog}
\end{align}
In the same manner we can bound
\begin{align}
\int_{-C}^{C} d\alpha  \frac{1}{\left|x-\alpha-i\eta\right|^{\delta}\left|y-\alpha-i\eta\right|^{\delta}}&\leq C_{1}\frac{1}{\left|x-y-i\eta\right|^{\delta}}\label{eq:ABdelta}
\end{align}
where $\delta<1$.
We can apply this to the following integrals
\begin{align}
&\int_{-C}^{C} d\beta d \alpha\int_{0}^{1} d \omega_{1} d \omega_{2} 
\left|\frac{1}{\omega_{1}-\alpha -i\eta}\right| 
\left|\frac{1}{\omega_{1}-\beta +i\eta}\right|
\left|\frac{1}{\omega_{2}-\alpha -i\eta}\right| 
\left|\frac{1}{\omega_{2}-\beta +i\eta}\right|  \left|\frac{1}{x-\alpha -i\eta}\right|^{k}
\notag \\
\leq &\int_{-C}^{C}  d \alpha\int_{0}^{1} d \omega_{1} d \omega_{2} 
\left|\frac{1}{\omega_{1}-\alpha -i\eta}\right| 
\left|\frac{1}{\omega_{1}-\omega_{2} -i\eta}\right|
\left|\frac{1}{\omega_{2}-\alpha -i\eta}\right| 
 \left|\frac{1}{x-\alpha -i\eta}\right|^{k}C_{1}\left|\log \eta\right|
\notag \\
\leq &\int_{-C}^{C}  d \alpha\int_{0}^{1}  d \omega_{2} 
\left|\frac{1}{\omega_{2}-\alpha -i\eta}\right| 
\left|\frac{1}{\omega_{2}-\alpha -i\eta}\right| 
\left|\frac{1}{x-\alpha -i\eta}\right|^{k}C_{1}\left|\log \eta\right|^2
\notag \\
\leq &\int_{-C}^{C}  d \alpha
\left|\frac{1}{x-\alpha -i\eta}\right|^{k}\frac{C_{1}\left|\log \eta\right|^2}{\eta}
\notag \\
\leq &  
\frac{C_{1}\left|\log \eta\right|^2}{\eta^{k}}
\label{eq:4kintegrals}
\end{align}
We now analyze
\begin{align}
\left|\Theta(\alpha,\eta)-\Theta(\omega_{\bar{n}})\right|=&\left|\int_{0}^{1}d\omega \frac{1}{\omega-\alpha-i\eta} -\lim_{\eta'\rightarrow 0}\int_{0}^{1}d\omega \frac{1}{\omega-\omega_{\bar{n}}-i\eta'}\right| \notag \\
\leq & \lim_{\eta'\rightarrow 0}
\left(\left|\log \left(1-\alpha-i\eta\right) -\log \left(1-\omega_{\bar{n}}-i\eta'\right)  \right|+
\left|\log \left(-\alpha-i\eta\right) -\log \left(-\omega_{\bar{n}}-i\eta'\right)  \right|\right) \notag \\
\leq & \lim_{\eta'\rightarrow 0}
\left|\omega_{\bar{n}}-\alpha-i(\eta-\eta')\right| \left(\frac{1}{\left|1-\alpha-i\eta\right|}+\frac{1}{\left|1-\omega-i\eta'\right|}+
\frac{1}{\left|-\alpha-i\eta\right|}+\frac{1}{\left|-\omega-i\eta'\right|}\right) \notag \\
\leq &
\left|\omega_{\bar{n}}-\alpha-i\eta\right| \left(\frac{1}{\left|1-\alpha-i\eta\right|}+\frac{1}{\left|1-\omega\right|}+
\frac{1}{\left|\alpha+i\eta\right|}+\frac{1}{\left|\omega\right|}\right) \label{eq:thetabound} 
\end{align}
\bibliographystyle{abbrv}
\bibliography{paper}   
%
%
\end{document}